\title{Minimum Rectilinear Polygons\\
        for Given Angle Sequences
  \thanks{A preliminary version of this paper appeared in the
  	Proceedings of the 18th Japan Conference on Discrete and
        Computational Geometry and Graphs (JCDCGG
        2015)~\cite{efkssw15}.  Note that this preliminary version had
        inaccuracies in the NP-hardness proof.  Another version of
        this article is contained in the dissertation of K.~Fleszar~\cite{thesisFleszar}.}}
\newcommand{\email}[1]{\texttt{#1}}
\newcommand{\affil}[1]{\emph{#1}}
\author{%
  William S.\ Evans%
  \thanks{\affil{Department of Computer Science, University of British Columbia, Canada}, Supported by NSERC Discovery grant.}
  \thanks{\email{will@cs.ubc.ca}}%
  \and
  Krzysztof Fleszar%
  \thanks{\affil{Institute of Informatics, University of Warsaw, Poland},
  	Supported by CONICYT Grant PII 20150140 and by ERC consolidator grant TUgbOAT no.\ 772346.
  		  \email{kfleszar@mimuw.edu.pl}, \href{https://orcid.org/0000-0002-1129-3289}{\tt orcid.org/0000-0002-1129-3289}.}%
  \and
  Philipp Kindermann%
  \thanks{\affil{Lehrstuhl f\"ur Informatik~I, Universit\"at W\"urzburg, Germany},
  \href{www1.informatik.uni-wuerzburg.de/en/staff}{www1.informatik.uni-wuerzburg.de/en/staff},
  \email{firstname.lastname@uni-wuerzburg.de}}
  \footnote{\href{https://orcid.org/0000-0001-5764-7719}{\tt
		orcid.org/0000-0001-5764-7719}.}%
\and
  Noushin~Saeedi%
\footnotemark[2]
\thanks{\email{noushins@cs.ubc.ca}}%
  \and
  Chan-Su Shin%
  \thanks{\affil{Division of Computer and Electronic Systems Engineering, Hankuk Univ.\ of Foreign Studies, South Korea},\newline \email{cssin@hufs.ac.kr}, \href{https://orcid.org/0000-0003-3073-6863}{\tt orcid.org/0000-0003-3073-6863}}%
  \and
  Alexander Wolff
  \footnotemark[5]
  \footnote{\href{https://orcid.org/0000-0001-5872-718X}{\tt
  		orcid.org/0000-0001-5872-718X}.}
}
\theoremstyle{plain}
\newtheorem{theorem}{Theorem}
\newtheorem{observation}{Observation}
\newtheorem{lemma}{Lemma}
\newtheorem{definition}{Definition}
\newtheorem{corollary}{Corollary}
\newtheorem{proposition}{Proposition}
\newlist{listArabic}{enumerate}{1}
\setlist*[listArabic,1]{label=(\arabic*),itemsep=0ex}
\newlist{listRoman}{enumerate}{1}
\setlist*[listRoman,1]{label=(\roman*),itemsep=0ex}
\newlist{inlinelistArabic}{enumerate*}{1}
\setlist*[inlinelistArabic,1]{label=(\arabic*),itemsep=0ex}
\newlist{inlinelistAlph}{enumerate*}{1}
\setlist*[inlinelistAlph,1]{label=(\alph*),itemsep=0ex}
\newlist{inlinelistRoman}{enumerate*}{1}
\setlist*[inlinelistRoman,1]{label=(\roman*),itemsep=0ex}
\newlist{listAlph}{enumerate}{1}
\setlist*[listAlph,1]{label=(\alph*),itemsep=0ex}
\newcommand{\xSpaceSE}{~} %Extra space for simple equations, that is, containing only one (in)equality-sign (all others have extra space ~ already in formula)
\newcommand{\xSpaceASD}{\quad} %Extra space when defining an angle sequence
\newcommand{\thSuffix}{-th\xspace}
\newcommand{\bigOh}{\mathcal{O}}
\newcommand{\problemName}[1]{\textsc{#1}}
\newcommand{\runtimeclass}[1]{\ensuremath{\mathsf{#1}}\xspace}
\newcommand{\NP}{\runtimeclass{NP}}
\newcommand{\APX}{\runtimeclass{APX}}
\newcommand{\len}[1]{\left\|#1\right\|}
\newcommand{\floor}[1]{\left\lfloor #1 \right\rfloor}
\newcommand{\ceil}[1]{\left\lceil #1 \right\rceil}
\newcommand{\minStyle}[1]{\min\left\{#1\right\}}
\newcommand{\maxStyle}[1]{\max\left\{#1\right\}}
\newcommand{\MRPfGASlong}{\problemName{Minimum Rectilinear Polygon for Given Angle Sequence}\xspace}
\newcommand{\LS}{\texttt{L}}
\newcommand{\RS}{\texttt{R}}
\newcommand{\area}[1]{\textrm{area}(#1)}
\newcommand{\peri}[1]{\textrm{peri}(#1)}
\newcommand{\TL}{\ensuremath{\mathit{TL}}\xspace}
\newcommand{\TR}{\ensuremath{\mathit{TR}}\xspace}
\newcommand{\BR}{\ensuremath{\mathit{BR}}\xspace}
\newcommand{\BL}{\ensuremath{\mathit{BL}}\xspace}
\newcommand{\rlength}{\text{\RS-length}}
\newcommand{\lenBL}{b}
\newcommand{\lenTR}{a}
\newcommand{\segTR}{\tau}
\newcommand{\sarea}{\textrm{area}}
\newcommand{\BB}[1]{B_{#1}}%Boundind Box of objects in the part of the paper about efficient algorithms for monotone special cases of our problem
\newcommand{\Btl}{\BB{\TL}}
\newcommand{\Bbr}{\BB{\BR}}
\newcommand{\Popt}{P^*}
\newcommand{\rnum}[1]{\operatorname{r}(#1)}
\newcommand{\exTR}{\widehat{\TR}}
\newcommand{\exBL}{\widehat{\BL}}
\newcommand{\partHullTop}{P_{\top}}
\newcommand{\partHullBot}{P_{\bot}}
\newcommand{\rectR}{R} %The (WxH)-rectangle of FitBoundingBox
\newcommand{\rojo}{\rho} %The winding number in the 2nd part
\newcommand{\ez}[1]{e_{#1}^z}
\newcommand{\ex}[1]{e_{#1}^{x}}
\newcommand{\ey}[1]{e_{#1}^{y}}
\newcommand{\edgeIn}{e_{\mathrm{in}}}
\newcommand{\edgeOut}{e_{\mathrm{out}}}
\newcommand{\subSeqName}[2]{\mathrm{#1}_{#2}} %E.g.: 1=Spiral, 2=in
\newcommand{\subSeqNamePlus}[3]{\mathrm{#1}^\mathrm{#2}_{#3}} %E.g.: 1=Spiral, 2=in, 3=j
\newcommand{\inVarSpiral}[1]{\subSeqNamePlus{spiral}{in}{#1}}
\newcommand{\outVarSpiral}[1]{\subSeqNamePlus{spiral}{out}{#1}}
\newcommand{\inSpiral}{\inVarSpiral{i}}
\newcommand{\outSpiral}{\outVarSpiral{i}}
\newcommand{\snail}[1]{\subSeqName{snail}{#1}}
\newcommand{\inOuterLadder}{\subSeqNamePlus{outerLadder}{in}{i}}
\newcommand{\outOuterLadder}{\subSeqNamePlus{outerLadder}{out}{i}}
\newcommand{\varInnerDoubleLadder}[1]{\subSeqName{innerDoubleLadder}{#1}}
\newcommand{\innerDoubleLadder}{\varInnerDoubleLadder{i}}
\newcommand{\ladder}{\mathsf{defLadder}} %Current ladder (e.g., the ladder defining the order)
\newcommand{\BigInOuterLadder}{\subSeqNamePlus{outerLadder}{in}{}}
\newcommand{\BigOutOuterLadder}{\subSeqNamePlus{outerLadder}{out}{}}
\newcommand{\BigInInnerLadder}{\subSeqNamePlus{innerLadder}{in}{}}
\newcommand{\BigOutInnerLadder}{\subSeqNamePlus{innerLadder}{out}{}}
\newcommand{\BigInSpiral}{\subSeqNamePlus{spiral}{in}{}}
\newcommand{\BigOutSpiral}{\subSeqNamePlus{spiral}{out}{}}
\newcommand{\BigOutSnail}{\subSeqNamePlus{snail}{out}{}}
\newcommand{\BigInSnail}{\subSeqNamePlus{snail}{in}{}}
\newcommand{\lowerValue}{lower value\xspace} % The notion we use to refer to \lbEdge{..}
\newcommand{\lowerValues}{lower values\xspace}
\newcommand{\lbEdge}[1]{\mathrm{low}(#1)} %Lower bound on spiral edge #1
\newcommand{\lbSpirals}{\mathrm{lowSpirals}} %Lower bound on all spiral edges
\newcommand{\lbOuterLadders}{\mathrm{lowLadders}} %Lower bound on both outer ladders (2nd part)
\newcommand{\lowerBoundAreaBB}{\mathrm{lowBBArea}} %Lower bound on the bounding box area (2nd part)
\newcommand{\vUBcenter}{\mathrm{maxCenterCost}} %Upper bound on the area in the center
\newcommand{\bboxGeneral}[1]{\textrm{BB}(#1)} %Bounding box of object #1
\newcommand{\bboxSpiral}[1]{\textrm{BB}_{#1}} %Spiral box (Bounding box of a spiral until level #1)
\newcommand{\inbboxSpiral}[1]{\bboxSpiral{#1}^\textrm{in}} 
\newcommand{\outbboxSpiral}[1]{\bboxSpiral{#1}^\textrm{out}}
\date{}
\begin{document}
\maketitle

\begin{abstract}
A \emph{rectilinear} polygon is a polygon whose edges are
axis-aligned.  Walking counterclockwise on the boundary of such a
polygon yields a sequence of left turns and right turns.  The number
of left turns always equals the number of right turns plus~$4$.  It is
known that any such sequence can be realized by a rectilinear
polygon.  

In this paper, we consider the problem of finding
realizations that minimize the perimeter or the area of the polygon
or the area of the bounding box of the polygon.  
We show that all three problems are \NP-hard in general.
This answers an open question of Patrignani [CGTA 2001],
who showed that it is \NP-hard to minimize the area of the bounding
box of an orthogonal drawing of a given planar graph.
We also show that realizing polylines with minimum bounding box area is \NP-hard.
Then we consider the special
cases of $x$-monotone and $xy$-monotone rectilinear polygons.  For
these, we can optimize the three objectives efficiently.
\end{abstract}

	\section{Introduction}
In this paper, we consider the problem of computing, for a given
\emph{rectilinear angle sequence}, a ``small'' rectilinear polygon that 
realizes the sequence.  A rectilinear angle sequence~$S$ is a
sequence of left ($+90^\circ$) turns, denoted by~$\LS$, and right ($-90^\circ$) turns, denoted by~$\RS$.
We write~${S=(s_1,\dots,s_n)\in\{\LS,\RS\}^n}$, where~$n$ is the
\emph{length} of~$S$. 
As we consider only rectilinear angle
sequences, we usually drop the term ``rectilinear.''  
A polygon~$P$
\emph{realizes} an angle sequence~$S$ if there is a counterclockwise (\emph{ccw})
walk along the boundary of~$P$ such that the turns at the vertices
of~$P$, encountered during the walk, form the sequence~$S$.  The turn
at a vertex~$v$ of~$P$ is a left or right turn if the interior angle
at~$v$ is~${90^\circ}$ ($v$ is convex) or, respectively,~${270^\circ}$
($v$ is reflex).
We call the problem \MRPfGASlong.

In order to measure the size of a polygon, we only consider polygons
that lie on the integer grid.  In this context, the \emph{area} of a polygon~$P$
corresponds to the number of grid cells that lie in the interior
of~$P$.  The \emph{bounding box} of~$P$ is the smallest axis-parallel
enclosing rectangle of~$P$.  The \emph{perimeter} of~$P$ is the sum of
the lengths of the edges of~$P$.  The task is, for a given angle
sequence~$S$, to find a simple\footnote{We use the following strong notion of simplicity: A polyline is \emph{simple} if it visits every grid point at most once. Thus, neither crossings nor revisits of a same point are allowed. Similarly, a polygon is simple if the (closed) polyline realizing its boundary is simple.} polygon that realizes~$S$ and minimizes
\begin{listAlph}
	\item its bounding box, 
	\item its area, or 
	\item its perimeter.
\end{listAlph}
Thereby, \emph{minimizing the bounding box} is short for \emph{minimizing the area of the bounding box}. Figure~\ref{fig:example} shows that, in general, the three criteria
cannot be minimized simultaneously.
\begin{figure}
	\centering \subcaptionbox{\label{fig:example-A}Area~$11$, perimeter~$20$.}{~~~\includegraphics[page=1]{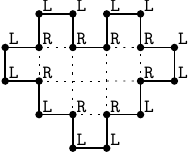}~~~} \hfil \subcaptionbox{\label{fig:example-B}Area~$10$, perimeter~$22$.}{~~~\includegraphics[page=2]{example2}~~~}
	\caption{Two polygons realizing the same angle sequence. The
		bounding box of both polygons has area~$20$,
                but~(\subref{fig:example-A}) shows a polygon of
                minimum perimeter and~(\subref{fig:example-B}) one of
                minimum area.}
	\label{fig:example}
\end{figure}

Obviously, the angle sequence of a polygon is unique (up to rotation),
but the number of polygons that realize a given angle sequence is
unbounded.  The formula for the angle sum of a polygon implies that,
in any angle sequence,~${n=2r+4}$, where~$n$ is the length of the sequence and~$r$ is the number of right
turns. In other words, the number of right turns is exactly four less
than the number of left turns. 

\paragraph{Related Work.}
Bae et al.~\cite{bos-abrpr-cgta19} considered, for a given angle sequence~$S$,
the polygon~${P(S)}$ that realizes~$S$ and minimizes its area. They studied the 
following question: Given a number~$n$, find an angle sequence~$S$ of
length~$n$ such that the area of~${P(S)}$ is minimized, 
or maximized. 
Let~${\delta(n)}$ denote the minimum area and let~${\Delta(n)}$ denote the maximum area for~$n$.
They showed 
\begin{listRoman}
	\item ${\delta(n)=n/2-1}$ if~${n\equiv 4 \bmod 8}$,~${\delta(n)=n/2}$ otherwise, and 
	\item ${\Delta(n)=(n-2)(n+4)/8}$ for any~$n$ with~${n\geq 4}$. 
\end{listRoman}
The result for~${\Delta(n)}$ tells us that any angle sequence~$S$ of length~$n$
can be realized by a polygon with area at most~${(n-2)(n+4)/8}$. 

Several authors have explored the problem of realizing a turn sequence.
Culberson and Rawlins~\cite{cr-socg85} and Hartley~\cite{h-dpgas-IPL89}
described algorithms that, given a sequence of exterior angles summing
up to~${2\pi}$, construct a simple polygon realizing that angle sequence.
Culberson and Rawlins' algorithm, when constrained to~${\pm 90^\circ}$ angles, produces polygons with no colinear
edges, implying that any~$n$-vertex polygon can be drawn with area
approximately~${(n/2-1)^2}$.  However, as Bae et
al.~\cite{bos-abrpr-cgta19} showed, the bound is not tight.
In his PhD thesis, Sack~\cite{Sack84} introduced label sequences
(which are equivalent to turn sequences) and, among others, developed
a grammar for label sequences that can be realized as simple
rectilinear polygons.
Vijayan and Wigderson~\cite{VijayanWigderson85} considered the problem
of drawing \emph{rectilinear graphs}, of which
rectilinear polygons are a special case, using an edge labeling that
is equivalent to a turn sequence in the case of paths and cycles.

In graph drawing, the standard approach to drawing a graph of maximum
degree~$4$ orthogonally (that is, with rectilinear edges) is the
topology--shape--metrics approach of Tamassia~\cite{t-oeggm-SICOMP87}:
\begin{listArabic}
	\item Compute a planar(ized) embedding, that is, a circular order of edges around each vertex that admits a crossing-free drawing; 
	\item\label{tamassia:angle_seq} compute an \emph{orthogonal representation}, that is, an angle
	sequence for each edge and an angle for each vertex;
	\item\label{tamassia:drawing}  \emph{compact} the graph, that is, draw it inside a bounding box
	of minimum area.
\end{listArabic}
Step~\ref{tamassia:drawing} is
  NP-hard\footnote{Patrignani~\cite{WADS99Titto}
      claimed that Step~\ref{tamassia:drawing} is APX-hard for planar
      graphs.  He has, however, withdrawn his
      claim~\cite{PrivComTitto}, so it is still open whether this
      step admits a PTAS.} for planar graphs as shown by
  Patrignani~\cite{p-ococ-CG01}.
For non-planar graphs, it is even inapproximable within a polynomial factor unless~${\runtimeclass{P}=\NP}$ as shown by Bannister et al.~\cite{bannister12}.
Note that an orthogonal representation
computed in step~\ref{tamassia:angle_seq} is essentially an angle sequence for each face of
the planarized embedding, so our problem corresponds to step~\ref{tamassia:drawing} in
the special case that the input graph is a simple cycle.

Another related work contains the reconstruction of a simple
(non-rectilinear) 
polygon from partial geometric information.  Disser et
al.~\cite{dmw-cgta11} constructed a simple polygon in~${\bigOh(n^3\log n)}$
time from an ordered sequence of angles measured at the vertices
visible from each vertex.  Chen and Wang~\cite{cw-cgta12} showed how
to solve the problem in~${\bigOh(n^2)}$ time,
which is optimal in the worst-case.  Biedl et
al.~\cite{bds-tcs11} considered polygon reconstruction from points
(instead of angles) captured by laser scanning devices.
Very recently, Asaeedi et al.~\cite{asaeedi18} 
encloses
a given set of points by a simple polygon 
whose vertices are a subset of the points and 
that optimizes some criteria (minimum area, maximum perimeter or maximum number of vertices). 
The vertex angles are constrained to lie below a threshold.

\paragraph{Our Contribution.}
First, we show that finding a minimum polygon that realizes a given
angle sequence is \NP-hard for any of the three measures: bounding box
area, polygon area, and polygon perimeter; see Section~\ref{sec:nph}.
This hardness result extends the one of Patrignani~\cite{p-ococ-CG01} and settles
an open question that he posed. 
We note that in an extended abstract~\cite{efkssw15} of this paper
there were some inaccuracies in our proof that now have been addressed.
As a corollary, we infer that realizing an angle sequence as a \emph{polyline} within a given rectangle is \NP-hard.

In this paper, we also give efficient algorithms for special types of angle sequences,
namely~$xy$- and~$x$-\emph{monotone sequences}, 
which are realized by~$xy$-monotone and~$x$-monotone
polygons, respectively.  
For example, Figure~\ref{fig:example} depicts an~$x$-monotone polygon realizing the~$x$-monotone
sequence~{\LS\LS\RS\RS\LS\LS\RS\LS\LS\RS\LS\RS\LS\LS\RS\LS\RS\LS\LS\RS}. 
Our algorithms for these angle sequences
minimize the bounding box and the area (Section~\ref{sec:area-algo}) and the perimeter
(Section~\ref{sec:peri-algo}).
For an overview of our results, see
Table~\ref{tbl:summary}. 
Throughout this paper, a \emph{segment} is always an axis-aligned line segment.

\begin{table}
	\centering
	\caption{Summary of our results.}
	\label{tbl:summary}
	
	\begin{tabular}{l@{\qquad}c@{\qquad}c@{\qquad}c}
		\toprule
		Type of sequences & Minimum area & Minimum bounding
                box & Minimum perimeter \\
		\midrule
		general & \NP-hard  & \NP-hard  & \NP-hard \\
                $x$-monotone & $\bigOh(n^4)$ & $\bigOh(n^3)$ &
                $\bigOh(n^2)$ \\ $xy$-monotone & $\bigOh(n)$ &
                $\bigOh(n)$ & $\bigOh(n)$ \\
		\bottomrule
	\end{tabular}
\end{table}

\section{NP-Hardness of the General Case}\label{sec:nph}
In contrast to the special cases that we efficiently solve in later
sections, the general case of our problem turns out to be \NP-hard.
In two steps, we show \NP-hardness for all three objectives:
minimizing the perimeter of the polygon, the area occupied by the
polygon, and the area of the bounding box.  
First, in Section~\ref{sec:np-fitboundingbox}, we consider the
base problem defined below from whose \NP-hardness we then derive the
three desired results in Section~\ref{sec:np-fitboundingbox}.
As a warm-up, we show via a reduction from our base
  problem that realizing angle sequences by polylines that minimize
  the area of their bounding box 
	is \NP-hard; see Section~\ref{sec:np-polylines}.

The setting of the base problem is a little different from the general case. 
Given an angle sequence~$S$, we do not look for a polygon that realizes it while minimizing one of the three objectives, but for a polyline that lies within some given rectangle.
We say that a (simple rectilinear) polyline~$P$ \emph{realizes}~$S$ if we can walk along~$P$ from one of its endpoints to the other one and
observe exactly the same angle sequence as~$S$. Note that the endpoints of~$P$ do not have angles, hence, a polyline that realizes~$S$ has~${|S|+2}$ vertices (including its endpoints) and~${|S|+1}$ edges. 
Furthermore, if all edges of~$P$ have unit length, then~${\peri{P}=|S|+1}$. 
For polylines in this section, we define~${|P|}$ as the number of inner vertices of~$P$, that is,~${|P|=|S|}$. 
Throughout the section, we will interchangeably use the names of angle sequences to refer to fixed polylines realizing them. For example,~$\peri{S}$ would denote the perimeter of a polyline realizing~$S$ that we fixed before. 

\medskip\noindent
\problemName{FitBoundingBox}: %
An instance~${\langle S,W,H \rangle}$ of this problem consists of an angle sequence~$S$ and positive even integers~$W$ and~$H$.
A \emph{feasible drawing} of~$S$ (with respect to~$W$ and~$H$) is a simple rectilinear polyline~$P$ realizing~$S$ within an axis-parallel rectangle of
width~${W+10}$ and height~${H+10}$ 
such that the first and last edge 
of~$P$ are horizontal and such that~$P$ can be extended to a simple polygon (not necessarily within the rectangle) by connecting its endpoints with a simple rectilinear polyline. 
An instance is called a \emph{no-instance} if there is no feasible drawing of~$S$.
An instance is called a \emph{yes-instance} if there is a feasible drawing of~$S$ within 
an (even smaller) axis-parallel rectangle~$\rectR$ of width~$W$ and height~$H$ such that the first
vertex of~$P$ lies on the upper-left corner of~$\rectR$ and the last vertex
of~$P$ lies on the lower-right corner of~$\rectR$.
An instance is \emph{valid} if it is a yes- or a no-instance (note that
not every instance is valid).
The problem is to decide whether a given valid instance of~\problemName{FitBoundingBox} 
is a yes- or a no-instance.

The \enquote{gap} between yes- and no-instances will help us to differentiate among them in our hardness proof: To classify a valid instance as a yes-instance, it suffices to show that it admits a feasible drawing, and, similarly, to classify it as a no-instance, it suffices to show that it does not admit a feasible drawing in the smaller~${(W\times H)}$-rectangle.

\subsection{NP-Hardness of Drawing a Polyline with Given Angle Sequence}
\label{sec:np-polylines}

In this section, we show, by reducing from \problemName{FitBoundingBox},
that it is \NP-hard to draw a polyline that realizes a given angle sequences within a given rectangle. 
As a corollary, we show that it is also \NP-hard to draw a realizing polyline that minimizes the area of its bounding box.
The reduction will be a warm-up for the upcoming sections, where the
technical details are more involved.

Recall that our task in \problemName{FitBoundingBox} is to realize a polyline within a given rectangle.
The polyline has to obey certain constraints; for instance, it needs to be drawn such that one can complete it to a simple polygon, which is not possible in a drawing where the endpoints are \enquote{blocked}. 
Thus, a natural question is whether the problem is also \NP-hard without this constraint.
Formally, is it \NP-hard to decide whether a given angle sequence~$S$ can be realized by a polyline within an axis-parallel rectangle of given width~$W$ and height~$H$?
The question has been answered in the affirmative by Noushin
Saeedi---one of the authors of this article---in her upcoming
PhD-thesis where she also shows that finding a polyline of minimum
perimeter is \NP-hard.
In the following, we present an alternative proof by a reduction from \problemName{FitBoundingBox} whose hardness will be proven in the next section.

\begin{theorem}
  \label{thm:polylineNPhard}
  Deciding whether 
  there is
  a polyline
  realizing a given angle sequence
  that 
  can be drawn within a given
  axis-parallel rectangle
  is
  \NP-hard.
 The problem remains \NP-hard 
  even if constrained to angle sequences for which
  the bounding box of any realizing polyline contains the given rectangle (up to rotation by 90 degrees). 
\end{theorem}
\begin{proof}
	Let~${\langle S,W,H \rangle}$ be a valid instance of \problemName{FitBoundingBox} and recall that~$W$ and~$H$ are even integers. 
	Let~${W'=W+2}$ and~${H'=H+4}$.
Consider the angle sequence
\[(\RS\RS\LS\LS)^{{W}/{2}}\] 
that we call the \emph{horizontal ladder}. 
Observe that any polyline realizing it is~$x$-monotone and its width is at least~$W$ provided that the first (and last) segment is drawn vertically. 
The \emph{double ladder} is defined as
\[(\RS\RS\LS\LS)^{{W}/{2}}~\RS\RS\LS\RS~(\RS\LS\LS\RS)^{(H+2)/{2}}~.\]
Any polyline realizing the double ladder such that the first edge 
is drawn vertically consists of an~$x$-monotone part 
realizing~$(\RS\RS\LS\LS)^{{W}/{2}}~\RS\RS\LS$ and 
a~$y$-monotone part realizing~$\LS\RS~(\RS\LS\LS\RS)^{(H+2)/{2}}$; the two parts overlap in two segments. 
The width of the first part is at least~$W+2$, that is,~$W'$, and 
the height of the second part is at least~$H+4$, that is,~$H'$. 

We define a new angle sequence~$S'$ by placing~$S$ between the horizontal ladder and the double ladder: 
        \[S' \xSpaceASD=\xSpaceASD (\RS\RS\LS\LS)^{{W}/{2}}\,\LS
        \quad S \quad \RS (\RS\RS\LS\LS)^{{W}/{2}}~\RS\RS\LS\RS~(\RS\LS\LS\RS)^{{(H+2)}/{2}}~.\]
	Figure~\ref{fig:np-polyline} depicts a possible realization when~${\langle S,W,H \rangle}$ is a yes-instance.
	The double ladder guarantees that any drawing realizing~$S'$ has width at least~$W'$ and height at least~$H'$ (or vice versa). 
	As proven later, the two ladders 
	enforce the endpoints of~$S'$ to lie on the border of the bounding box if its size is exactly~$ W'\times H'$.
	\begin{figure}[t]
		\centering
		\includegraphics[page=1]{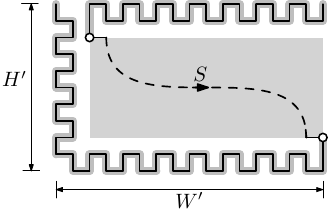}
		\caption{%
			A realization of~$S'$ within a~${( W' \times  H')}$-rectangle. 
			The gray~${(W \times H)}$-rectangle contains the subpolyline corresponding to~$S$ whose endpoints are depicted as discs. The ladders are highlighted in gray; the horizontal ladder has width~$W$, the double ladder has width~$W'=W+2$ and height~$H'=H+4$. Note that the first and last edge of~$S$ are horizontal if and only if the first and last edge of each of the two ladders are vertical.
		}
		\label{fig:np-polyline}
	\end{figure}

	We claim that~$S'$ can be realized as a polyline within an axis-parallel rectangle of width~$W'$ and height~${H'}$ if and only if~${\langle S,W,H \rangle}$ is a yes-instance. 
	The theorem follows from this claim as it is \NP-hard to decide whether a \emph{valid} instance of \problemName{FitBoundingBox} is a yes- or no-instance; see Section~\ref{sec:np-fitboundingbox}.
	
	In the first direction of the proof, we assume that~${\langle S,W,H \rangle}$ is a yes-instance. By definition, there is a realization of~$S$ within a $(W\times H)$-rectangle such that the first (and last) edge is horizontal and the first vertex lies on the upper-left corner and the last vertex lies on the lower-right corner. 
	Consider such a realization. 
	From~$S$, we realize the two ladders as depicted in Fig.~\ref{fig:np-polyline}: we set the length of the ladder edges incident to the endpoints of~$S$ (which are vertical) to~$2$ and the remaining ladder edge lengths to~$1$. In total, we obtain a polyline realizing~$S'$ whose bounding box has width~$W'$ and height~${H'}$. 
		
	In the other direction of the proof, we assume that there is a polyline realizing~$S'$ within a~${( W'\times  H')}$-rectangle~$R'$. 
	To show that~${\langle S,W,H \rangle}$ is a yes-instance, it suffices to show that 
	\begin{listRoman}
		\item\label{polyl:prop:rec}~$S$ can be realized as a polyline within a $((W+10)\times(H+10))$-rectangle such that
		\item\label{polyl:prop:hor}~it's first (and last) edge is horizontal and such that
		\item\label{polyl:prop:compl}~the polyline can be completed to a (simple) polygon.
	\end{listRoman} 
	Since~$S$ is a subsequence of~$S'$ and since~${W'=W+2}$ and~$H'=H+4$, 
	we immediately get Property~\ref{polyl:prop:rec}. 
	Suppose Property~\ref{polyl:prop:hor} does not hold. Thus, the first (and last) 
	segment of~$S$ is drawn vertically; consequently, the first segment of 
	the double ladder is horizontal as it is orthogonal to the last (and first) segment of~$S$.  
	Hence, as discussed above, the double ladder has width at least~$H'$ 
	and height at least~$W'$. 
	If~${W'\not=H'}$, then the polyline 
	does not fit into the~${( W'\times  H')}$-rectangle~$R'$; a contradiction. 
	If~${W'=H'}$, we can rotate the drawing by~${90^\circ}$ and obtain the desired property. 
	Regarding Property~\ref{polyl:prop:compl}, we show that one of the two point pairs lie on the boundary of~$R'$: both endpoints of~$S'$, or the first vertex of~$S$ and the last vertex of~$S'$. In either case, we can connect these two points in the exterior of 
	the rectangle via a polyline which gives us a polygon containing~$S$ as desired. 
	Consider the part of the polyline realizing the~$y$-monotone part of 
	the double ladder and call it~$P$. Since~$P$ is~$y$-monotone and has 
	height at least~$H'$ which is also the height of~$R'$, 
	both endpoints of~$P$ necessarily lie on opposite sides of~$R'$. 
	Hence, the last vertex of~$S'$ lies on the border 
	of~$R'$ as claimed in both cases. Since~$P$ cuts~$R'$ into two parts 
	(where one is possibly empty) 
	the remaining part of the polyline has to lie entirely in one 
	of the two parts.
	Since~$P$ uses at least two 
	grid points from every horizontal grid line within~$R'$, 
	the bounding box,~$B$, of the remaining part has width at most~$W$.
	Consider the part~$Q$ of the polyline realizing 
	the horizontal ladder. 
	Its last segment is vertical as it is orthogonal to the first segment of~$S$; thus~$Q$ has width at least~$W$.
	Since~$Q$ is also~$x$-monotone,
	both its endpoints necessarily lie on opposite (vertical) 
	sides of~$B$. Consequently, one of the two endpoints also 
	lies on the border of~$R'$. Our claim follows as the endpoints of~$Q$ are identical to the first vertex of~$S'$ and the first vertex of~$S$.
\end{proof}

	Note that Theorem~\ref{thm:polylineNPhard} immediately implies \NP-hardness of the 
 optimization problem where we want to 
 realize a given angle sequence by a polyline whose bounding box has minimum area.
Indeed, consider any polyline realizing a constrained angle sequence 
as defined in Theorem~\ref{thm:polylineNPhard}.
By assumption, the bounding box of the polyline
contains the given rectangle. Thus, its area cannot be smaller
than that of the rectangle. Furthermore, both areas are equal 
if and only if the bounding box is identical to the rectangle (up to translation).
\begin{corollary}
  It is~\NP-hard to draw a polyline realizing a given angle sequence
  such that the area of the polyline's bounding box is minimized.
\end{corollary}

\subsection{NP-Hardness of FitBoundingBox}\label{sec:np-fitboundingbox}
To show the \NP-hardness of \problemName{FitBoundingBox}, we reduce 
from \problemName{3-Partition}: 
Given a multiset~$A$ of~$3m$ integers~${a_1, \dots, a_{3m}}$ 
with~${\sum_{i=1}^{3m} a_i = Bm}$, is there a partition of~$A$ into~$m$
subsets 
such that, for each subset~${A'}$,~${\sum_{a \in A'} a = B}$?
It is known that \problemName{3-Partition} is \NP-hard even 
if~$B$ is polynomially bounded in~$m$ and, for every~${a \in A}$, we have~${B/4 < a < B/2}$, which implies that every subset 
must contain exactly three
numbers~\cite{GJ79}.

\begin{figure}[t]
	\centering
	\includegraphics[page=1]{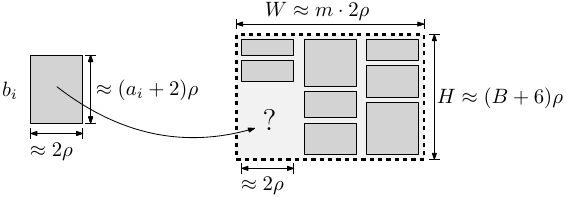}
	\caption{%
		Overview of our reduction for~${m=3}$. The boxes are shaded.}
	\label{fig:np-rectangle}
\end{figure}

Equivalently, we can ask the question whether we can pack~$3m$ boxes, where 
the~$i$\thSuffix box has width~$1$ and height~$a_i$, into a rectangle of 
width~$m$ and height~$B$. The problem remains the same if, for some~$\rho$, we 
scale the boxes and the rectangle horizontally by~${2\rho}$ and vertically 
by~$\rho$. Assuming~${B/4<a_i}$, the vertical space allows at most three boxes 
above each other. Hence, the problem remains equivalent even if we further 
add~${2\rho}$ to the height of each box and~${3\cdot 2\rho}$ to the height of the 
rectangle. Thus, the question is, can we pack~$3m$ boxes~${b_1,\dots, b_{3m}}$, 
where~$b_i$ has width~${2\rho}$ and height~${a_i\rho + 2\rho}$, into a rectangle
of width~${W\approx m \cdot 2\rho}$ and height~${H\approx (B+6)\rho}$? See 
Fig.~\ref{fig:np-rectangle} for an overview of our reduction. We 
create an angle sequence~$S$ that contains, for each~$b_i$, a subsequence called 
a \emph{snail} whose minimum bounding box is~$b_i$. By ensuring that the snails 
are \enquote{more or less} disjoint, \problemName{3-Partition} reduces to 
\problemName{FitBoundingBox} via the following question: Can we draw~$S$ inside 
a~${(W \times H)}$-rectangle?

\begin{theorem}\label{thm:fitNPhard}
	\problemName{FitBoundingBox} is \NP-hard.
\end{theorem}

We now prove Theorem~\ref{thm:fitNPhard}. 
Let~$c_W$ and~$c_H$ be sufficiently big even constants that we discuss at the end of the proof.
Given an instance for \problemName{3-Partition} as defined above, we set~${\rho=4B^3 m^7}$ and assume that~$m$ is larger than a sufficiently big constant depending on~$c_W$ and~$c_H$. 
We set
\[{W'\xSpaceSE=\xSpaceSE 2m \rho + c_W m^2}\] and \[{H'\xSpaceSE=\xSpaceSE (B+6) \rho + c_H m}\] and choose~${W=W'-10}$ and~${H=H'-10}$ 
for our~${(W \times H)}$-rectangle~$\rectR$ (note that~$W$ and~$H$ are even as desired).  
In the following, we create step by step an angle sequence~$S$ for \problemName{FitBoundingBox} consisting of~$3m$ subsequences, 
called \emph{snails}, each corresponding to an (integer) number~$a_i$ in~$A$.
We will show that~${\langle S, W, H \rangle}$ is a valid instance with
the property that it is a yes-instance of \problemName{FitBoundingBox} if and only if the \problemName{3-Partition} instance is a yes-instance. 
The number of angles in~$S$ (as well as the time to construct~$S$)
will be polynomially bounded in~$m$.

Before we define~$S$, 
let us consider the snails.
For~${a_i\in A}$, a snail has the property that if we draw it with minimum perimeter, then its bounding box has roughly width~${2\rho}$ and height~${(a_i + 2)\rho}$. Observe that~$W$ provides enough width to draw~$m$ snails next to each other along a horizontal line, but not more than that (for sufficiently big values of~$m$).
Furthermore,~$H$ provides enough height to draw three snails above each other (along a vertical line) if and only if 
the corresponding numbers in~$A$ 
add up to at most~$B$, 
provided~$m$ is sufficiently big; 
see Fig.~\ref{fig:np-rectangle}.
By forcing that, in any feasible drawing of~$S$, each snail is drawn with roughly minimum perimeter, we will get the property that all the bounding boxes of the snails are basically disjoint and drawn in one of~$m$ \enquote{columns} with three boxes per column. Hence, given the heights of the boxes, this will allow us to directly \enquote{read} a solution to the underlying \problemName{3-Partition}-instance.

\begin{figure}[t]
	\centering
	\includegraphics[page=3]{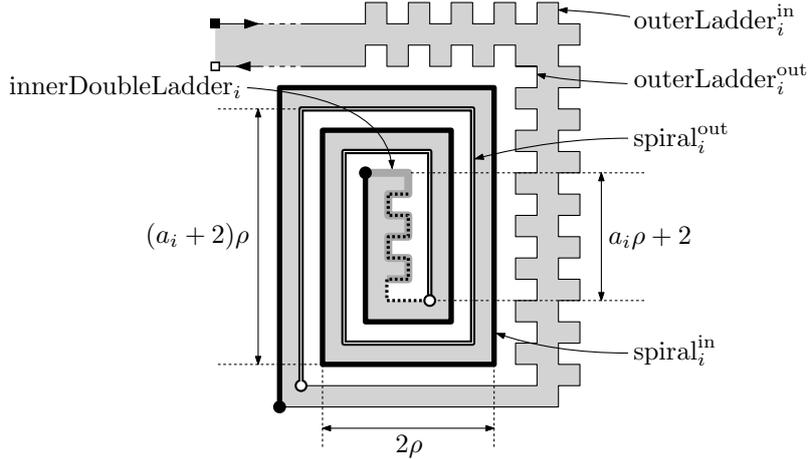}
	\caption{%
		A compact drawing of~$\snail{i}$,~${i\in\{1,\dots,3m\}}$. Here,~${\rho=4}$. The shaded area depicts the exterior of the polygon that one obtains by interconnecting the endpoints of~$S$.
		Note that~$\innerDoubleLadder$ consists of two overlapping inner ladders (bold gray and bold dashed) that we define later.}
	\label{fig:snail}
\end{figure}  

We now describe a snail in more detail.  Let~$a_i$ be the 
corresponding number in~$A$.  The \enquote{heart} of a snail
is its \emph{inner double ladder}. This is a~$y$-monotone sequence built by alternating
two left and two right turns, and it has minimum width~$1$ and minimum height~${a_i\rho+2}$.
It consists of two overlapping \emph{inner ladders} that we define later. 
The inner double ladder is
placed in the center of two \emph{spirals} that wind around it approximately~$\rho$ times (later we'll give a precise definition of winding that depends on the number of horizontal spiral edges); see Fig.~\ref{fig:snail}. 
The bounding box of the spirals will have width at least~${2\rho}$ and
height at least~${(a_i +2)\rho}$. In order to ensure that a spiral is
winding around the inner double ladder, each spiral will end with a so called
\emph{outer ladder}, a combination of a~$y$- and an~$x$-monotone
sequence of some large minimum length. A spiral winding around its
outer ladder is thus much longer than when winding around its inner
ladder. Hence, it consumes many more grid points than in the other
case. By ensuring that the number of grid points in a~${(W\times H)}$-rectangle (and even in a~${(W'\times H')}$-rectangle) is just
big enough for every spiral being drawn with its minimum perimeter
(that is, winding around its inner double ladder), there won't be enough free
grid points for any spiral to be drawn in a \enquote{bad way} (several
times around the outer ladder).
We will use the following upper bound on the number of grid points in a~${(W'\times H')}$-rectangle (that also bounds its area~${W' \cdot H'}$):

\begin{alignat}{2}
(W'+1) \cdot (H'+1)&=&~&(2m \rho + c_W m^2) \cdot ((B+6) \rho + c_H m) + W'+ H'\nonumber\\
&\le&& 2(B+6)m \rho^2 + c_{W\!H} B m^2 \rho \label{formula:gridarea}
\end{alignat}
where~${c_{W\!H}}$ is some sufficiently large constant.

Formally, we define the angle sequence~$\snail{i}$.  The superscript
\enquote{in} denotes that the respective angle sequence comes in~$S$
before the inner double ladder, whereas the superscript \enquote{out}
denotes that the sequence comes after the ladder.  (Recall that the
first and last vertex of a polyline have no angles and are omitted in
the angle sequence.) 
Note that~$\snail{i}$ has exactly two more right turns than left ones.

\begin{eqnarray*}
	\snail{i} &=&(\inOuterLadder) ~\RS~ (\inSpiral) 
	~\RS~ (\innerDoubleLadder)  ~\LS~ 
	(\outSpiral) ~\LS~ (\outOuterLadder)~,
\end{eqnarray*}%\[\snail{i} =  \inOuterLadder ~ \RS ~ \inSpiral  ~\RS ~\innerDoubleLadder ~ \LS ~ \outSpiral ~ \LS ~\outOuterLadder,\]
where 
\begin{eqnarray*}
	\inOuterLadder &=& \LS \RS (\RS\LS\LS\RS)^{\rho} \RS (\LS\RS\RS\LS)^{(a_i + 2)\rho/2 + 3} \RS~, \\ 
	\inSpiral & = & \RS^{2\rho}, \\
	\innerDoubleLadder & =& (\RS\RS\LS\LS)^{a_i\rho/2+1}~,\\
	\outSpiral & = & \LS^{2\rho-2}, \\
	\outOuterLadder &=&  \LS (\LS\RS\RS\LS)^{(a_i + 2)\rho/2+1} \LS (\RS\LS\LS\RS)^{\rho}~.   
\end{eqnarray*}

Independently of how we complete our definition of the whole angle sequence~$S$, as long as all spiral drawings are rotated such that the inner double ladders start (and end) with horizontal edges oriented to the right (as in Fig.~\ref{fig:snail}), we can
prove the following, using a number of intermediate lemmas. 

\begin{proposition}\label{prop:fitbb-to-3part}
	Given a feasible drawing of~$S$, 
	we can efficiently decode a solution to the underlying \problemName{3-Partition} instance. In other words, if~${\langle S, W, H \rangle}$ is a yes-instance, then the underlying \problemName{3-Partition} instance is a yes-instance.
\end{proposition}

Fix some feasible drawing of~$S$ inside an axis-aligned~${(W'\times H')}$-rectangle~${\rectR'}$.  (Recall that we have~${W'=W+10}$ and~${H'=H+10}$).
Let~${\kappa=1/(Bm^2)}$, and let~${\iota=(1-3\kappa)}$.  Note that for
increasing~$m$,~$\kappa$ gets arbitrarily close to~$0$ and~$\iota$
arbitrarily close to~$1$. 
Let the \emph{center} of an inner double ladder denote the center
point of its bounding box.  For~${1\le i \le 3m}$, let~$R_i$ be the
box of width~${2\iota\rho}$ and height~${2\iota\rho + a_i \rho}$
centered at the center of~$\innerDoubleLadder$.  Let~$\mathcal{R}$
denote the set of all these boxes.

Observe that, by definition of~$\iota$, a box~$R_i$ has width slightly smaller than~${2\rho}$ and height slightly smaller than~${(a_i + 2)\rho}$. 
Later we use this fact to prove that these boxes are pairwise disjoint if the drawing is feasible. 
If the boxes were slightly bigger, they possibly might overlap.

We now show a special case of Proposition~\ref{prop:fitbb-to-3part}.
Later we will see that we can always assume this ``special case'',
which will prove Proposition~\ref{prop:fitbb-to-3part}.

\begin{lemma}\label{lem:boxes-to-3part}
	If all boxes in~$\mathcal{R}$ are pairwise disjoint and lie completely inside~${\rectR'}$, then we can efficiently decode a solution to the underlying \problemName{3-Partition} instance.
\end{lemma}
\begin{proof}
	We place the origin on the upper-left corner of~${\rectR'}$ and,
	for~${1\le j\le m}$, we place a vertical line at~$x$-coordinate~${(2j-1)\rho}$.
	
	First, suppose that there is a box~${R_i\in\mathcal{R}}$ not intersected by any of these vertical lines.
	Then~$R_i$ lies between two vertical lines as it is too wide ($2\iota\rho$) to fit before the leftmost or after the rightmost vertical line (which offer only~$\rho$ and~${\rho+c_W m^2}$ horizontal space, respectively).
	Recall that the distance between two vertical lines is~${2\rho}$.
	Let~$j$ be the number of vertical lines to the left of~$R_i$.  Observe
	that the distance between the left edge of~$R_i$ and the~$j$\thSuffix
	vertical line from the left is at most~${2\rho - 2\iota\rho = 2(1-\iota)\rho = 6\kappa\rho}$.
	Hence, the distance between the left edge of~$R_i$ and the left edge
	of~${\rectR'}$ is at most~${(2j-1+6\kappa)\rho}$.
	Consider any horizontal line that intersects~$R_i$.
	The number of boxes to the left of~$R_i$ intersected by this line is
	at most~${j-1}$ since~$j$ boxes have total width
	\[ j \cdot 2\iota\rho ~=~ 2j(1-3\kappa)\rho ~=~ (2j-6j\kappa)\rho ~>~  (2j-1+6\kappa)\rho~.\]
	
	By repeating the same argument for the right side of~$R_i$, we observe that any horizontal line intersecting~$R_i$ intersects at most~${m-1}$ boxes including~$R_i$. 
	Consider the parts of such a line not covered by the boxes. Their total length inside~${\rectR'}$ is at least~${W'-(m-1)2\iota\rho \ge 2\rho}$. 
	Given that~$R_i$ has height~${2\iota\rho + a_i \rho \ge 2\rho}$, 
	taking the integral over the uncovered parts of all horizontal lines intersecting~$R_i$ gives us~${2\rho \cdot 2\rho = 4 \rho^2}$ 
	as 
	a lower bound on the total area to the left and to the right of~$R_i$ (inside~${\rectR'}$) that is not covered by the boxes. However, this is a contradiction as the total area of~${\rectR'}$ is
	\[{2(B+6)m \rho^2 + c_{W\!H} B m^2 \rho}\] by Inequality~\ref{formula:gridarea}, 
	and the total area occupied by all the boxes is at least
	\begin{alignat*}{2}
	&&~&\sum\limits_{i=1}^{3m} 2\iota\rho \cdot (2\iota\rho + a_i \rho)\\
	&=&&\sum\limits_{i=1}^{3m} 4\iota^2\rho^2 ~+ ~2\iota\rho^2 \sum\limits_{i=1}^{3m} a_i \\
	&=&& 12m(1-6\kappa + 9\kappa^2)\rho^2 ~+~ 2Bm(1-3\kappa)\rho^2\\
	&\ge&& 2(B+6)m \rho^2 - 72m\kappa\rho^2 - 6 Bm\kappa\rho^2\\
	&\ge&& 2(B+6)m \rho^2 - 78 \rho^2/m
	\end{alignat*}
	implying an upper bound
	of~${c_{W\!H} B m^2 \rho + 78 \rho^2/m \le 79\rho^2/m}$ (if~$\rho$ is
	sufficiently large) on the total area not covered by the boxes.  This,
	however, is less than the area of~${4\rho^2}$ that we lose to the left
	and to the right of~$R_i$ if~$R_i$ is not intersected by a vertical
	line.  Consequently, each box in~$\mathcal{R}$ is intersected by one
	of the~$m$ vertical lines.
	
	Next, assume that each vertical line intersects exactly three
        boxes  
	of~$\mathcal{R}$.  These three boxes correspond to three numbers
	in~$A$, so each vertical line corresponds to a subset of~$A$ of
	cardinality~3.  Since there are~$3m$ boxes and~$m$ vertical lines, and
	since each box is intersected by at least one vertical line, these
	subsets form a partition of the numbers in~$A$.
	We claim that in each such subset~${A'}$, the numbers sum up to at most~$B$. 
	This holds as otherwise~${\sum_{a \in A'} a \ge B + 1}$ 
	and, thus, the total height of the three corresponding boxes would be at least
	\begin{alignat}{1}
	\sum_{a \in A'} \left( 2\iota\rho + a\rho\right) ~\ge~ 6\iota\rho + (B+1)\rho ~>~ (B+6)\rho + c_H m ~=~  H' \label{formula:toTall}
	\end{alignat}
	(by using~${\iota > 5/6 + 1/7}$ and~${6/7 \rho > c_H m}$), which would
	be strictly greater than the height of~${\rectR'}$; a contradiction.
	Hence, given the total sum~$Bm$ of all numbers in~$A$, the numbers of each subset sum up to exactly~$B$. Our partition is therefore a feasible solution to the underlying \problemName{3-Partition} instance.
	
	Finally, suppose that a vertical line intersects four boxes. 
	Recall that for any~$a_i$ in~$A$ we have~${B/4 < a_i < B/2}$. Hence, the numbers corresponding to these four boxes sum up to a value strictly larger than~$B$. 
	By a similar calculation as in Inequality~\eqref{formula:toTall}, 
	we can show that the total height of the four boxes is strictly greater 
	than the height of~${\rectR'}$; a contradiction.
\end{proof}

We will now show that the boxes in~$\mathcal{R}$ are indeed pairwise disjoint and lie inside~${\rectR'}$. 
We begin with a simple observation about feasible drawings of~$S$.

\begin{figure}
	\centering
	\includegraphics[page=2]{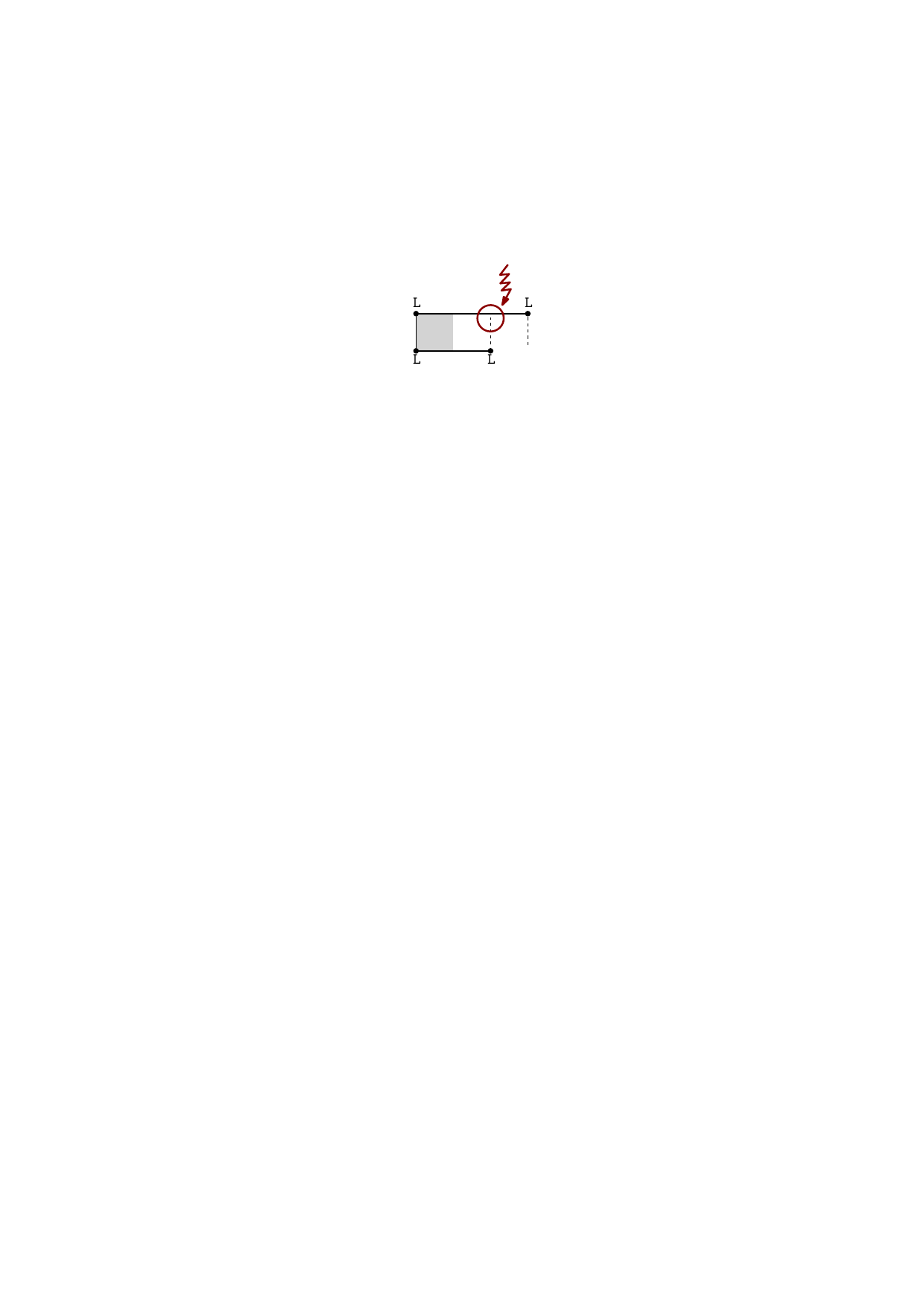}
	\caption{Two parallel edges with the same orientation must
		have distance at least~$2$. The interior of the polygon is
		shaded.}
	\label{fig:forbidden}
\end{figure} 

\begin{observation}\label{obs:minDistance}
	In a feasible drawing of~$S$, if two edges are oriented in the same
	way and their projections on a line parallel to both edges overlap,
	then their distance is at least~$2$.
\end{observation}
\begin{proof}
	By assumption, the endpoints of~$S$ can be appropriately connected
	to obtain a drawing of a simple polygon.  The orientation of an edge
	determines on which side the interior of the polygon lies.  Now
	consider two edges that are oriented in the same way and whose
	projections on a line parallel to both edges overlap; see
	Fig.~\ref{fig:forbidden}.  If the distance of the two edges was~$1$
	then, for one of the two edges, the interior of the polygon would
	lie on both of its sides; a contradiction.
\end{proof}

To facilitate the arguments in the following proofs, we introduce several notions. 
An inner double ladder consists of two overlapping \emph{inner ladders} that we obtain by either removing the first two or last two vertices from it; see Fig.\ref{fig:snail}. 
Thus, each inner ladder is incident to one spiral and its minimum height is~${a_i \rho + 1}$ where~$a_i$ is the corresponding number in~$A$.
In the context of a fixed spiral, the inner ladder refers always to the inner ladder incident to the spiral.
Furthermore, we use the following notation concerning the edges of the (fixed) spiral. 
Let~${\rho'_x}$ and~${\rho'_y}$ denote the number of its horizontal and vertical edges, respectively ($\rho'_x=\rho'_y-1$ and~${\rho'_y\in\{\rho,\rho+1\}}$). 
Let~${z\in\{x,y\}}$ and consider all edges of the spiral parallel to the~$z$-axis.
We define two orders on 
the edges along the spiral. 
In the \emph{inner order}, the first edge is incident to the inner ladder, in the \emph{outer order}, the first edge is incident to the outer ladder.
For a given order, let~${\ez{1}, \dots, \ez{\rho'_z}}$ denote the~$z$-axis-parallel spiral edges in this order and let~$\ladder$ 
denote the ladder defining the order. 
For an edge~$\ez{i}$, we call~$i$ its \emph{level} with respect to the order,
and denote its length by~${\len{\ez{i}}}$. 

Throughout the proof, we fix a spiral 
and a~${z\in\{x,y\}}$.
Note that all claims shown hold for any spiral and coordinate axis.

\begin{figure}[t]
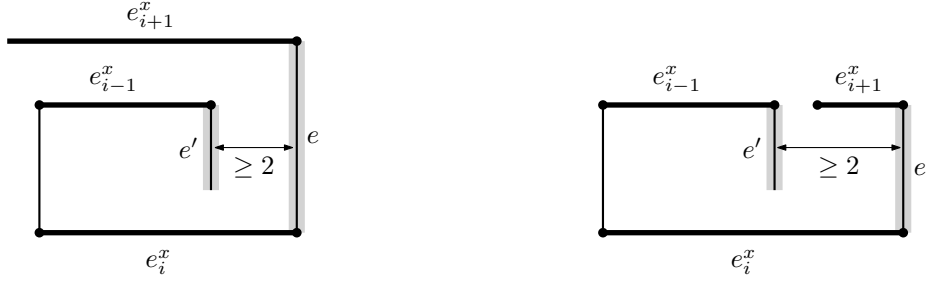

	\centering 
	\begin{subfigure}[b]{.47\textwidth}
		\centering
		\includegraphics[page=3]{gridcell}
		\caption{$\ex{i-1}$ is closer to~$\ex{i}$ than~${\ex{i+1}}$ to~$\ex{i}$.}
		\label{fig:minlen-order-spiral}
	\end{subfigure}
	\hfil
	\begin{subfigure}[b]{.47\textwidth}
		\centering
		\includegraphics[page=4]{gridcell}
		\caption{$\ex{i-1}$ and~${\ex{i+1}}$ have the same distance to~$\ex{i}$.}
		\label{fig:minlen-order-two-spirals}
	\end{subfigure}
	\caption{At least one of both,~${\ex{i-1}}$ and~${\ex{i+1}}$, is shorter than~$\ex{i}$. The edges~$e$ and~${e'}$ (highlighted) are parallel and have the same orientation. Therefore, their distance is at least~$2$.}
	\label{fig:minlen}
\end{figure}

\begin{lemma}\label{lem:greaterEqualTwo}
	Let~${1< i < \rho'_z}$. We have~${\len{\ez{i}}\ge\len{\ez{i-1}}+2}$  or~${\len{\ez{i}}\ge\len{\ez{i+1}}+2}$. 
\end{lemma}
\begin{proof}
	Assume that~${\ez{i-1}}$ has the smallest distance to~$\ez{i}$ among~${\{\ez{i-1},\ez{i+1}\}}$; see Fig.~\ref{fig:minlen}.
	We have~${\len{\ez{i-1}}<\len{\ez{i}}}$, as otherwise~${\ez{i-1}}$ would intersect the edge~$e$ connecting~$\ez{i}$ with~${\ez{i+1}}$; a contradiction to the drawing being simple. 
	Furthermore, consider the edge~${e'}$ that is incident to~${\ez{i-1}}$ and not incident to~$\ez{i}$. 
	The edges~$e$ and~${e'}$ are parallel and oriented in the same way.
	Since~${\ez{i-1}}$ has the smallest distance to~$\ez{i}$ among~${\{\ez{i-1},\ez{i+1}\}}$, 
	the projection
	of~${e'}$ on the line through~$e$ is contained in~$e$.
	Thus, by Observation~\ref{obs:minDistance}, the distance between~$e$ and~${e'}$ is at least~$2$; hence,~${\len{\ez{i}}\ge\len{\ez{i-1}}+2}$.  
	By repeating the same argument for the case that~${\ez{i+1}}$ is closer to~$\ez{i}$ than~${\ez{i-1}}$, the claim follows.
\end{proof}

Note that both inequalities of Lemma~\ref{lem:greaterEqualTwo} can be fulfilled for at most one edge since, in a cascading manner, it forces all other edges to satisfy exactly one of the two inequalities. 
Consequently, one of the following three cases holds (see Fig.~\ref{fig:spiral-configs}).

\begin{corollary}\label{cor:spiralConfigs}
	One of the following three cases holds:
	\begin{enumerate}
		\item ${\len{\ez{1}}< \dots < \len{\ez{\rho'_z-1}}}$, or
		\item ${\len{\ez{2}}> \dots > \len{\ez{\rho'_z}}}$, or
		\item\label{spiralcase:three}
		there is an~$i$ with~${1< i < \rho'_z-1}$ such that
		\[\len{\ez{1}}< \dots < \len{\ez{i}} \qquad\textrm{ and }\qquad \len{\ez{i+1}}> \dots > \len{\ez{\rho'_z}} \qquad\textrm{ holds.}\]
	\end{enumerate}
\end{corollary}

If~${\len{\ex{i}}>\len{\ex{i-1}}}$, 
then we say that the spiral \emph{winds~$i$ times} around~$\ladder$. 
Although we use this definition only for horizontal edges, note that~${\len{\ex{i}}>\len{\ex{i-1}}}$ implies~${\len{\ey{i}}>\len{\ey{i-1}}}$. 

\begin{observation}\label{obs:biggerThanLadder}
	Let~$b_x$ and~$b_y$ 
	denote the width and height of the bounding box of~$\ladder$, respectively.
	Let~${1 < i \le \rho'_z}$. If~${\len{\ez{i}}\ge\len{\ez{i-1}}+2}$, 
	then~${\len{\ez{i}}\ge 2i + b_z}$. 
\end{observation}

\begin{figure}[t]
	\centering 
	\includegraphics{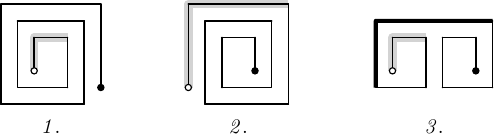}
	\caption{The three cases of Corollary~\ref{cor:spiralConfigs}~
		(gray:~$\ex{1}$,~$\ey{1}$; bold:~$\ex{i}$,~$\ey{i}$ of case~\eqref{spiralcase:three}).
	}
	\label{fig:spiral-configs}
\end{figure}

\begin{proof}
	By Corollary~\ref{cor:spiralConfigs} and Lemma~\ref{lem:greaterEqualTwo}, we 
	have~${\len{\ez{j}}\ge \len{\ez{j-1}} + 2}$ for~${1 < j \le i}$.
	Hence, we have~${\len{\ez{i}} \ge \len{\ez{1}} + 2(i-1)}$.
	We now show~${\len{\ez{1}}\ge b_z + 2}$ and the claim will follow.
	
	Let~${e_1, e_2, e_3, e_4}$ denote the first four edges of the spiral in the order defined by~$\ladder$; see Fig.~\ref{fig:ladderForcesMinLen}. 
	Note that~$e_1$ is vertical, so~${\ey{1}=e_1}$ and~${\ex{1}=e_2}$.
	Recall~${\len{e_3}={\len{\ey{2}}>\len{\ey{1}}}=\len{e_1}}$.
	Thus, by monotonicity,~$\ladder$ lies completely inside the bounding box of~$e_1$ and~$e_2$.
	Consider any horizontal edge of~$\ladder$ with smallest distance to~$e_2$.
	Observe that in the case of the outer ladder as well as in the case of the inner ladder,
	this edge lies on the border of the bounding box of~$\ladder$ and has the same orientation as~$e_2$.
	Furthermore, observe that the same holds for~$e_3$: 
	Any vertical edge of~$\ladder$ with smallest distance to~$e_3$
	lies on the border of the bounding box of~$\ladder$ and has the same orientation as~$e_3$.
	Hence, by Observation~\ref{obs:minDistance}, the bounding box of~$\ladder$ has distance at least~$2$ to~$e_2$ and to~$e_3$. 
	Now, observe that the height~$b_y$ of this bounding box and its distance to~$e_2$ sum up to exactly~${\len{e_1}}$.
	Thus,~${\len{e_1}\ge b_y +2}$ and, similarly,~${\len{e_2}\ge b_x +2}$.
\end{proof}

\begin{definition}\label{def:lowerBoundsA}
	For~${1\le i \le 3m}$, we define for every spiral edge~$e$ 
	belonging to~$\snail{i}$ its \emph{\lowerValue} as
	\begin{itemize}
		\item ${\lbEdge{e}=2j}$ if~$e$ is horizontal and 
		\item ${\lbEdge{e}=2j+a_i\rho}$ otherwise
	\end{itemize}
	where~$j$ is the level of~$e$ with respect to the inner order. 
	We denote by~$\lbSpirals$ the sum of the \lowerValues over all edges of all spirals.
\end{definition}

Now we show that the \lowerValues of the edges are proper lower bounds on their lengths.

\begin{lemma}\label{lem:lbPerimeter}
	In any feasible drawing, every spiral edge~$e$ has length at least~$\lbEdge{e}$ and the total perimeter of all spirals is at least~${\lbSpirals \ge 2(B+6)m\rho^2}$.
\end{lemma}
\begin{proof}
	Consider any spiral edge~$e$ and its spiral belonging to~$\snail{i}$. 
	For a moment, consider any order of the spiral edges and let~$b_x$ and~$b_y$ denote the width and height, respectively, of the bounding box of the ladder defining the order.
	In case of the inner order, 
	we have 
	\[b_x \ge 1 \quad\textrm{ and }\quad b_y\ge a_i \rho ~,\] 
	and in case of the outer order, we have
	\[b_x\ge 2\rho'_x  \quad\textrm{ and }\quad b_y\ge 2\rho'_y+a_i \rho ~.\] 
	To see the latter case, observe that the~$x$-monotone parts of~$\inOuterLadder$ 
	and~$\outOuterLadder$ consist of at least~${2\rho}$ horizontal edges, and 
	the~$y$-monotone parts consist of at least \[{(a_i + 2)\rho+2={2(\rho+1) + a_i \rho}}\] vertical edges.
	
	Thus, in any case, we have \[b_z \xSpaceSE\ge\xSpaceSE  \lbEdge{e} - 2j\] where~$j$ is the level of~$e$ in the respective order and~$z$ is the axis to which~$e$ is parallel.
	If, 
	for one of the two orders, we have~${\len{\ez{j}}\ge\len{\ez{j-1}}+2}$ where~${\ez{j}=e}$,
	then Observation~\ref{obs:biggerThanLadder} implies~${\len{\ez{j}}\ge 2j + b_z \ge \lbEdge{e}}$.
	Otherwise, Lemma~\ref{lem:greaterEqualTwo} implies that~$e$ is the first edge in one of the two orders and we have~${\len{\ez{1}}<\len{\ez{2}}+2}$ in that order (where~${\ez{1}=e}$). 
	Fix this order and let~${e_1,\dots,e_4}$ denote the first four edges of the spiral in the order when starting at~$e$ ($e_1=e=\ez{1}$,~${e_3=\ez{2}}$, and, by assumption,~${\len{e_1}<\len{e_3}+2}$); 
	see Fig.~\ref{fig:forcedTurn}. Let~$e_0$ be the edge before~$e_1$ (either belonging to~$\ladder$ or being adjacent to~$\ladder$).
	Observe that~$e_0$ has to stop before~$e_3$ as otherwise it would either intersect~$e_3$ (if~${\len{e_3}\ge \len{e_1}}$) or lie opposite to~$e_4$ with distance~$1$ (if~${\len{e_3} = \len{e_1} - 1}$) and thus contradict Observation~\ref{obs:minDistance}.
	Therefore, by monotonicity,~$\ladder$ lies completely in the bounding box of~$e_1$ and~$e_2$.
	As in the proof of Observation~\ref{obs:biggerThanLadder}, this containment implies~${\len{e_1}\ge b_z + 2 \ge \lbEdge{e}}$. 
	
	\begin{figure}[t]
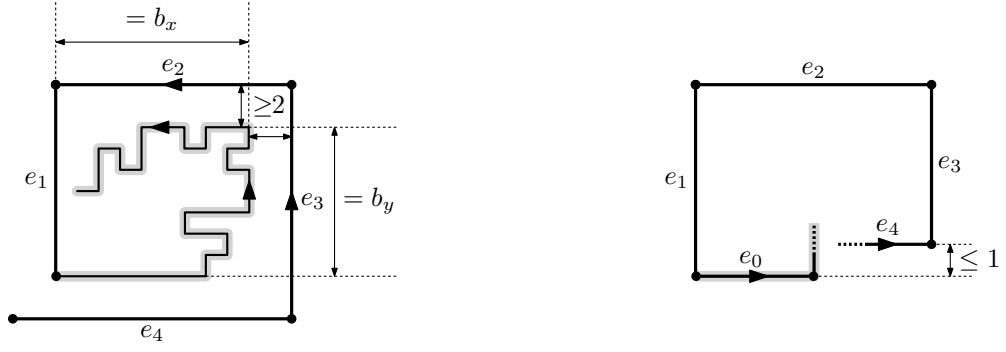

		\begin{subfigure}[b]{.6\textwidth}
			\centering
			\includegraphics[page=5]{gridcell}
			\caption{The spiral edges~$e_2$ and~$e_3$ are longer by~$2$ than the respective bounding box edges of the ladder (lengths~$b_x$ and~$b_y$).}
			\label{fig:ladderForcesMinLen}
		\end{subfigure}
		\hfill
		\begin{subfigure}[b]{.36\textwidth}
			\centering
			\includegraphics[page=6]{gridcell}
			\caption{The edge~$e_0$ has to make a turn before reaching~$e_4$.}
			\label{fig:forcedTurn}
		\end{subfigure}
		
		\caption{The spiral winds around the ladder (highlighted). By monotonicity, the ladder cannot leave the bounding box of~$e_1$ and~$e_2$.}
	\end{figure}
	
	We are ready to show the second claim. The perimeter of the spiral is at least
	\begin{eqnarray*}
		\sum_{j=1}^{\rho'_x} \lbEdge{\ex{j}} + \sum_{j=1}^{\rho'_y} \lbEdge{\ey{j}} 
		&\ge& \sum_{j=1}^{\rho-1}2j + \sum_{j=1}^\rho(2j+a_i\rho)  \\
		&=& 2 \sum_{j=1}^{\rho-1} 2j + 2\rho + a_i \rho^2 
		~=~ (2+a_i)\rho^2  ~.
	\end{eqnarray*} 
	Recall that~${\sum_{i=1}^{3m} a_i = Bm}$ holds and that, for each~${a_i \in A}$, there are two spirals (namely~$\inSpiral$ and~$\outSpiral$).
	Thus, summing up over all spirals, we obtain
	\[\lbSpirals ~\ge~ \sum_{i=1}^{3m} 2 \cdot (2 + a_i) \rho^2 ~=~ 2(B+6)m\rho^2~.\qedhere\]
\end{proof}

\begin{definition}
	For~${1\le i \le \rho'_x}$ and the inner order, we define the \emph{spiral box}~$\bboxSpiral{i}$ as the bounding box of~$\ex{i}$ and~${\ey{i+1}}$. 
	For~${i\ge 3}$, the \emph{entrance} 
	of~$\bboxSpiral{i}$ is defined as the area between~${\ex{i-1}}$ and~${\ex{i+1}}$ (that is, as the bounding box of~${\ex{i-1}}$ and its vertical projection onto~${\ex{i+1}}$); see Fig.~\ref{fig:closedBox}.
	The \emph{height} of the entrance is the distance between~${\ex{i-1}}$ and~${\ex{i+1}}$.
	We call~$\bboxSpiral{i}$ closed if and only if its entrance has height~$2$.
	If~$\bboxSpiral{i}$ is closed, we say that the spiral is \emph{closed at level~$i$}. 
\end{definition}

By Observation~\ref{obs:minDistance}, the height of an entrance cannot be smaller than~$2$. Also observe that a spiral entering a spiral box of another spiral has to do so through the entrance. We formulate this observation as follows.
\begin{observation}\label{obs:entrance}
	Consider a spiral box~$\bboxSpiral{i}$ of a spiral. If there is a polyline distinct to the spiral containing a point inside and outside~$\bboxSpiral{i}$, then it contains a horizontal line segment intersecting the entrance of~$\bboxSpiral{i}$; see Fig.~\ref{fig:closedBox}.
\end{observation}

Recall that we set~${\kappa=1/(Bm^2)}$;~${\iota=1-3\kappa}$; and~${\rho=4B^3 m^7}$.

\begin{lemma}\label{lem:closedAndWinds}
	For every spiral, there is a~$j$ with~${\iota\rho+2 \le j \le \iota\rho+\kappa\rho}$ 
	such that the spiral  is closed at level~$j$ and winds at least~${j+\kappa\rho}$ 
	times around the inner ladder.
\end{lemma}
\begin{proof}
	Consider any spiral.
	We first show the second claim: If the spiral winds fewer than \[{\iota\rho+2\kappa\rho=\rho-\kappa\rho}\] 
	times around the inner ladder, then it winds at least~${\kappa\rho}$ times around the outer ladder. 
	Recall that the width of the bounding box of the outer ladder is at least~${2\rho'_x}$.
	Thus, by Lemma~\ref{lem:greaterEqualTwo} and Observation~\ref{obs:biggerThanLadder}, 
	for  any spiral edge~$\ex{i}$ with~${1 < i < \kappa\rho}$ in the outer order, 
	we have
	\[\len{\ex{i}} ~\ge~ 2\rho'_x + 2i ~\ge~ 2(\rho'_x-i+1) + i ~=~ \lbEdge{\ex{i}} + i~.\] 
	Hence, the perimeter of the drawing is at least 
	\begin{alignat*}{2}
	\lbSpirals + \sum_{i=2}^{\kappa\rho-1} i &\ge&~& 2(B+6)m\rho^2 + (\kappa\rho-2)^2/2  \\
	&=&& 2(B+6)m\rho^2 + (\kappa^2\rho^2-4\kappa\rho+4)/2 \\ 
	&=&& 2(B+6)m\rho^2 + \kappa^2\rho^2/4 + (\kappa^2\rho^2/2-4\kappa\rho+4)/2  \\
	&\ge&& 2(B+6)m\rho^2 + (\kappa^2\rho)\rho/4 \\
	&=&&  2(B+6)m\rho^2 + B m^3 \rho ~.
	\end{alignat*}
	However, this is strictly greater than~${2(B+6)m \rho^2 + c_{W\!H} B m^2 \rho}$
	(recall that~${c_{W\!H}}$ is a constant), which again, for a sufficiently big constant value of~${c_{W\!H}}$, is greater than the total number~${(W'+1)\cdot (H'+1)}$ of grid points offered by~${\rectR'}$ (see Inequality~\ref{formula:gridarea}); a contradiction.
	
	Next, we show the first claim. Consider the inner order. 
	If the spiral were not closed at any level
	between~${\iota\rho+2}$ and~${\iota\rho+\kappa\rho}$, 
	then, for~${1 \le i \le \kappa\rho-2}$, 
	we have
	\begin{alignat*}{2}
	\len{\ey{\iota\rho + i + 2}} &\ge&~& \len{\ey{\iota\rho + i + 1}}+3 \\
	&\ge&& \len{\ey{\iota\rho + 2}}+3i\\
	&\ge&& \lbEdge{\ey{\iota\rho + 2}} + 3i\\
	&=&& \lbEdge{\ey{\iota\rho + i + 2}} + i~.
	\end{alignat*}
	
	Again, the perimeter of the drawing is larger than~${(W'+1)\cdot (H'+1)}$; a contradiction.
\end{proof}

\begin{figure}
	\centering 
	\begin{subfigure}[b]{.53\textwidth}
		\centering
		\includegraphics[page=1]{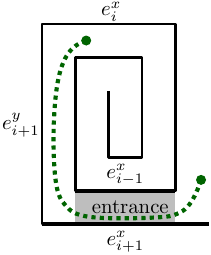}
		\caption{The dashed polyline contains a point outside and inside~$\bboxSpiral{i}$, therefore it has to go through the entrance.}
		\label{fig:closedBox}
	\end{subfigure}
	\hfil
	\begin{subfigure}[b]{.4\textwidth}
		\centering
		\includegraphics[page=2]{closedBox}
		\caption{The vertical dashed segment forces the entrance to have height at least~$3$.}
		\label{fig:openEntrance}
	\end{subfigure}  
	\caption{The shaded area depicts the entrance of the spiral box~$\bboxSpiral{i}$.}
\end{figure}

Let~$j$ be as in Lemma~\ref{lem:closedAndWinds}. We call~$\bboxSpiral{j}$ the \emph{closing box} of the respective spiral.

\begin{corollary}\label{cor:fitsIntoSpiral}
	For~${1\le i \le 3m}$, the box~$R_i$ is contained in each of the closing boxes of~$\inSpiral$ and~$\outSpiral$.
\end{corollary}
\begin{proof}
	Recall that the closing box 
	of any of the two spirals
	is closed at some level~${j \ge \iota\rho + 2}$. 
	Thus, it contains at least~${\iota\rho + 2}$ vertical and~${\iota\rho + 2}$ horizontal edges of the spiral in its interior, where at least~${\iota\rho/2 + 1}$ many of them are lying on each side (left and right, above and below) of the inner double ladder. 
	Recall that, at each of the four sides, the distance between any two neighboring parallel edges is at least~$2$ as they have the same orientation (Observation~\ref{obs:minDistance}). 
	Hence, the center of the inner double ladder lies at a distance of at least~${\iota\rho}$ to the left and to the right edge of the closing box, and at a distance of at least~${\iota\rho + a_i\rho/2}$ to the top and to the bottom edge of the closing box. Recall that~$R_i$ has width~${2\iota\rho}$ and height~${2\iota\rho + a_i \rho}$. Thus,~$R_i$ fits into the closing box when centered at the center of the inner double ladder. 
\end{proof}

\begin{lemma}\label{lem:boxesdisjoint}
	The boxes in~$\mathcal{R}$ are pairwise disjoint and lie inside~${\rectR'}$.
\end{lemma}
\begin{proof}
	The second statement follows from the fact that, by Corollary~\ref{cor:fitsIntoSpiral}, all boxes lie inside spirals and that all spirals lie inside~${\rectR'}$.
	To show the first statement, suppose for a contradiction that two boxes~$R_i$ and~$R_j$ intersect. 
	Thus, by Corollary~\ref{cor:fitsIntoSpiral}, 
	the closing boxes~$b_i$ and~$b_j$ of~$\outVarSpiral{i}$ and~$\outVarSpiral{j}$, respectively, intersect each other. 
	This intersection implies that one of the closing boxes, say~$b_i$, contains a point of the spiral corresponding to the other closing box, here~$b_j$, in its interior. 
	Consider the entrance of~$b_i$. 
	Suppose that a horizontal line segment~$s$ of~$\outVarSpiral{j}$ 
	intersects the entrance. 
	By Observation~\ref{obs:minDistance},~$s$ can be oriented only towards the entrance. 
	But then~$s$ ends with a left turn inside~$b_i$, forcing the entrance to be higher than~$2$
	(see Fig.~\ref{fig:openEntrance}). This contradicts that the entrance is closed.
	
	By Observation~\ref{obs:entrance},~$\outVarSpiral{j}$ cannot contain any point outside~$b_i$. 
	Consequently,~$\outVarSpiral{j}$ lies completely inside~$b_i$. 
	Hence, the horizontal edge~$e$ of~$\outVarSpiral{i}$ spanning~$b_i$ is longer than the longest horizontal edge of~$\outVarSpiral{j}$ whose length is at least~${2\rho'_x}$ by~Lemma~\ref{lem:lbPerimeter}. 
	Since, by Lemma~\ref{lem:closedAndWinds}, the level of~$e$ is at most~${\iota \rho + \kappa \rho \le \rho'_x-\kappa\rho}$,~$e$ is longer by at least~${2\kappa\rho}$ than its \lowerValue. 
	Also by Lemma~\ref{lem:closedAndWinds},~$\outVarSpiral{i}$ winds at least~${\kappa\rho}$ times around~$b_i$. 
	Thus, for at least~${\kappa\rho}$ edges, it holds that their length is larger by at least~${2\kappa\rho}$ than their \lowerValues.
	Thus, the perimeter of the drawing is at least 
	\[{\lbSpirals + 2(\kappa\rho)^2 ~\ge~ 2(B+6)m\rho^2 + 8 B m^3 \rho}~.\]
	However, this is strictly greater than the total number of grid points in~${\rectR'}$ (see Inequality~\ref{formula:gridarea}); a contradiction.
\end{proof}

\begin{figure}
	\centering
	\includegraphics[page=2]{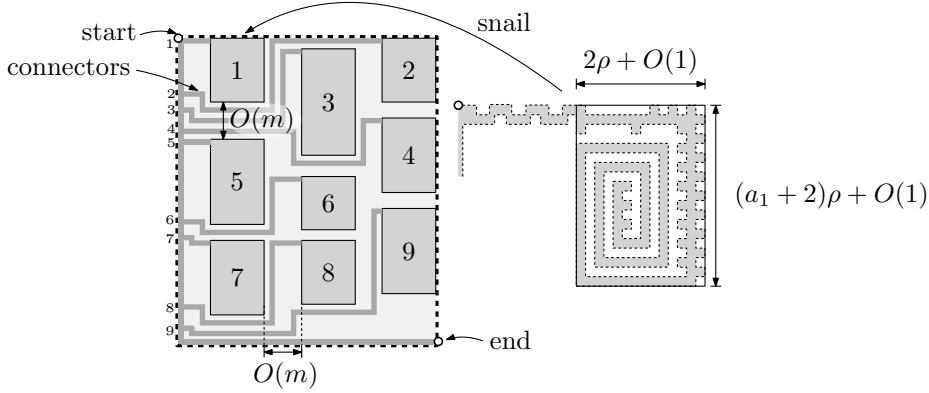}
	\caption{The polyline realizing~$S$ connects the upper-left corner
		(start) of the~${(W\times H)}$-rectangle~$\rectR$ with the
		lower-right one (end). 
		It consists of nine snails enumerated from~$1$ to~$9$.
		For readability, we shaded one part of the
		rectangle separated by the polyline in light gray, the other part
		in dark gray.  The widths of all spirals, when tightly wound
		around their inner ladders, are the same. Their heights depend on
		the corresponding numbers in the \problemName{3-Partition}
		instance. 
		Here,~${m=3}$.  All snails are packed into three
		columns, each one accommodates three snails.}
	\label{fig:hardness}
\end{figure}

Proposition~\ref{prop:fitbb-to-3part} follows immediately from
Lemmas~\ref{lem:boxes-to-3part} and~\ref{lem:boxesdisjoint}.  We will
now show the other direction of our reduction.

\begin{proposition}\label{prop:3part-to-fitbb}
	If the \problemName{3-Partition} instance is a yes-instance, 
	then there is a feasible drawing of~$S$ 
	within an axis-parallel
	rectangle~$\rectR$ of width~$W$ and height~$H$ such that,
	for the polyline~$P$ realizing~$S$,
	the first vertex of~$P$ lies on the upper-left corner of~$\rectR$ and the last vertex of~$P$ lies on the lower-right corner of~$\rectR$, that is,~${\langle S, W, H \rangle}$ is a yes-instance.
\end{proposition}
\begin{proof}
	Before we complete our definition of~$S$, let us take another look at the snails. 
	As long as we neglect on how the snails are embedded in~$S$, we can observe that every~$\snail{i}$ can be drawn inside a bounding box of width~${2\rho+\bigOh(1)}$ and height~${(a_i+2)\rho +\bigOh(1)}$ such that the first segment of the inner double ladder is horizontal and oriented to the right; see Figs.~\ref{fig:snail} and~\ref{fig:hardness}.
	The idea is now as follows. We subdivide~$\rectR$ into~$m$ columns of width~${2\rho+\Theta(m)}$ each. 
	In each column, we will draw three snails as described above, one above another. 
	The way we choose which snail to draw in which column depends on our solution to the \problemName{3-Partition} instance.
	Let~${(A_j)_{j=1}^{m}}$ be our partition of~$A$ where~${\sum_{a\in A_j} a =B}$ for every~$A_j$.
	In the~$j$\thSuffix column from the left, we draw the three snails
	corresponding to the three numbers in~$A_j$. We draw them such that their right border is aligned to the right border of the column. 
	Hence, the left part of the column of width~${\Theta(m)}$ is unused. 
	As the vertical order of the three snails, we choose the order of the corresponding numbers in the input. That is, for two snails~$\snail{i}$ and~$\snail{k}$ belonging to the same column,~$\snail{i}$ is drawn above~$\snail{k}$ if and only if~${i<k}$.
	Note that such a drawing fits into~$\rectR$: The total height of the three spirals in each column is only~${(B+6)\rho + \Theta(1)}$. 
	Hence, the total height of the unused space in 
	the columns is~${\Theta(m)}$.
	
	We will use the unused space to the left and between the snails to interconnect them. This will give us the complete angle sequence~$S$.
	We will do it by modifying our drawings of the snails by redrawing their outer ladders to connect the snails to the left edge of~$\rectR$; see the snail in Fig.~\ref{fig:hardness}. 
	More precisely, only the~$x$-monotone part of the outer ladders leaves the current bounding box of the snails. We call this part \emph{connector}.
	At the left edge  of~$\rectR$, these connector pairs will be ordered from top to bottom relative to the order of the corresponding numbers in~$A$. For any 
	connector pair, the connector oriented towards the snail will be drawn above the other one. Consecutive pairs will be connected by a vertical edge.
	The bottommost connector is connected to a vertical and then a horizontal line segment 
	allowing us to reach the lower-right corner of~$\rectR$.
	Observe that a connector consists of at least~$\rho$ up-and-down curves, hence, of enough curves to bypass all the snails encountered in the at most~${m-1}$ columns between the left edge of~$\rectR$ and the spiral it is connected to. We are ready to complete the definition of~$S$.
	\begin{eqnarray*}
		\label{eq:1}
		S       &=& \RS\LS ~(\snail{1}\LS \LS) ~\dots ~ (\snail{3m}\LS \LS)~.
	\end{eqnarray*}%
	
	Finally, observe that we can draw the connectors such that the total height of all connectors going through a column below or above the snails is~${\bigOh(m)}$. Since a connector does not need to change its~$y$-position more than once in each column (in order to bypass or to connect to a snail), 
	a total extra width of~${\bigOh(m)}$ per column is sufficient to allow the connectors to change their~$y$-positions (which happens in each column to the left of the snails). 
	Hence, we choose the constants~$c_W$ and~$c_H$ in~${W=2m \rho + c_W m^2 - 10}$ and~${H=(B+6) \rho + c_H m -10}$ as big enough even integers such that~$\rectR$ gives enough space to draw~$S$ in the way described above. 
	Also note that our drawing is feasible as it can be easily extended to a simple polygon by appropriately connecting its endpoints around~$\rectR$.
	We conclude that~${\langle S, W, H\rangle}$ is a yes-instance.
\end{proof} 

\subsection{Extension to the Optimization Versions}\label{sec:np-general}
In this section, we show for each of the three objectives (minimum perimeter, area, and bounding box) that it is \NP-hard to draw a rectilinear polygon of minimum cost that realizes a given angle sequence.  
Our proof is a reduction from \problemName{FitBoundingBox}.
Given an instance~${\langle S,W,H \rangle}$ of \problemName{FitBoundingBox}, we define an angle
sequence~$T$ (with~${|T|}$ polynomial in~${|S|}$) and, for each objective, a threshold value~$\Upsilon$ 
such that~$T$ can be drawn with cost at most~$\Upsilon$ (with respect to the objective) if and only if~$S$ is a yes-instance. 
We consider only drawings that are \emph{feasible} in the general case (without the restrictions of Section~\ref{sec:np-fitboundingbox}), that is, a simple rectilinear polygon or polyline on the grid realizing a given angle sequence.

At first glance, one might think that \problemName{FitBoundingBox} directly implies \NP-hardness for the objective of minimizing the area of the bounding box. 
However, the question of whether an angle sequence~$S$ can be drawn within a rectangle of width~$W$ and height~$H$ does not directly translate to the question of whether~$S$ can be drawn in a rectangle of area~${W \cdot H}$.
For instance, suppose that~$S$ is a no-instance that we obtained by our reduction from \problemName{3-Partition}.
Draw the snails of~$S$ as tight as possible below each other in the order of their indices and connect them on the left side. 
Observe that such a drawing fits into a rectangle of width~${2\rho + \bigOh(1)}$ and height~${(\sum_{i=1}^{3m} (a_i+2)\rho + \bigOh(1))=(B+6)m\rho + \bigOh(m)}$ (the variables are defined as in Section~\ref{sec:np-fitboundingbox}). Hence,~$S$ fits into a rectangle of area even smaller than~${W\cdot H}$.

\paragraph{Outline of the proof.}
We define~$T$ by simultaneously constructing a \enquote{reference drawing} for the case that~$S$ is a yes-instance. 
It, roughly speaking, consists of two snail subsequences with~$S$ in between, where each snail is formed by ladder and spiral subsequences similar to Section~\ref{sec:np-fitboundingbox}. The notions \emph{spirals}, \emph{snails} and \emph{ladders} throughout this section refer only to the subsequences of~$T$ excluding~$S$, unless otherwise mentioned.
After defining the thresholds,
we use the reference drawing as a certificate in the first direction of the proof that 
a cheap drawing exists if~$S$ is a yes-instance. 
In the second direction of the proof, we show, for each objective, that if a drawing does not surpass the respective threshold~$\Upsilon$, then~$S$ is a yes-instance. 
For this, we first observe that, in any drawing of~$T$, certain subsequences (for instance, spirals) have certain lower bounds on the cost of drawing them.
We use these lower bounds 
to show that a drawing respecting the threshold~$\Upsilon$ has some special structure: If it doesn't, then some part of it is very expensive and, \emph{together} with the lower bounds on the other parts, the total cost is above the threshold; a contradiction. 
Generally speaking,~$T$ consists of two long spirals. 
Step by step, we show that spiral edges are not much longer than certain lower bounds 
and that spirals wind sufficiently enough 
in the \enquote{right} direction. This again will help us to observe that the spirals interleave until the inner-most level.
Together with the upper bounds on the spiral edges, we will see that~$S$ cannot leave the center of the spirals and is closed in a box of relatively small size, which implies that~$S$ is a yes-instance.

\paragraph{Definition of the instance~$T$.}

Recall that~$W$ and~$H$ are even.
Without loss of generality, 
we assume~${\min\{W,H\} > 5}$. 
Let
\[{\vUBcenter\xSpaceSE=\xSpaceSE 2(W+7)(H+7)}\] and let 
\[{\rojo\xSpaceSE=\xSpaceSE (\vUBcenter+12)^2}~.\] 
Finally, set~
\[{w=W+2\rojo+11}
\textrm{\quad\quad and \quad\quad}
{h=H+2\rojo+11}~.\] 

We define~$T$ constructively by giving a drawing of two polylines, called \emph{snails}, whose angle sequences together with~$S$ form~$T$. 

\begin{figure}
	\begin{subfigure}[b]{.3\textwidth}
		\centering
		\includegraphics[page=1]{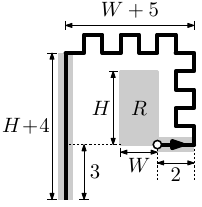}
		\caption{$\BigOutInnerLadder$ (bold) starts at the 
			lower-right corner of~$\rectR$. Its first and last edge 
			(highlighted) have lengths~$2$ and~${H+4}$, respectively. All other
			edges have unit length. Its bounding box has width~${W+5}$ and height~${H+5}$.}
		\label{fig:outSnail-inner}
	\end{subfigure}
	\hfill
	\begin{subfigure}[b]{.275\textwidth}
		\centering
		\includegraphics[page=2]{big-snail}
		\caption{$\BigOutSpiral$ (bold) starts at the endpoint 
			of~$\BigOutInnerLadder$ and winds around it with edge lengths 
			increasing in steps of~$2$. The first two edges (highlighted) have 
			lengths~${W+7}$ and~${H+7}$, respectively.}
		\label{fig:outSnail-spiral}
	\end{subfigure}
	\hfill
	\begin{subfigure}[b]{.34\textwidth}
		\centering
		\includegraphics[page=3]{big-snail}
		\caption{The last two spiral edges (dashed) have lengths~${w-4}$ and~${h-4}$, 
			respectively.~$\BigOutOuterLadder$ (bold) starts at the 
			endpoint of the spiral, and its bounding box has width~$w$ and 
			height~$h$. All but its first edge (highlighted) have unit length.}
		\label{fig:outSnail-outer}
	\end{subfigure}
	\caption{The construction of~$\BigOutSnail$.
	}
\end{figure}
We begin with the first snail that we call~$\BigOutSnail$. 
Place an axis-aligned rectangle~$\rectR$ of width~$W$ and height~$H$ in the grid. 
Starting at its lower-right corner, draw a ladder around it, as in Fig.~\ref{fig:outSnail-inner}, such that 
the first edge (horizontal) has length~$2$, the last edge (vertical) has length~${H+4}$, and all the other edges have unit length, and the bounding box of the ladder has width~${W+5}$ and height~${H+5}$. 
We call the ladder~$\BigOutInnerLadder$. 
Formally,
\[\BigOutInnerLadder \xSpaceASD=\xSpaceASD \LS (\LS\RS\RS\LS)^{\frac{H}{2}} \LS (\RS\LS\LS\RS)^{\frac{W+4}{2}} \LS~.\]
We continue the sequence by a left turn followed by a spiral, called~$\BigOutSpiral$, of~${2\rojo+1}$ left turns 
winding around the rectangle (and~$\BigOutInnerLadder$) in such a way that the 
first edge has length~${W+7}$, the second edge has length~${H+7}$, and 
the~${(i+2)}$\thSuffix edge is longer by exactly~$2$ than the~$i$\thSuffix edge; 
see Fig.~\ref{fig:outSnail-spiral}. 
Note that the spiral consists of~${\rojo+1}$ horizontal and~${\rojo+1}$ vertical edges. 
Thus, in our drawing, the last horizontal and vertical edges of~$\BigOutSpiral$ have the lengths~${W+7 + 2\rojo = w-4}$ and~${H+7+2\rojo=h-4}$, respectively.
Formally,
\[\BigOutSpiral \xSpaceASD=\xSpaceASD \LS^{2\rojo+1}~.\]
We finish the snail by a left turn and a following ladder, called~$\BigOutOuterLadder$. 
We draw the ladder around the spiral such that all edges but the first one have unit length and the bounding box of the ladder has width~$w$ and height~$h$. The length of the first edge is~${w-1}$; see Fig.~\ref{fig:outSnail-outer}.
Formally,
\[\BigOutOuterLadder \xSpaceASD=\xSpaceASD \LS (\RS\LS\LS\RS)^{\frac{h-1}{2}} \LS (\LS\RS\RS\LS)^{\frac{w-3}{2}} \LS \RS \LS ~.\]
Summarized, 
\begin{eqnarray*}
	\BigOutSnail&=& \BigOutInnerLadder ~\LS ~\BigOutSpiral ~ \LS ~ \BigOutOuterLadder~.
\end{eqnarray*}

\begin{figure}[t]
	\centering
	\includegraphics[page=4,scale=1]{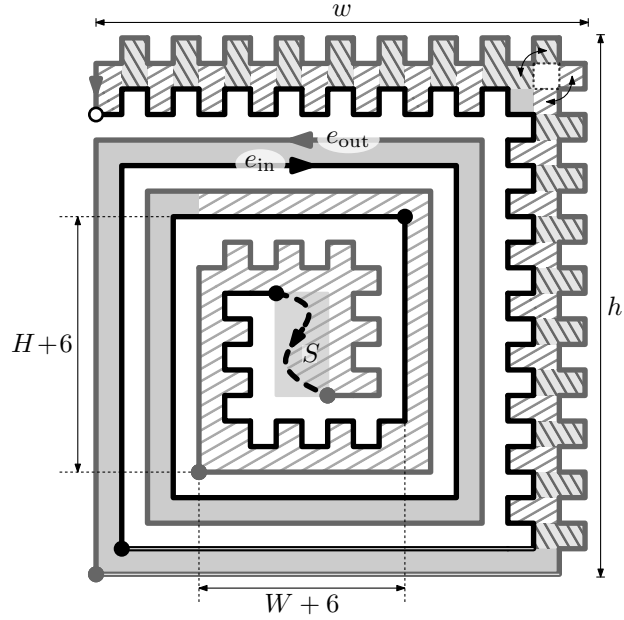}
	\caption[]%empty argument used as a hack as otherwise error message because of the enumerate environment inside the caption
	{The reference drawing of~$T$ (when~$S$ is a yes-instance). The snails~$\BigOutSnail$ (gray) and~$\BigInSnail$ (black) wind around~$\rectR$ (shaded rectangle in the center). 
		The endpoints of the ladders and spirals are depicted as nodes; the common endpoint of the snails is white. 
		The other two endpoints are connected via~$S$ (dashed curve) within~$\rectR$. 
		The bounding box containing both inner ladders has size~${(W+6)\times(H+6)}$.
		The edges of the outer ladders that are incident to the spirals (edges with white filling) are longer by~$3$ than~$\edgeIn$ and~$\edgeOut$, respectively.
		To ease the estimation of the area, the grid cells of the polygon are highlighted as follows: 
		\begin{inlinelistRoman}
			\item All grid cells within the bounding box of the first vertical edge of~$\BigOutSpiral$ and both inner ladders (size~${(W+7)\times(H+7)}$) are hatched.
			\item Almost all remaining grid cells surrounded by exactly two edges of the drawing are shaded in gray.
			\item with the exception of one grid cell (dashed white box), all other remaining grid cells are grouped into pairs that are hatched in one of two patterns (the pairs around the white box are indicated with arrows).
		\end{inlinelistRoman}
	}
	\label{fig:referenceDrawing}
\end{figure}

In a similar way, we define the second snail~$\BigInSnail$; see Fig.~\ref{fig:referenceDrawing}.
The biggest difference is that~$\BigInSnail$ winds in the other direction and ends at the upper-left corner of the rectangle~$\rectR$. 
In detail, the polyline of~$\BigInInnerLadder$ is a copy of~$\BigOutInnerLadder$ turned by~${180^\circ}$ with reversed orientation. 
Another difference is that the spiral~$\BigInSpiral$ contains only~${2\rojo-1}$ right turns (instead of~${2\rojo+1}$ turns). Thus, it consists of~$\rojo$ horizontal and~$\rojo$ vertical edges which also increase in lengths by steps of~$2$.  Therefore, in our drawing, the last horizontal and vertical edges of~$\BigInSpiral$ have lengths~${W+7 + 2(\rojo-1) = w-6}$ and~${H+7+2(\rojo-1)=h-6}$, respectively. Regarding~$\BigInOuterLadder$, 
it has width~${w-2}$, height~${h-3}$, and it starts and ends with a horizontal edge. 
Formally,
\begin{eqnarray*}
	\BigInSnail&=& \BigInOuterLadder  ~\RS~\BigInSpiral ~\RS~ \BigInInnerLadder~,\\
	\BigInOuterLadder&=& (\LS\RS\RS\LS)^{\frac{w-3}{2}} \RS (\RS\LS\LS\RS)^{\frac{h-5}{2}} \RS~, \\
	\BigInSpiral&=&\RS^{2\rojo-1}~,\\
	\BigInInnerLadder&=&\RS (\LS\RS\RS\LS)^{\frac{W+4}{2}} \RS (\RS\LS\LS\RS)^{\frac{H}{2}} \RS~.
\end{eqnarray*}

Finally, we complete our definition of~$T$ as follows:
\[T \xSpaceASD=\xSpaceASD  \BigInSnail ~ S ~ \BigOutSnail ~\LS ~. \]

Note that, if~$S$ is a yes-instance, then there exists the following drawing 
of~$T$: We draw~$\BigInSnail$ and~$\BigOutSnail$ as above and place~$S$ 
inside~$\rectR$ such that the first vertex of~$S$ touches the last vertex 
of~$\BigInSnail$ and the last vertex of~$S$ touches the first vertex 
of~$\BigOutSnail$ (in other words, the first and last edge of~$S$---which are 
horizontal---extend the first and last edge of~$\BigInSnail$ and~$\BigOutSnail$, 
respectively). Finally, we connect both snails on the outside by 
prolonging the last (vertical) edge of~$\BigOutSnail$ one unit to the bottom 
such that it touches the first vertex of~$\BigInSnail$; see 
Fig.~\ref{fig:referenceDrawing}. We call this drawing the \emph{reference drawing}.

Throughout this section, we say \emph{inner ladder} to refer 
to~$\BigInInnerLadder$ or~$\BigOutInnerLadder$, which are the ladders incident to~$S$, 
and \emph{outer ladder} to refer 
to~$\BigInOuterLadder$ or~$\BigOutOuterLadder$. 

\paragraph{Lower Bounds and Thresholds.}
Next, we provide lower bounds and thresholds 
on the cost of any feasible drawing of~$T$ 
that depend only on~$W$ and~$H$.
The thresholds will be defined on each of the three objectives.
They will be essential for our reduction:
There exists a drawing of~$T$ (in particular our reference drawing) that does not surpass the threshold of the respective objective if and only if~$S$ is a yes-instance.
In the reduction, we prove by contradiction that any 
drawing having the threshold as an upper bound has some specific properties.
Our proof will use that any drawing of~$T$ has a lower bound on the perimeter (that influences also the other objectives) that is very close to the threshold. We will see that if a drawing lacks a desired property, then it has to be much more expensive than its lower bound, and thus above the respective threshold.

We begin by providing lower bounds on the perimeter of any drawing of~$T$. We 
will first consider spiral edges, then whole spirals, and finally the ladders. 
We will see that, in the reference drawing, the respective parts meet the 
lower bound or are very close to them. (Generously, we will use the lower 
bound of~$0$ for the remaining part of~$T$, which is~$S$.) We will also give 
a lower bound on the area of the bounding box.

In the following, we use the same notation as in Section~\ref{sec:np-fitboundingbox} for the spirals and ladders of~$\BigInSnail$ and~$\BigOutSnail$.
Consider a spiral. Note that, in contrast to 
Section~\ref{sec:np-fitboundingbox},~${\len{\ey{i}}>\len{\ey{i-1}}}$ 
implies~${\len{\ex{i}}>\len{\ex{i-1}}}$ in the inner order.
Therefore, in this section, we redefine winding and
say that a spiral winds~$i$ times around the ladder defining the order 
if~${\len{\ey{i}}>\len{\ey{i-1}}}$. 
Note that Observation~\ref{obs:minDistance}, Lemma~\ref{lem:greaterEqualTwo}, 
Observation~\ref{obs:biggerThanLadder}, and Corollary~\ref{cor:spiralConfigs} 
hold also for the spirals of~$\BigInSnail$ and~$\BigOutSnail$. 

We begin with a definition similar to Definition~\ref{def:lowerBoundsA}.
\begin{definition}\label{def:lowerBoundsB}
	We define for every spiral edge~$e$ its \emph{\lowerValue} as 
	\begin{itemize}
		\item ${\lbEdge{e}=2j+X}$ if~$e$ is horizontal and 
		\item ${\lbEdge{e}=2j+Y}$ otherwise
	\end{itemize}
	where~$j$ is the level of~$e$ with respect to the inner order,~${X=W+5}$, and~${Y=H+4}$. 
	\begin{itemize}
		\item Let~$\lbSpirals$ denote the sum of the \lowerValues over all edges of both spirals.
		\item Let~${\lbOuterLadders=|\BigOutOuterLadder|+|\BigInOuterLadder|+\lbEdge{\edgeIn}+\lbEdge{\edgeOut}}$, 
		where~$\edgeIn$ and~$\edgeOut$ denote the first horizontal edge of~$\BigInSpiral$ 
		and~$\BigOutSpiral$, respectively, in the outer order. 
		\item Let~${\lowerBoundAreaBB= w \cdot h}$.
	\end{itemize}
\end{definition}

In the following, we observe that the \lowerValues defined in Definition~\ref{def:lowerBoundsB} are proper lower bounds for any feasible drawing.
First, we observe that~$X$ and~$Y$ correspond to the minimum width and height of the bounding box of the inner ladders, respectively. We also examine the width and height of the outer ladders.

\begin{figure}[t]
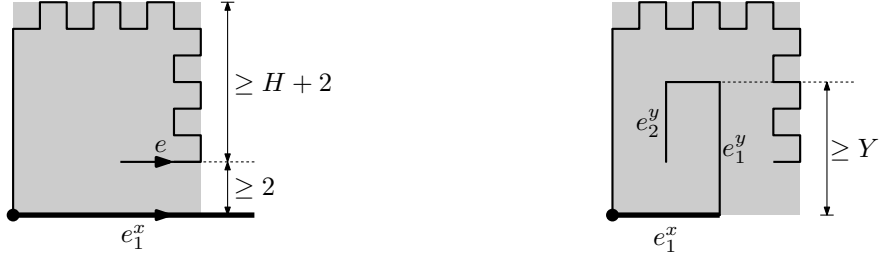

	\centering
	\begin{subfigure}[b]{.47\textwidth}
		\centering
		\includegraphics[page=5]{big-snail}
		\caption{$\ex{1}$ (bold) leaves the bounding box}
		\label{fig:innerLadder-minBB-leaves}
	\end{subfigure}
	\hfil
	\begin{subfigure}[b]{.47\textwidth}
		\centering
		\includegraphics[page=6]{big-snail}
		\caption{$\ex{1}$ (bold) stays completely inside}
		\label{fig:innerLadder-minBB-stays}
	\end{subfigure}
	\caption{In each of the two cases, the bounding box~$b$ (shaded) of~$\BigOutInnerLadder$ has height at least~${Y=H+4}$.}
	\label{fig:innerLadder-minBB}
\end{figure}

\begin{lemma}
	\label{lem:minBBLadders}	
	In any feasible drawing,~$\BigOutOuterLadder$ has width at least~$w$ and height at least~$h$, $\BigInOuterLadder$ has width at least~${w-2}$ and height at least~${h-3}$,
	and the bounding box of an inner ladder has minimum width~$X$ and minimum height~$Y$.
\end{lemma}
\begin{proof}
	A ladder consists of an~$x$-monotone and a~$y$-monotone part (that overlap).
	The width of an~$x$-monotone polyline is at least the number of its horizontal edges, 
	the height of a~$y$-monotone polyline is at least the number of its vertical edges.
	Hence, by the definition of the ladders, the first claim follows.
	
	The second claim follows only partially by this observation: The bounding box of an inner ladder has minimum width~${W+5=X}$ and minimum height~${H+2=Y-2}$. 
	We now show that the height is at least~$Y$. 
	Without loss of generality, consider~$\BigOutInnerLadder$, its bounding box~$b$, and its incident spiral in the inner order. The ladder starts with a right-oriented edge~$e$ and ends with a vertical edge that is incident to the right-oriented spiral edge~$\ex{1}$; see Fig.~\ref{fig:innerLadder-minBB-leaves}. 
	We have two cases:
	In the first case,~$\ex{1}$ leaves~$b$.
	Since the left endpoint of~$\ex{1}$ lies on the left edge of~$b$, its right endpoint has to be to the right of~$b$. Furthermore, the bottom edge of~$b$ is contained in~$\ex{1}$ since~$\ex{1}$ lies below the vertical ladder edge it is incident to.
	Thus,~$\ex{1}$ goes below~$e$. By Observation~\ref{obs:minDistance}, the vertical distance between~$e$ and~$\ex{1}$ is at least~$2$. 
	Note that the~$y$-monotone part of~$\BigOutInnerLadder$ starts at~$e$ and goes upward for at least~${H+2}$ units.
	Hence, the height of the bounding box of~$\BigOutInnerLadder$ is at least~${H+4=Y}$.
	
	In the second case,~${\ex{1} \in  b}$; see Fig.~\ref{fig:innerLadder-minBB-stays}.
	Then, also~${\ey{1}\in b}$ and, by monotonicity of the ladder,~${\len{\ey{1}}>\len{\ey{2}}}$.
	Recall that the level of~$\ey{1}$ is at least~$\rojo$ in the outer order and that the outer ladder has height at least~${h-3}$.
	Thus, Lemma~\ref{lem:greaterEqualTwo} and Observation~\ref{obs:biggerThanLadder} imply \[{\len{\ey{1}}\ge 2\rojo + (h-3) \ge Y}~.\]
	Hence, the height of~$b$ is at least~$Y$.
\end{proof}

The following lemma is a consequence of the lemma above. 

\begin{lemma}\label{lem:bbaLowB}
	For any feasible drawing, the area of its bounding box is
	at least~$\lowerBoundAreaBB$. 
\end{lemma}
\begin{proof} 
	By Lemma~\ref{lem:minBBLadders}, the bounding box~$b$ of~$\BigOutOuterLadder$
	has width and height at least~$w$ and~$h$, respectively. 
	Thus,~$b$ has area at least~${w \cdot h = \lowerBoundAreaBB}$.
	Since~$b$ is contained in the bounding box of the whole drawing, the claim follows.
\end{proof}	

By using the same arguments as in the proof of Lemma~\ref{lem:lbPerimeter}, we obtain the following lemma.
\begin{lemma}\label{lem:bigSpiralLB}
	In any feasible drawing, every spiral edge~$e$ has length at least~$\lbEdge{e}$ and the total perimeter of the spirals is at least~$\lbSpirals$.
\end{lemma}\qed

\begin{figure}[t]
	\centering 
	\includegraphics[page=7,scale=1]{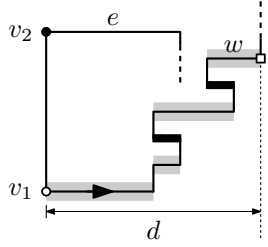}
	\caption{The distance~$d$ between 
		the spiral endpoint~$v_1$ (white node) and the right-most vertex~$w$ (white square) of the incident outer ladder (here~$\BigOutOuterLadder$) is~${d\ge \lbEdge{e}}$.
		The polyline~${v_1-w}$ has at most two more right-oriented edges (highlighted) than left-oriented ones (bold).}
	\label{fig:outerLadderLB} 
\end{figure}

{\begin{lemma}\label{lem:lowBOutLad}
		\sloppy In any feasible drawing, 
		the total perimeter of the two outer ladders is at least~$\lbOuterLadders$.
\end{lemma}}

\begin{proof}	
	Given a feasible drawing, consider a spiral and its incident outer ladder~$L$.
	A natural lower bound on~$\peri{L}$ is~${|L|+1}$ (as~$L$ consists of~${|L|+1}$ edges).
	However, this is not enough. 
	Therefore we show that some of the edges are longer than~$1$. We define the \emph{remainder} 
	of an edge~$e$ to be~${\len{e}-1}$ and we let~$r$ denote the total remainder of the edges of~$L$, that is,~${r=\peri{L}-|L|-1}$.
	In the following, we bound~$r$ from below.
	
	Let~$v_1$ and~$v_2$ denote the first two vertices (including the endpoint) of the spiral in the outer order, and let~$e$ denote the first horizontal spiral edge. 
	Furthermore, let~$w$ denote a right-most vertex of~$L$ and let~$d$ 
	denote the horizontal distance between~$w$ and~$v_1$ (and~$v_2$); see Fig.~\ref{fig:outerLadderLB}. 
	Suppose~${d\le \len{e}}$. 
	Then, by monotonicity of~$L$,~$L$ lies completely inside the bounding box 
	of~$v_1$,~$v_2$, and~$w$. 
	However, the width of this bounding box is~${d\le\lbEdge{e}\le 2(\rojo+1) + X = w-4}$ 
	and the minimum width of~$L$ is at least~${w-2}$; a contradiction. 
	Hence, we have~${d\ge \lbEdge{e}+1}$.
	
	Consider the part of~$L$ between~$v_1$ and~$w$, and orient the edges of this polyline such that it is directed from~$v_1$ to~$w$. 
	Observe that the polyline is~$y$-monotone and that it has at most 
	two right-oriented edges more than left-oriented edges. 
	Since its width is~$d$, the total length of its right-oriented edges is bigger by~$d$ 
	than the total length of its left-oriented edges. 
	Hence, the total remainder of the right-oriented edges is at least~${d-2}$.
	Thus,~${r\ge d-2 \ge \lbEdge{e}-1}$, and~${\peri{L}=r+|L|+1\ge |L|+\lbEdge{e}}$.
	We repeat the proof above for the other spiral and its outer ladder and the claim follows.
\end{proof}

\begin{definition}\label{def:threeThresholds}
	We define the following \emph{thresholds} for each objective:
	\begin{itemize}
		\item ${\Upsilon_p = \lbOuterLadders + \lbSpirals + \vUBcenter  + 2\rojo  + 12}$  for minimizing the perimeter of the drawing, 
		\item ${\Upsilon_a = \Upsilon_p/2 - 1}$ for minimizing the area of the drawing, and
		\item ${\Upsilon_b = \lowerBoundAreaBB}$ for minimizing the area of the bounding box of the drawing.
	\end{itemize}	
\end{definition}

We use the thresholds for our reduction. 
\begin{theorem}\label{thm:NPreduction}
	For each of the three objectives it holds: There is a drawing of~$T$ that does not surpass the threshold (as defined in Definition~\ref{def:threeThresholds}) of the given objective if and only if~$S$ is a yes-instance.
\end{theorem}

We first show that if~$S$ is a yes-instance, then there is a drawing of~$T$ that does not surpass the threshold of the respective objective. 
Consider the reference drawing and recall that we drew~$S$ inside the empty~${(W\times H)}$-rectangle~$\rectR$ and connected it to the two snails accordingly.
We now show that the reference drawing respects all three thresholds. 

\paragraph{Perimeter.}
First, consider~$S$ and the inner ladders.
Given that~$S$ and the inner ladders lie in a~${((W+6) \times (H+6))}$-rectangle (see Fig.~\ref{fig:referenceDrawing}), the total perimeter of~$S$ and the inner ladders is bounded from above by 
\[{2(W+7)(H+7) \xSpaceSE = \xSpaceSE  \vUBcenter}~.\]  
Next, consider the spirals.
Observe that in the reference drawing,
each horizontal spiral edge~$e$ has length~$\lbEdge{e}$,
and
each vertical spiral edge~$e$ has length~${\lbEdge{e}+1}$. 
Recall that~$\BigInSpiral$ has~$\rojo$ vertical edges and~$\BigOutSpiral$ has~${\rojo+1}$ vertical edges.
Thus, the total perimeter of the spirals is~${\lbSpirals+2\rojo+1}$. 
Finally, consider the outer ladders. 
with the exception of the edges incident to the spirals and the last edge of~$\BigOutOuterLadder$ (which has length~$2$), all edges of the outer ladders have unit length. 
The two edges incident to the spirals are exactly~$3$ units longer than the 
first horizontal edge of the respective incident spiral in the outer order. 
Hence, using the notation of Definition~\ref{def:lowerBoundsB}, the two edges have total length~${\lbEdge{\edgeIn} + \lbEdge{\edgeOut} + 6}$.
Thus, the total perimeter of the outer ladders 
is 
\[|\BigInOuterLadder| + |\BigOutOuterLadder| + \lbEdge{\edgeIn} + \lbEdge{\edgeOut} + 6 + 1 
~\le~  \lbOuterLadders+7~. \]
Summing up, the total perimeter of the reference drawing is at most  
\[\lbOuterLadders + \lbSpirals + \vUBcenter + 2\rojo + 8 \xSpaceSE<\xSpaceSE  \Upsilon_p ~.\] 

\paragraph{Area.}
Regarding the area, we subdivide the grid cells of the reference drawing into three parts: 
The first part is the intersection of our polygon with the~${((W+7) \times (H+7))}$-rectangle containing~$S$, the inner ladders, and the first vertical edge of~$\BigOutSpiral$ in the inner order.
Hence, the intersection contains at most
\[{(W+7) \cdot (H+7) \xSpaceSE=\xSpaceSE  \vUBcenter/2}\] grid cells. 
The second part consists of almost all grid cells outside this 
rectangle touching exactly two edges of 
the polyline~$P$ that realizes the spirals and the outer ladders. 
The third part consists of all the remaining grid cells. 
with the exception of one grid cell, we can group the grid cells of the third part into pairs that touch four or five edges of~$P$; see Fig.~\ref{fig:referenceDrawing}.
Hence, with the exception of one grid cell, each grid cell of the second and third part touches at least two edges of~$P$ on average. 
Since each unit-line segment of~$P$ is touched by exactly one grid cell,
the number of grid cells belonging to the second and third part is at most~${1+\peri{P}/2 \le  (\lbOuterLadders+\lbSpirals+2\rojo+10)/2}$. 
Hence, the total area of the reference drawing is at most \[\frac{\lbOuterLadders + \lbSpirals+ \vUBcenter+  2\rojo+10}{2} ~\le~ \frac{\Upsilon_p - 2}{2} ~=~ \Upsilon_a ~.\]

\paragraph{Bounding Box.}
Regarding the bounding box of the reference drawing, note that it is identical to the bounding box of~$\BigOutOuterLadder$. Following the proof of Lemma~\ref{lem:bbaLowB}, the area of the bounding box of the drawing is~${\lowerBoundAreaBB=\Upsilon_b}$.

\bigskip

Now, to prove the other direction of Theorem~\ref{thm:NPreduction}, assume that~$T$ can be drawn such that (at least) one of the three thresholds of Definition~\ref{def:threeThresholds} is not surpassed. 
We show that this fact implies that~$T$ is a yes-instance.
Until the remainder of this section, we fix such a drawing that respects a threshold and refer to it as \emph{our drawing}.
We begin by making a helpful observation that will allow us to focus only on the perimeter and the bounding box of our drawing: 
\begin{lemma}\label{lem:area-implies-per}
	If the area of our drawing is at most~$\Upsilon_a$, then the perimeter is at most~$\Upsilon_p$.
\end{lemma} 
\begin{proof}
	The claim follows from~${\Upsilon_a=\Upsilon_p/2-1}$ and the following observation that we prove below:
	For any simple rectilinear polygon~$P$ on the grid,
	\[{\area{P}\xSpaceSE\ge\xSpaceSE  \peri{P}/2 -1}~.\]
	
	We scale~$P$ by a factor of~$2$ and obtain a new polygon~${P'}$. 
	In~${P'}$, there are~${\#\LS}$ grid cells touching exactly two edge
	segments (which happens only at~$\LS$ vertices),~${\peri{P'}-2\#\LS}$
	grid cells touching exactly one edge segment, and at least~${\#\RS}$
	grid cells touching no edges (every~$\RS$ vertex is exclusively
	incident to one such grid cell due to the simplicity and upscaling of~$P$). Thus,~${\area{P'}\ge \#\LS +  \peri{P'}-2\#\LS + \#\RS = \peri{P'} - 4}$
	using~${\#\LS=\#\RS+4}$. The claim follows by
	substituting~${\area{P'}=4\cdot\area{P}}$
	and~${\peri{P'}=2\cdot\peri{P}}$.
\end{proof}

Our assumption that at least one of the three thresholds of Definition~\ref{def:threeThresholds} is not surpassed has a number of implications that we consider one by one. 

\begin{figure}[t]
	\begin{subfigure}[b]{.30\textwidth}
		\centering
		\includegraphics[page=8]{big-snail}
		\caption{If~$\BigInOuterLadder$ visits the grid line~$g$ (dashed), then~$\Gamma$ consists of two disconnected polylines; a contradiction as the whole drawing is a polygon.}
		\label{fig:BBminSpirals-diconnected}
	\end{subfigure}
	\hfill
	\begin{subfigure}[b]{.35\textwidth}
		\centering
		\includegraphics[page=9]{big-snail}
		\caption{Within~${\bboxGeneral{\Gamma}}$, the outer~lad\-ders occupy at least two grid points from every horizontal (dashed) and vertical (not depicted) grid line with the exception of~$g$ (bold dashed).}
		\label{fig:BBminSpirals-boundedSpace}
	\end{subfigure}
	\hfill
	\begin{subfigure}[b]{.22\textwidth}
		\centering
		\includegraphics[page=10]{big-snail}
		\caption{Given the edge orientations, the spiral edges~$e_2$,~$e_3$, and~$e_4$ have distance at least~$2$ to the outer ladder.}
		\label{fig:BBminSpirals-distance}
	\end{subfigure}
	\caption{If the bounding box of the drawing has height~$h$ and width~$w$, then
		every grid point (gray nodes) at distance at most~$1$ to the top or right border of the bounding box is visited only by~$\BigOutOuterLadder$ (gray). 
		This forces	the remaining part~$\Gamma$ of the drawing to lie in the box (shaded) of width~${w-2}$ and height~${h-2}$, which has several implications on~$\BigInOuterLadder$ (black) and the spirals, for instance, forcing~~$\BigInOuterLadder$ to have its minimum height~${h-3}$.}
\end{figure}

\begin{lemma}\label{lem:bbA-windingAndSmallEdgeLen}
	If the bounding box of the drawing has area at most~$\Upsilon_b$, then the spirals wind at least~$\rojo$ times around their inner ladders and 
	for every spiral edge~$e$,~${\len{e}\le \lbEdge{e} + 1}$. 
\end{lemma}
\begin{proof}
	Consider~$\BigOutOuterLadder$. 
	By Lemma~\ref{lem:minBBLadders}, the area of its bounding box is at least \[{w \cdot h = \lowerBoundAreaBB=\Upsilon_b}~.\]
	Thus, the bounding box of~$\BigOutOuterLadder$ is exactly the bounding box of the whole drawing (see Lemma~\ref{lem:bbaLowB}).
	Let~$\Gamma$ denote the part of the drawing that excludes the edges of~$\BigOutOuterLadder$. 
	Recall that~$\BigOutOuterLadder$ consists of an~$x$-monotone and a~$y$-monotone part of minimum width~$w$ and height~$h$, respectively.
	Given the orientation of the first and the last edge of~$\BigOutOuterLadder$,~$\Gamma$ 
	has to lie entirely to the bottom of the~$x$-monotone part and to the left of the~$y$-monotone part.
	Observe that all horizontal edges of the~$x$-monotone part as well as the vertical edges of the~$y$-monotone part have unit length.
	Thus, every grid point with distance at most~$1$ to the top or right border of the bounding box of the drawing either belongs to~$\BigOutOuterLadder$, or is not visited by the drawing; see Fig.~\ref{fig:BBminSpirals-boundedSpace}.
	Consequently, the bounding box of~$\Gamma$, which includes~$\BigOutSpiral$ as well as~$\BigInSpiral$ and~$\BigInOuterLadder$, 
	has width and height at most~${w-2}$ and~${h-2}$.
	
	Consider~$\BigInOuterLadder$.
	It cannot visit any grid point on the bottom-most grid line~$g$, as otherwise it would separate~$\BigInSpiral$ from~$\BigOutSpiral$ (see Fig.~\ref{fig:BBminSpirals-diconnected}); a contradiction as~$\Gamma$ is a (connected) polyline. 
	Thus,~$\BigInOuterLadder$ lies in a bounding box of width~${w-2}$ and height~${h-3}$.
	Given that its~$x$-monotone part has width at least~${w-2}$ and its~$y$-monotone part has height at least~${h-3}$, all horizontal line segments of the~$x$-monotone part and all vertical line segments of the~$y$-monotone part are of unit length.
	Therefore, every (vertical and horizontal) grid line that goes through~${\bboxGeneral{\Gamma}}$---with the exception of~$g$---contains at least
	two grid points within~${\bboxGeneral{\Gamma}}$ that are covered by the outer ladders\footnote{We consider both outer ladders as~$\BigInOuterLadder$ possibly visits only one grid point of the left-most vertical line; see Fig.~\ref{fig:BBminSpirals-boundedSpace}.}.
	Consequently, every vertical and horizontal grid line 
	---with the exception of~$g$---contains within~${\bboxGeneral{\Gamma}}$ at most~${h-4}$ and~${w-4}$ free grid points, respectively; see Fig.~\ref{fig:BBminSpirals-boundedSpace}. 
	
	For the remainder of the proof, consider any of the two spirals. Let~${e_1, \dots, e_4}$ denote the first four spiral edges in the outer order.
	Recall that the spiral is contained in~${\bboxGeneral{\Gamma}}$ and observe that~$e_2$ lies above~$g$. 
	Consequently, given the number of free grid points,~${\len{e_1}\le h-4}$ and~${\len{e_2}\le w-4}$. 
	For~$\BigInSpiral$, we even have sharper upper bounds. 
	Observe that~$e_1$ starts on a grid point above~$g$. Given the orientation of~$e_2$ and Observation~\ref{obs:minDistance},~$e_1$ ends two units below the~$x$-monotone part of~$\BigInOuterLadder$; see Fig.~\ref{fig:BBminSpirals-distance}.
	Thus,~${\len{e_1}\le h-6}$. By a similar argument,~${\len{e_2}\le w-5}$.
	
	Now, we show that~${\len{e_i}\ge\len{e_{i+2}}+2}$ holds for~${i \in\{1,2\}}$. 
	By our previous observations and by the winding direction of the spiral, the spiral is contained in the bounding box of its outer ladder.
	Since the outer ladder is connected to~$e_1$, its~$y$-monotone part contains a left-oriented line segment below~$e_4$. Thus, by monotonicity and by Observation~\ref{obs:minDistance},~$e_4$ has to lie at least two units above the bottom endpoint of~$e_1$; see Fig.~\ref{fig:BBminSpirals-distance}. Given that the top endpoints of~$e_1$ and~$e_3$ have the same~$y$-coordinate, the claim holds for~${i=1}$, and, by a similar argument, for~${i=2}$. 
	Given Corollary~\ref{cor:spiralConfigs}, the first claim of the lemma follows.
	
	Regarding the second claim, suppose that, for a spiral edge~$e$,~${\len{e}\ge\lbEdge{e}+2}$.
	Then, by Corollary~\ref{cor:spiralConfigs}, Lemma~\ref{lem:greaterEqualTwo} and Definition~\ref{def:lowerBoundsB}, we have in a cascading manner~${\len{e_1}\ge \lbEdge{e_1}+2}$ if~$e_1$ is parallel to~$e$, and~${\len{e_2}\ge\lbEdge{e_2}+2}$ otherwise.
	Thus, if our spiral is~$\BigOutSpiral$, then we have~${\len{e_1}\ge h-3}$ or~${\len{e_2}\ge w-2}$. 
	If our spiral is~$\BigInSpiral$, then~${\len{e_1}\ge h-5}$ or~${\len{e_2}\ge w-4}$.
	In either case, we have a contradiction to our upper bounds on the spiral edges.
\end{proof}

\begin{lemma}\label{lem:windingAlot}
	The spirals wind at least~${\rojo-\sqrt{\rojo}}$ times around their inner ladders. 
\end{lemma}
\begin{proof}
	
	By Lemma~\ref{lem:area-implies-per} and Lemma~\ref{lem:bbA-windingAndSmallEdgeLen}, 
	we have to consider only the case that the total perimeter is at most~$\Upsilon_p$.
	Consider any of the two spirals. If the	
	spiral winds only around the inner ladder, then we are done. 
	Otherwise, the spiral winds~${\alpha\ge 1}$ times around its outer ladder (see Corollary~\ref{cor:spiralConfigs}). 
	Consider any vertical spiral edge~$e$ of a level~$i$ with~${1\le i\le \alpha}$ in the outer order. 
	Note that its level is at most~${\rojo-i+2}$ in the inner order, hence,~${\lbEdge{e} \le 2(\rojo-i+2) + Y}$ by Definition~\ref{def:lowerBoundsB}.
	Recall that the bounding box of the outer ladder has height at least \[{h-3 = 2\rojo+Y+4}\] (see Lemma~\ref{lem:minBBLadders}). 
	Thus, by Observation~\ref{obs:biggerThanLadder}, we have
	\[\len{e} ~\ge~ 2i + (2\rojo + Y + 4) ~\ge~ 2(\rojo-i+2) + Y + 4i ~\ge~ \lbEdge{e}+4i~.\]
	Consequently, the perimeter of the drawing is at least
	\begin{alignat*}{2}
	&&~&\lbOuterLadders + \lbSpirals+ \sum_{i=1}^{\alpha} 4i 
	\\&\ge&& \lbOuterLadders + \lbSpirals +  2\alpha(\alpha+1)~. 
	\end{alignat*}
	Thus,~${\alpha\le \sqrt{\rojo}}$, as otherwise~${2\alpha(\alpha+1)> \vUBcenter+2\rojo+12}$ (here, recall that we have set~${\rojo=(\vUBcenter+12)^2}$) and the perimeter is greater than~$\Upsilon_p$; a contradiction. 
	We conclude by Corollary~\ref{cor:spiralConfigs} that the spiral winds at least
	\[{\rojo-\alpha \xSpaceSE\ge\xSpaceSE  \rojo-\sqrt{\rojo}}\] times around the inner ladder.
\end{proof}

\begin{lemma}\label{lem:edgeLenSmall}
	For every spiral edge~$e$ of level at most~${\sqrt{\rojo}}$ with respect to the inner order, we have~${\len{e}\le \lbEdge{e} + 2}$.
\end{lemma}
\begin{proof}
	By Lemma~\ref{lem:area-implies-per} and Lemma~\ref{lem:bbA-windingAndSmallEdgeLen}, we have to consider only the case that the total perimeter is at most~$\Upsilon_p$.
	Suppose that there is a horizontal edge~$e$ of a level~${j\le \sqrt{\rojo}}$ for which~${\len{e}\ge \lbEdge{e} + 3}$ holds.
	Then, by Lemma~\ref{lem:windingAlot}, Definition~\ref{def:lowerBoundsB}, and 
	Lemma~\ref{lem:greaterEqualTwo}, we also have~${\len{g}\ge \lbEdge{g} + 3}$ for 
	every horizontal edge~$g$ of the same spiral of a level between~$j$ and~${\rojo-\sqrt{\rojo}}$.
	Hence, the total perimeter of the drawing is at least
	\begin{alignat*}{2}
	&&~&\lbOuterLadders + \lbSpirals+ 3(\rojo-2\sqrt{\rojo}) 
	\\&>&& \lbOuterLadders + \lbSpirals +  \vUBcenter+2\rojo+12
	\\&=&& \Upsilon_p~; 
	\end{alignat*}
	a contradiction to the upper bound~$\Upsilon_p$. 
	In a similar way, we get a contradiction if~$e$ is vertical. 
\end{proof}

Now we will see that the spirals interleave until the first level (with respect to the inner ladders). 
Let~${v_1, \dots, v_{2\rojo+1}}$ be the vertices (including the endpoints) 
and let~${e_1, \dots, e_{2\rojo}}$ be the edges of~$\BigInSpiral$ in the inner order. 
Similarly, let~${w_1, \dots, w_{2\rojo+3}}$ be the vertices (including the endpoints) 
and let~${f_1, \dots, f_{2\rojo+2}}$ be the edges of~$\BigOutSpiral$ in the inner order.
For~${1 \le i <  2\sqrt{\rojo}}$, we define~$\inbboxSpiral{i}$ as the bounding box of~$e_i$ and~${e_{i+1}}$, 
and~$\outbboxSpiral{i}$ as the bounding box of~$f_i$ and~${f_{i+1}}$. 

\begin{lemma}\label{lem:interweavement}
	For~${1\le i < 2 \sqrt{\rojo}}$,~$v_i$ lies in the interior of~$\outbboxSpiral{i}$ and~$w_i$ lies in the interior of~$\inbboxSpiral{i}$. 
\end{lemma}
\begin{proof}
	We show the lemma by induction in two steps. First, we prove the claim for~${i=2 \sqrt{\rojo} - 1}$, and then, by induction, for~${1\le i< 2 \sqrt{\rojo} - 1}$.
	
	Let~${i=2 \sqrt{\rojo} - 1}$. 
	We begin by proving the following observation that will lead us to the first claim:
	The \emph{interiors} of~$\inbboxSpiral{i}$ and~$\outbboxSpiral{i}$ intersect. 
	Recall that both spirals are connected to each other by the polyline realizing~$S$ and the inner ladders. 
	If~$\inbboxSpiral{i}$ and~$\outbboxSpiral{i}$ were interior-disjoint, then
	the polyline, starting inside~$\inbboxSpiral{1}$, had to leave~$\inbboxSpiral{i}$ before entering~$\outbboxSpiral{i}$. 
	However, such a polyline requires\footnote{Proof sketch: The polyline goes through interior-disjoint regions of type~${\inbboxSpiral{i}\setminus\inbboxSpiral{i-1}}$ and in order to visit 
		three consecutive such regions, it needs a separate vertex inside the interior 
		of each of the three regions.}~$i$ vertices just for leaving~$\inbboxSpiral{i}$,
	which is more than the number of vertices provided by~$S$ and the two inner 
	ladders\footnote{Without loss of generality,~$S$ has at most~${(W+1)(H+1)}$ 
		vertices and the inner ladders have at most~${4X+4Y}$ vertices in total. 
		Since~${i\ge \sqrt{\rojo} > \vUBcenter > (W+1)(H+1) + 4X+4Y}$,~$i$ is greater
		than the number of vertices.}; a contradiction. 
	
	Now, suppose that the claim is violated by~$v_i$ not being in the interior of~$\outbboxSpiral{i}$. 
	To ease the description, we temporarily rotate the drawing (if needed) such that~$e_i$ is a right oriented edge.
	Since the interiors of the two bounding boxes intersect and given our assumption,~$v_i$ lies above~$w_{i+3}$ and to the right of~$w_i$ and, consequently, also to the right of~$w_{i+3}$ (note that we have~${\len{f_{i+2}} > \len{f_i}}$ by Lemma~\ref{lem:windingAlot}); see Fig.~\ref{fig:wiNotInside}.
	In particular,~${w_{i+3}}$ lies in~$\inbboxSpiral{i}$.
	Observe that the edge~${f_{i+3}}$ starting at~${w_{i+3}}$ cannot leave~$\inbboxSpiral{i}$ and has distance at least~$1$ to the border of~$\inbboxSpiral{i}$. 
	Also note that the levels of~${e_{i+3}}$ and~${e_{i+1}}$ differ by one.
	Thus, the border edge~${e_{i+1}}$ of~$\inbboxSpiral{i}$ has length
	\begin{alignat*}{2}
	\len{e_{i+1}} &\ge&~& \len{f_{i+3}} + 2
	\\&\ge&& \lbEdge{f_{i+3}} + 2
	\\&=&& \lbEdge{e_{i+3}} + 2
	\\&=&& \lbEdge{e_{i+1}} + 4~.
	\end{alignat*}
	Since the level of~${e_{i+1}}$ is~${\lceil (i+1)/2 \rceil = \sqrt{\rojo}}$, 
	the inequality contradicts Lemma~\ref{lem:edgeLenSmall}.
	In a similar way, we show the case for~$v_i$ not being in the interior of~$\inbboxSpiral{i}$.
	Thus, our claim holds for~${i=2 \sqrt{\rojo} - 1}$; see Fig.~\ref{fig:wiInside}.
	\begin{figure}[t]
		\centering  
		\begin{subfigure}[t]{.5\textwidth}
			\centering
			\includegraphics[page=1]{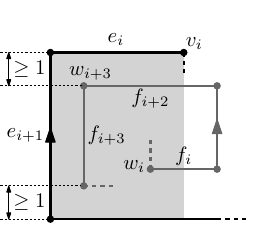}
			\caption{If~${v_i\not\in\outbboxSpiral{i}}$, then~${w_{i+3}\in\inbboxSpiral{i}}$.}
			\label{fig:wiNotInside}
		\end{subfigure}
		\hfill
		\begin{subfigure}[t]{.46\textwidth}
			\centering
			\includegraphics[page=2]{interweavement2}
			\caption{The claim:~${v_i\in\outbboxSpiral{i}}$ and~${w_i\in\inbboxSpiral{i}}$.}
			\label{fig:wiInside}
		\end{subfigure}
		
		\caption{The bounding boxes~$\inbboxSpiral{i}$ (shaded) and~$\outbboxSpiral{i}$ intersect (for~${i=2\sqrt{\rojo} - 1}$).}
	\end{figure}
	
	Now, assume that our claim holds for an~$i$ with~${2 \le i\le 2 \sqrt{\rojo} - 1 }$.
	Temporarily rotate the drawing (if needed) such that~${e_{i-1}}$ and~${f_{i-1}}$ are vertical 
	edges facing downwards; see Fig.~\ref{fig:wiInside}.
	Consider the bounding boxes~${\outbboxSpiral{i-1}}$ and~${\inbboxSpiral{i-1}}$. 
	The vertex~${w_{i-1}}$ lies in the interior of~${\inbboxSpiral{i-1}}$ if and only if~${v_{i-1}}$ lies below the horizontal line through~${w_{i-1}}$.
	Hence, if the induction hypothesis does not hold for~${i-1}$, then~${v_{i-1}}$ does not lie below~${w_{i-1}}$ and, thus, the heights of both~$\inbboxSpiral{i}$ and~$\outbboxSpiral{i}$ are at least~${\len{e_{i-1}}+\len{f_{i-1}}}$. Hence,~${\len{e_{i+1}}\ge \len{e_{i-1}}+\len{f_{i-1}}}$. 
	Therefore, using~${\lbEdge{f_{i+1}}\ge 5}$ (which holds as every spiral edge is longer than~${\min\{W,H\}\ge 5}$), we have
	\begin{alignat*}{2}
	\len{e_{i+1}} &\ge&~& \len{e_{i-1}}+\len{f_{i-1}}
	\\&\ge&& \lbEdge{e_{i-1}} + 5
	\\&=&& \lbEdge{e_{i+1}} + 3~.
	\end{alignat*}
	But this inequality contradicts Lemma~\ref{lem:edgeLenSmall}.
\end{proof}

\begin{corollary}\label{cor:edgeLenVerySmall}
	
	For every spiral edge~$e$ of level at most~${\sqrt{\rojo}}$ with respect to the inner order, we have~${\len{e}\le \lbEdge{e} + 1}$.
\end{corollary}
\begin{proof}
	By Lemmas~\ref{lem:area-implies-per} and~\ref{lem:bbA-windingAndSmallEdgeLen}, 
	we have to consider only the case that the total perimeter is at most~$\Upsilon_p$.
	Suppose that the claim is violated by an edge~$e_j$ of~$\BigInSpiral$ (the argument is similar for~$\BigOutSpiral$). 
	Thus,~${\len{e_j}\ge \lbEdge{e_j} + 2}$. Recall that~${e_j=(v_j,v_{j+1})}$.
	By Lemma~\ref{lem:interweavement},~$v_j$ lies in the interior of~$\outbboxSpiral{j}$ and~${v_{j+1}}$ lies in the interior of~${\outbboxSpiral{j+1}}$.
	Since~${\outbboxSpiral{j}\subset\outbboxSpiral{j+1}}$ (as, by Lemma~\ref{lem:windingAlot}, we have~${\len{f_{j+2}}>\len{f_j}}$),~$e_j$ lies in the interior of~${\outbboxSpiral{j+1}}$
	and both its endpoints have distance at least~$1$ to the border of~${\outbboxSpiral{j+1}}$. 
	Note that the border edge~${f_{j+2}}$ of~${\outbboxSpiral{j+1}}$ and~$e_j$ are parallel and the level of~${f_{j+2}}$ is one more than that of~$e_j$. 
	Consequently,
	\[\len{f_{j+2}} ~\ge~ \len{e_j}+2 ~\ge~ \lbEdge{e_j}+4 ~=~ \lbEdge{f_{j+2}}+2~.\]
	Recall that the level of~$e_j$ is at most~${\sqrt{\rojo}}$. 
	Hence, as in the proof of Lemma~\ref{lem:edgeLenSmall},
	consider any edge~$g$ (of any of the two spirals) that is
	parallel to~$e_j$ and of a level between~${\sqrt{\rojo}+1}$ and~${\rojo-\sqrt{\rojo}}$.
	For such an edge~$g$, we have~${\len{g}\ge \lbEdge{g} + 2}$.
	Then, however, the total perimeter of the drawing is at least
	\begin{alignat*}{2}
	&&~&\lbOuterLadders +\lbSpirals+ 2\cdot 2(\rojo-2\sqrt{\rojo}) 
	\\&>&&\lbOuterLadders + \lbSpirals +  \vUBcenter+2\rojo+13
	\\&=&& \Upsilon_p~; 
	\end{alignat*}
	a contradiction to the upper bound~$\Upsilon_p$. 
\end{proof}

\begin{figure}
	\centering
	\includegraphics[page=1]{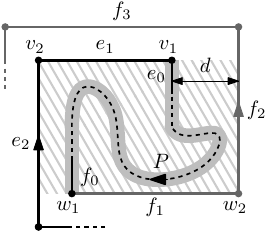}
	\caption{The inner-most levels of the spirals. Their endpoints~$v_1$ and~$w_1$ are connected by a highlighted polyline~$P$ realizing~$S$ and the inner ladders. The polyline has to lie in the bounding box of~$v_2$ and~$w_2$ (hashed area).}
	\label{fig:closedS}
\end{figure} 
\begin{lemma}
	The polyline realizing~$S$ lies completely inside a rectangle of width~${W+10}$ and height~${H+7}$.
\end{lemma}
\begin{proof}
	Let~$e_0$ and~$f_0$ be the vertical edges incident to~$v_1$ and~$w_1$, respectively.
	We claim that the polyline~$P$ connecting~$v_1$ to~$w_1$ (it realizes the inner 
	ladders and~$S$) lies completely in the bounding box of~$v_2$ and~$w_2$; see 
	Fig.~\ref{fig:closedS}. Note that as a consequence of Lemma~\ref{lem:interweavement},~$e_1$ 
	lies in the bounding box of~$f_2$ and~$f_3$, and~$f_1$ lies in the bounding box 
	of~$e_2$ and~$e_3$. Hence, the relative positions of the elements are as depicted
	in Fig.~\ref{fig:closedS}:~$v_2$ lies to the top-left of~$w_2$,~$f_2$ lies to 
	the right of~$e_0$, and~$e_2$ lies to the left of~$f_0$.
	
	First, observe that~$P$ can leave the box only between~$e_0$ and~$f_2$ and between~$e_2$ and~$f_0$.
	Suppose that it leaves the box between~$e_0$ and~$f_2$; the other case is similar.
	Thus,~$P$ contains a vertical line segment between~$e_0$ and~$f_2$ oriented in the same direction as~$f_2$ (to the top).
	Hence, by Observation~\ref{obs:minDistance}, the distance between the vertical line segment and~$f_2$ is at least~$2$. Consequently, the distance~$d$ between~$e_0$ and~$f_2$ is at least~$3$.
	However, given that~$v_2$ is contained in the interior of the bounding box of~$f_2$ and~$f_3$ (Lemma~\ref{lem:interweavement}), 
	we have 
	\[{\len{f_3}~\ge~\len{e_1}+d+1 ~\ge~ \lbEdge{e_1} + 4  ~=~ \lbEdge{f_1} + 4 ~=~ \lbEdge{f_3}+2}~.\]
	This contradicts Corollary~\ref{cor:edgeLenVerySmall}.  
	
	Thus,~$S$ lies completely in the bounding box of~$v_2$ and~$w_2$, which itself is contained in the bounding box of~$f_2$ and~$f_3$. By Corollary~\ref{cor:edgeLenVerySmall}, the width of the box is at most
	\[{\lbEdge{f_3}+1 ~=~ X + 5 ~=~ W + 10}\]
	and the height is at most
	\[{\lbEdge{f_2}+1 ~=~ Y+3 ~=~ H +
		7}~.\tag*{\qedhere}\]
\end{proof}

Hence,~$S$ can be drawn within a~${((W+10)\times (H+10))}$-rectangle such that the first and last edge of~$S$ are horizontal and such that~$S$ can be extended to a simple polygon (given its embedding in~$T$). Hence,~${\langle S, W, H \rangle}$ is a yes-instance. 
This conclusion finishes the second direction of our proof of Theorem~\ref{thm:NPreduction}. 

\section{The Monotone Case: Minimum Area}\label{sec:area-algo}

In this section, we show how to compute, for a monotone angle
sequence, a polygon of minimum bounding box and of minimum area. 
We start with the simple~$xy$-monotone case and then consider the more
general~$x$-monotone case. 

\subsection{The \texorpdfstring{${xy}$}{xy}-Monotone Case}\label{subsec:xymonarea}
An~$xy$-monotone polygon has four \emph{extreme edges}; its leftmost and rightmost vertical edge, and its topmost and bottommost horizontal edge.
Two consecutive extreme edges are connected by a (possible empty)~$xy$-monotone chain that we will call a \emph{stair}.
Starting at the top extreme edge, we let~$\TL$,~$\BL$,~$\BR$, and~$\TR$ denote the four stairs in ccw order; 
see Fig.~\ref{fig:xymonExampleB}.
We say that an
angle sequence consists of~$k$ nonempty \emph{stair sequences} if any~$xy$-monotone polygon that
realizes it consists of~$k$ nonempty stairs; we also call it a 
\emph{$k$-stair sequence}.
The extreme edges correspond to the exactly 
four~${\LS\LS}$-sequences in an~$xy$-monotone angle sequence and are unique up to
rotation. Any~$xy$-monotone angle sequence is of the 
form~${[\LS(\LS\RS)^*]^4}$, where the single~$\LS$ describes the turn before an
extreme edge and~${(\LS\RS)^*}$ describes a stair sequence.
Without loss of generality, we assume that an~$xy$-monotone sequence always begins with~${\LS\LS}$ and that we always draw the first~${\LS\LS}$ as the topmost edge
(the top extreme edge).  Therefore, we can also use~$\TL$,~$\BL$,~$\BR$, and~$\TR$ 
to denote the corresponding stair sequences, namely the first, second,
third and fourth~${(\LS\RS)^*}$ subsequence after the first~${\LS\LS}$ in
cyclic order.
Let~$T$ be the concatenation of~$\TL$, the top extreme edge, and~$\TR$; let~$L$,~$B$, and~$R$ be defined analogously following
Fig.~\ref{fig:xymonExampleB}.  For a chain~$C$, let 
the~\emph{$\rlength$}~$\rnum{C}$ be the number of reflex vertices on~$C$.
If~${C\in\{\TR,\TL,\BL,\BR\}}$, then~$\rnum{C}$ corresponds to the number of horizontal line segments and the number of vertical line segments in~$C$.
When we say that a line segment lies above or below another one, we also require implicitly that both line segments share a grid column.

\begin{figure}[tb]
	\captionsetup[subfigure]{justification=centering}
	\centering
	\subcaptionbox{\label{fig:xymonExampleB}}%
	{\includegraphics[page=1]{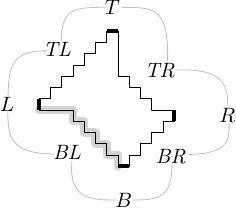}}%
	\hfil
	\subcaptionbox{\label{fig:configA}}%
	{\includegraphics[page=2]{xy-configurations}}%
	\hfil
	\subcaptionbox{\label{fig:configB}}%
	{\includegraphics[page=3]{xy-configurations}}%
	\caption{Extreme edges are bold.  Stair \BL is highlighted. 
		\subref{fig:xymonExampleB} The four stairs \TL, \TR, \BR, and \BL of an~$xy$-monotone polygon.  The sequences~$T$,~$R$,~$B$, and~$L$ are
		unions of neighboring stairs.  \subref{fig:configA} \& \subref{fig:configB}~Two possibly optimum
		configurations of the polygon.}\label{fig:configs}
\end{figure}

In this section, we obtain the following two results.
\begin{theorem}\label{thm:xyarea}
	Given an~$xy$-monotone angle sequence~$S$ of length~$n$, we can find
	a polygon~$P$ that realizes~$S$ and minimizes its
	\begin{inlinelistRoman}
		\item\label{thm:part1bb} bounding box or
		\item\label{thm:part2area} area
	\end{inlinelistRoman}
	in~${\bigOh(n)}$ time,
	and in constant time 
	we can find the optimum objective value  
	if the~$\rlength$s of the stair sequences are given.
\end{theorem}
Part~\ref{thm:part1bb} of Theorem~\ref{thm:xyarea} follows from the following observation:
The bounding box of every polygon that 
realizes~$S$ has width at least \[{\max\{\rnum{T},\rnum{B}\}+1}\] and height
at least \[{\max\{\rnum{L},\rnum{R}\}+1}~.\]
Since we can always draw three
stairs with edges of unit length, we can meet these lower bounds.
\medskip

For part~\ref{thm:part2area}, we first consider angle sequences with at most two nonempty stairs. Here, 
the only non-trivial case is when the angle sequence consists of two opposite 
stair sequences, that is,~$\TL$ and~$\BR$, or~$\BL$ and~$\TR$. Without loss of generality, consider the second case.

A stair has two \emph{delimiters} which are the two vertices outside the stair that are adjacent to the endpoints of the stair; see Figure~\ref{fig:delimeters}.
Note that a delimiter is a convex vertex ($\LS$ vertex).
For each convex vertex of a stair and its delimiters, a~\emph{step} is the polyline consisting of its two adjacent edges. 
For a convex vertex of a stair, its step is~\emph{good} if both edges have the same length.  
For a delimiter, its step is~\emph{good} if the edge adjacent to the stair is shorter by~$1$ than the other edge.
A step that is not good is \emph{bad}.
The~\emph{size} of a step is the minimum of the lengths of its two edges.
\begin{figure}[t]
	\begin{subfigure}[b]{.45\textwidth}
		\centering
		\includegraphics[page=4]{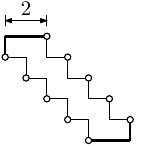}
		\caption{The stair~$\TR$ (bold) with two delimiters
			(white nodes). There are only two good steps
			(highlighted) that belong to~$\TR$ and its
			delimiters.}
		\label{fig:delimeters}
	\end{subfigure}
	\hfill
	\begin{subfigure}[b]{.5\textwidth}
		\centering
		\includegraphics[page=1]{two-stairs-a-is-b}
		\caption{The (only) two optimum polygons realizing
			the~$2$-stair instance
			with~${\rnum{\BL}=\rnum{\TR}=4}$. The nodes depict skew
			convex vertices. The extreme edges are bold.}
		\label{fig:two-stair-a-is-b}
	\end{subfigure}
	
	\caption{A stair with good and bad
		steps~\subref{fig:delimeters}, and two optimum polygons
		realizing a~$2$-stair
		instance~\subref{fig:two-stair-a-is-b}.}
\end{figure}

\begin{lemma}\label{lem:two-stairs-equal}
	Let~$S$ be an~$xy$-monotone angle sequence of length~$n$ consisting
	of exactly two nonempty opposite stair sequences~$\BL$ and~$\TR$. 
	If~${\rnum{\BL}= \rnum{\TR}}$, then we can choose any extreme edge and, in~${\bigOh(n)}$ time,
	we can compute a minimum-area polygon realizing~$S$ such that the chosen extreme edge has length~$1$.
\end{lemma}
\begin{proof}
	Fix a minimum-area polygon~$\Popt$ that realizes~$S$. Let~${\lenTR=\rnum{\TR}}$ and~${\lenBL=\rnum{\BL}}$.
	If~${\lenTR = \lenBL}$, then any two parallel extreme edges have length~$2$ and all other edges have length~$1$; see Fig.~\ref{fig:two-stair-a-is-b}.
	To see this, we use a charging argument. Call a convex vertex~\emph{skew} if it is the top right corner or the bottom left corner of the bounding box of its two adjacent edges. Observe that a grid cell lying in the interior of a polygon can touch at most one skew convex vertex of the polygon, assuming that the polygon has more than four vertices. As each convex vertex is touched by exactly one grid cell from the interior, the number of skew convex vertices is a lower bound on the area. Thus, the two polygons of our construction are optimum as every grid cell is touching a skew convex vertex. Hence, if~${\lenTR=\lenBL}$, the minimum area is
	\[\area{\Popt} ~=~ 2(\lenBL + 1)~.\]
	Also note that these two polygons are the only optimum ones as any other polygon contains at least one grid cell not adjacent to any skew convex vertex.
\end{proof}

\begin{lemma}\label{lem:two-stairs-not-equal}
	Let~$S$ be an~$xy$-monotone angle sequence of length~$n$ consisting
	of exactly two nonempty opposite stair sequences~$\BL$ and~$\TR$. 
	If~${\rnum{\BL}\ne \rnum{\TR}}$, let~${X\in\{\BL,\TR\}}$ be the stair with the smaller number of reflex vertices. 
	Given any priorities on the steps belonging to~$X$ and its delimiters, 
	in~${\bigOh(n)}$ time, we can compute a minimum-area polygon realizing~$S$ that minimizes the sizes of the steps according to the priorities. 
\end{lemma}
\begin{proof}
	Fix a minimum-area polygon~$\Popt$ that realizes~$S$. Let~${\lenTR=\rnum{\TR}}$ and~${\lenBL=\rnum{\BL}}$.
	Assume~${\lenTR < \lenBL}$ (by rotation if necessary). 
	Let~$\exBL$ denote the polyline consisting of~$\BL$ and the bottom and left extreme edge,
	and let~$\exTR$ denote the polyline consisting of~$\TR$ and the top and right extreme edge.
	
	\begin{figure}[t]
		\newcommand{\figurespace}{\quad~}
		\subcaptionbox{\label{fig:bl-unit-1}If two segments share two grid columns (hatched and shaded), contract both by one unit.}{\figurespace\includegraphics[page=1]{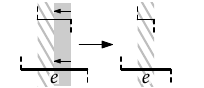}\figurespace}
		\hfill
		\subcaptionbox{\label{fig:bl-unit-2}If a segments shares two grid columns (hatched and shaded) adjacent to a reflex vertex of a long segment, contract both segments by one unit.}{\figurespace\includegraphics[page=2]{BL-unit-new-merged}\figurespace}
		\hfill
		\subcaptionbox{\label{fig:bl-unit-3}If there is a vertex (white node) one unit left and to the bottom of~$v$, then its incident horizontal segment has length at least~$2$.}{\figurespace\includegraphics[page=3]{BL-unit-new-merged}\figurespace}
		\newline
		\subcaptionbox{\label{fig:bl-unit-5}If the vertical segment adjacent to~$v$ has only length~$1$, then decrease the area.}{\figurespace\includegraphics[page=4]{BL-unit-new-merged}\figurespace}
		\hfill
		\subcaptionbox{\label{fig:bl-unit-6}If the vertical segment adjacent to~$v$ has length at least~$2$, then decrease the area by introducing a new reflex vertex (black node).}{\figurespace\includegraphics[page=5]{BL-unit-new-merged}\figurespace}
		\hfill
		\subcaptionbox{\label{fig:bl-unit-7}Then remove one reflex vertex (black node) by removing one unit of the right end of the polygon.}{\figurespace\includegraphics[page=6]{BL-unit-new-merged}\figurespace}
		
		\caption{Forbidden configurations for~$\Popt$ as they allow to decrease the area. In~\subref{fig:bl-unit-3}\textendash\subref{fig:bl-unit-7}, we assume that the only segment in~$\exBL$ of length greater than~$1$ is the bottom extreme edge.}
		\label{fig:bl-unit}
	\end{figure}
	First, we show that 
	all segments of~$\exBL$ are of unit length.
	Suppose that the 
	claim were false and that there is, without loss of generality, a horizontal line segment in~$\exBL$ longer than~$1$.
	Consider the leftmost such segment~$e$ and let~${l(e)}$ and~${r(e)}$ denote its left and right endpoint, respectively. 
	If there were a horizontal segment in~$\exTR$ sharing at least two grid columns with~$e$, we could contract both segments by one unit and decrease the area of~$\Popt$ without causing~$\exBL$ and~$\exTR$ to intersect; a contradiction to the optimality of~$\Popt$; see Fig.~\ref{fig:bl-unit-1}.
	There is also no horizontal segment in~$\exTR$ passing through the two grid columns left and right of~${r(e)}$, as, again, we could contract and obtain a contradiction; see Fig.~\ref{fig:bl-unit-2}.
	
	We will now show that~$e$ is not the bottom extreme edge. If it were, we could modify~$\Popt$ as follows to decrease its area.
	First, we will observe that there is a convex vertex~$v$ of~$\exTR$ whose both incident edges have length at least~$2$ and that there is no vertex of~$\exBL$ one unit to the left and to the bottom of it.
	Given~${\lenTR < \lenBL}$, there is a horizontal line segment in~$\exTR$ of length at least~$2$. Consider the rightmost such segment~$f$ and let~$v$ denote the right endpoint of~$f$.
	If~$f$ is the top extreme edge, then all horizontal edges, with the exception of~$e$ and~$f$, have length~$1$.
	Given~${\lenTR < \lenBL}$, that fact implies~${\len{f}>\len{e}}$. Hence,~${\len{f}\ge 3}$.
	In both cases of whether~$f$ is the top extreme edge or not, if there were a vertex of~$\exBL$ lying one unit to the left and to the bottom of~$v$, 
	then there is an incident horizontal edge of length at least~$2$; see Fig.~\ref{fig:bl-unit-3}. This, however is a contradiction as the only edge in~$\exBL$ of length bigger than~$1$ is~$e$ and its right endpoint is the rightmost vertex in~$\exBL$.
	Suppose that the vertical edge incident to~$v$ had only length~$1$.
	Then we could move the vertical edge by one unit to the left without causing any intersections; see Fig.~\ref{fig:bl-unit-5}.
	This, however, is a contradiction to the optimality of~$\Popt$. 
	Consider the grid cell inside~$\Popt$ that has~$v$ 
	as an endpoint. As argued above, it intersects no vertices of~$\exBL$ and, consequently, no line segments of~$\exBL$.
	Rotate the grid cell, together with the line drawings on its boundary, by~${180^\circ}$; see Fig.~\ref{fig:bl-unit-6}. 
	The resulting polygon~${{\Popt}'}$ has less area than~$\Popt$, but one reflex vertex more.
	To remove one reflex vertex from~${{\Popt}'}$, we contract one unit of~$e$ and we contract the rightmost edge of~$\exTR$, which has length~$1$; see Fig.~\ref{fig:bl-unit-7}.
	Hence, the area decreases again, and we obtain a contradiction to the optimality of~$\Popt$, as our resulting polygon realizes the same angle sequence.
	We conclude that~$e$ is not the bottom extreme edge. 
	
	\begin{figure}[t]
		\begin{subfigure}[b]{.4\textwidth}
			\centering
			\includegraphics[page=7]{BL-unit-new-merged}
			\caption{The sweepline (dotted) stabbed a long
				segment~$g$ of~$\exTR$. Contract~$g$ and the
				horizontal segment of~$\exBL$ left to the
				sweepline by one unit. The vertical segment~$h$
				gets longer and we loose one reflex vertex
				(black node).}
			\label{fig:bl-unit-8}
		\end{subfigure}
		\hfill
		\begin{subfigure}[b]{.3\textwidth}
			\centering
			\includegraphics[page=8]{BL-unit-new-merged}
			\caption{Move all segments that were stabbed by
				the sweepline up by one unit,
				including~${e'}$. There are not intersections as
				they have vertical distance at least~$2$
				to~$\exTR$.}
			\label{fig:bl-unit-9}
		\end{subfigure}
		\hfill
		\begin{subfigure}[b]{.2\textwidth}
			\centering
			\includegraphics[page=9]{BL-unit-new-merged}
			\caption{The resulting polygon is simple, has less
				area and contains one new reflex vertex (black
				node).}
			\label{fig:bl-unit-10}
		\end{subfigure}
		\caption{If there is a line segment in~$\exBL$ of
			length greater than~$1$, then we can decrease the
			area of~$\Popt$ in several steps.}
		\label{fig:bl-unit-B}
	\end{figure}
	Next, using the fact that~$e$ is not the bottom extreme edge, we will decrease the area of the polygon by removing a carefully chosen reflex vertex from~$\exBL$. Later, we will restore the angle sequence of~$\exBL$ without increasing the area and thus obtain a contradiction.
	We cut~$e$ one unit right to~${l(e)}$ into two segments,~${e'}$ and~${e''}$, where~${e'}$ denotes the left part. 
	All the facts above imply that the vertical distance between~${e'}$ and~$\exTR$ is at least~$\len{e}$, hence, at least~$2$.
	Place a vertical line through~${e'}$, that we call a \emph{sweepline}, and move the line to the left until, for the first time, one of the two events occurs:
	\begin{inlinelistAlph}
		\item\label{event:TRhor} The horizontal line segment of~$\exTR$ stabbed by the line has length greater than~$1$, or 	
		\item\label{event:BLver} the horizontal line segment of~$\exBL$ stabbed by the line has an (left) incident vertical segment of length greater than~$1$. 
	\end{inlinelistAlph}
	Note that one of the two events will occur since, in our case, the left and top extreme edge cannot simultaneously attend length~$1$.
	Let~$h$ denote the left vertical line segment incident to the last horizontal line segment of~$\exBL$ stabbed by the sweepline; see Fig.~\ref{fig:bl-unit-8}. 
	If the sweep process terminates with event~\ref{event:TRhor}, take the horizontal line segment~$g$ in~$\exTR$ of length at least~$2$ that has been stabbed by the sweep line. Contract one unit of this segment and contract the rightmost horizontal line segment of~$\exBL$ 
	left to~${e'}$ that has not been stabbed. The latter segment has to be a unit-length segment.
	By this operation, we decrease the area of~$\Popt$, we increase the length of~$h$, and we loose one reflex vertex in~$\BL$.
	We proceed similarly if the sweep process does not terminate with event~\ref{event:TRhor}. We take any horizontal line segment~$g$ in~$\exTR$ of length at least~$2$, which exists given~${\lenTR < \lenBL}$, and which lies left to the sweep line or right to~$e$. Then we contract one unit of~$g$ and we contract the leftmost horizontal line segment lying below~$g$. 
	As a result, we decrease the area and loose one reflex vertex.
	
	In both cases, the vertical edge~$h$ has length at least~$2$. 
	Now, in order to reintroduce the missing reflex vertex, we take the subsequence of all segments of~$\exBL$ that where stabbed by the line at some moment, and shift all these segments up by one unit. In the same time, we shrink~$h$ by one unit and connect the right endpoint of~${e'}$ via a vertical segment to~${e''}$; see Fig.~\ref{fig:bl-unit-9} and~\ref{fig:bl-unit-10}. 
	To see that we do not cause any intersections, recall that the distance between~${e'}$ and~$\exTR$ is at least~$2$. 
	Also, recall that all line segments of our subsequence have unit length, the horizontal ones as well as the vertical ones. Together with the fact that all horizontal line segments of~$\exTR$ lying above the subsequence, with possible exception of the last segment, also have unit length, we conclude that every line segment of our subsequence had distance at least~$2$ to~$\exTR$ before the up-shifting. Hence, we have obtained a feasible polygon for the same angle sequence as~$\Popt$ but with smaller area; a contradiction. 
	
	Next, we express the area of~$\Popt$ as a function of the edge lengths of~$\exTR$. 
	We will use the function to find out which values for the edge lengths minimize the area.
	For~${1\le i\le \lenTR+1}$, let~$\segTR_i$ denote the~$i$\thSuffix horizontal segment in~$\exTR$ from the left. 
	Given our assumption that all horizontal segments of~$\exBL$ are of unit-length, 
	we can express the length~${\len{\segTR_i}}$ of~$\segTR_i$ as the number of horizontal segments of~$\exBL$ lying below~$\segTR_i$.
	Thus, we have~${\sum_{i=1}^{\lenTR+1} \len{\segTR_i} = \lenBL + 1}$.
	Let~${\sarea(i)}$ denote the area below~$\segTR_i$ in~$\Popt$, that is, the number of grid cells in~$\Popt$ sharing a grid column with~$\segTR_i$.
	Since the left extreme edge in~$\Popt$ has length~$1$, the area in~$\Popt$ under~$\segTR_1$ is
	\[\sarea(1) ~=~ \sum_{j~=~1}^{\len{\segTR_1}} j ~=~ \frac{\len{\segTR_1}(\len{\segTR_1}+1)}{2}~.\]
	For~${2\le i \le \lenTR+1}$, the distance between~$\segTR_i$ and any horizontal segment below it is~$2$; it cannot be less, and if it were more, we could feasibly shift~$\segTR_i$ to the bottom by at least one unit, contradicting the optimality of~$\Popt$.
	Thus, we have 
	\[\sarea(i) ~=~ \sum_{j~=~1}^{\len{\segTR_i}} (j+1) ~=~ \frac{(\len{\segTR_i}+1)(\len{\segTR_i}+2)}{2} - 1~.\]
	We can overcome the difference between~${i=1}$ and~${i\ge 2}$ by splitting~$\segTR_1$ into~${\segTR'_0}$ and~${\segTR'_1}$, such that~${\len{\segTR'_0}=1}$ and~${\len{\segTR'_1}=\len{\segTR_1} - 1}$ holds. Note that~${\len{\segTR'_1}}$ can be~$0$. 
	For~${2\le i \le \lenTR+1}$, let~${\segTR'_i = \segTR_i}$. 
	Observe that now we have~${\sum_{i=1}^{\lenTR+1} \len{\segTR'_i} = \lenBL}$.
	Thus, 
	\begin{alignat*}{2}
	\area{\Popt} &=&~& 1 + \sum_{i=1}^{\lenTR+1}  \left(\frac{(\len{\segTR'_i}+1)(\len{\segTR'_i}+2)}2 - 1\right)\\
	&=&&1+\sum_{i=1}^{\lenTR+1} \left(\frac12\len{\segTR'_i}^2 +\frac{3}2\len{\segTR'_i}\right)\\
	&=&&1+\frac32b+\frac12\sum_{i=1}^{\lenTR+1} \len{\segTR'_i}^2 ~, 
	\end{alignat*}
	which is minimized if~${\sum_{i=1}^{\lenTR+1}\len{\segTR'_i}^2}$ is minimal.
	By Cauchy-Schwarz, we know that this is the case if, for every~${i\in\{1,\dots,\lenTR+1\}}$, the length~${\len{\segTR'_{i}}}$ is 
	equal to the arithmetic mean; since we have to use integers, the convexity of
	the function tells us that, for every~${i\in\{1,\dots,\lenTR+1\}}$, the length~${\len{\segTR'_{i}}}$ has to be as close to the
	arithmetic mean as possible, that is,
	\[{\len{\segTR'_{i}} \in \{\lfloor \lenBL/(\lenTR+1) \rfloor, \lceil \lenBL/(\lenTR+1) \rceil\}}~.\] 
	Let~$q$ bet the quotient and~$r$ the remainder when~$\lenBL$ is divided by~${\lenTR + 1}$.
	Hence,
	\[\area{\Popt} ~=~ \frac{(\lenTR + 1)(q+1)(q+2)}{2} - \lenTR + r(q+2) ~.\]
	Repeating the same discussion for the vertical segment, we obtain the fact that every line segments of~$\TR$ is of length~${\lfloor \lenBL/(\lenTR+1) \rfloor}$ or~${\lceil \lenBL/(\lenTR+1) \rceil}$, and the top and right extreme edge is of length~${\lfloor \lenBL/(\lenTR+1) \rfloor+1}$ or~${\lceil \lenBL/(\lenTR+1) \rceil+1}$ (the latter fact follows from~${\len{\segTR_1} = \len{\segTR'_1} + 1}$).
	Observe that, in~$\Popt$, all steps belonging to~$\TL$ and its delimiters are good steps. 
	Otherwise, we could take one of the two edges belonging to a bad step 
	and move it towards the interior of the polygon and thus contradict the optimality of~$\Popt$.
	Further, observe that, for~${1\le i \le \lenTR+1}$, the size of the~$i$\thSuffix step from the left corresponds to~${\len{\segTR'_{1}}}$.
	Hence, all steps are of size~${\lfloor \lenBL/(\lenTR+1) \rfloor}$ or~${\lceil \lenBL/(\lenTR+1) \rceil}$.
	We conclude
	that we can arbitrarily assign the values~${\lfloor \lenBL/(\lenTR+1) \rfloor}$ or~${\lceil \lenBL/(\lenTR+1) \rceil}$ to the steps sizes as long as they sum up to~$\lenBL$ and in this way obtain a feasible, and, hence, minimum polygon realizing~$S$. Thereby, we can take into account any priority on the steps given by the input.
	Thus, we can construct a minimum-area polygon realizing~$S$ in~${\bigOh(n)}$ time. 
\end{proof}

Note that the proofs of Lemma~\ref{lem:two-stairs-equal} and~\ref{lem:two-stairs-not-equal} also allow us to obtain, in~${\bigOh(1)}$ time, the exact area of a minimum polygon without having to construct it.
We summarize our results in the following corollary.
\begin{corollary}%\label{cor:two-stairs-complete}
	Let~$S$ be an~$xy$-monotone angle sequence of length~$n$ consisting
	of exactly two nonempty opposite stair sequences~$\BL$ and~$\TR$. 
	We can find a
	minimum-area polygon that realizes~$S$ in~${\bigOh(n)}$ time.  
	If~${\rnum{\BL}}$
	and~${\rnum{\TR}}$ are given, we can compute the area of such a polygon in~${\bigOh(1)}$ time.
\end{corollary}

The proofs of Lemmas~\ref{lem:two-stairs-equal} and~\ref{lem:two-stairs-not-equal} also lead to the following observation.
\begin{observation}\label{obs:two-stairs}
	Let~$P$ be any polygon realizing an angle sequence~$S$ consisting of exactly two nonempty opposite 
	stairs~$\TR$ and~$\BL$ 
	with~${\lenTR=\rnum{\TR}}$ and~${\lenBL=\rnum{\BL}}$.
	The polygon~$P$ is a polygon of minimum area realizing~$S$ if and only if the following holds:
	If~${\lenTR < \lenBL}$, then
	\begin{listRoman}
		\item the steps of~$\TR$ and its delimiters are good and have size~${\floor{\lenBL/(\lenTR + 1)}}$ or~${\ceil{\lenBL/(\lenTR + 1)}}$.
		\item the bottom and right extreme edge and all edges of~$\BL$ have length~$1$. 
	\end{listRoman}
	If~${\lenTR=\lenBL}$, then
	\begin{listRoman}[resume]
		\item two parallel extreme edges have length~$2$, and
		\item all other edges have length~$1$.
	\end{listRoman}
\end{observation}

We now consider the case of four nonempty stairs. 
(The case of three nonempty stairs can be solved analogously.)
\begin{definition}
	Let~$P$ be any~$xy$-monotone polygon~$P$ with four nonempty stairs~$\TL$,~$\TR$,~$\BL$, and~$\BR$. 
	For~${X\in\{\TL, \TR,\BL,\BR\}}$, let~$\BB{X}$ denote the bounding box of~$X$ and its adjacent extreme edges.
	An \emph{interior corner} of~$\BB{X}$ is the corner of~$\BB{X}$ that lies inside~$P$ and not on the extension of any extreme edge adjacent to~$X$. 
	We call~$P$ \emph{half-canonical} if~$P$ has two non-adjacent 
	nonempty stairs~${(X,Y)\in\{(\TL,\BR), (\TR,\BL)\}}$ 
	such that 
	\begin{listRoman}[label=(C\arabic*)]
		\item\label{xycanon:adj}~$\BB{X}$ and~$\BB{Y}$
		do not intersect in more than one point,
	\end{listRoman}   
	and we call it \emph{canonical} if even
	\begin{listRoman}[label=(C\arabic*),resume]  
		\item\label{xycanon:corner}
		each of the two interior corners of~$\BB{X}$ and~$\BB{Y}$ 
		lies on a line segment of~$P$ that also contains an endpoint of one the two stairs in~${\{\TL,\BR,\TR,\BL\}\setminus X\cup Y}$. 
	\end{listRoman} 
\end{definition}
Figure~\ref{fig:example-canonical} depicts some examples for the case that~${X=\Btl}$ and~${Y=\Bbr}$ holds. 
In Property~\ref{xycanon:corner}, the interior corner of the bounding box may coincide with the endpoint of the respective stair; see~$\Btl$ in Fig.~\ref{fig:example-canonical-2}.
Also note that Property~\ref{xycanon:adj} is a necessary condition for Property~\ref{xycanon:corner}. 
\begin{figure}[t]
	\centering
	\subcaptionbox{\label{fig:example-canonical-1}Two half-canonical polygons that are not canonical.}{~~~\qquad\includegraphics[page=3]{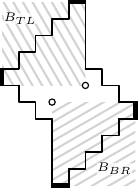}\qquad\includegraphics[page=1]{example-canonical}\qquad~~~}
	\hfil
	\subcaptionbox{\label{fig:example-canonical-2}A canonical polygon. }{~\includegraphics[page=2]{example-canonical}~}
	\caption{Examples of half-canonical and canonical polygons. The nodes depict the interior corners of the bounding boxes (hatched).} 
	\label{fig:example-canonical}
\end{figure}

\begin{lemma}\label{lem:xycanonical}
	For every four-stair sequence~$S$ with~${|S|> 36}$, there 
	exists a polygon of minimum area realizing~$S$ that is canonical.
\end{lemma}  
\begin{proof}
	Consider an optimum polygon~$\Popt$ realizing the angle sequence~$S$. 
	Suppose it
	is not canonical.  Observe that all four extreme edges are of
	length~$1$, otherwise the polygon is not optimum.
	
	First, suppose that Property~\ref{xycanon:adj} does
	not hold.  
	Then, for any pair of two opposite stairs, the bounding boxes of their adjacent extreme edges intersect in more than one point. 
	Hence, the (closed)~$x$-ranges of the horizontal extreme edges intersect and the (closed)~$y$-ranges of the vertical extreme edges intersect. 
	Since the extreme edges have length~$1$, and the bounding boxes intersect in more than one point, we even have that  
	the  (closed)~$x$-ranges of the top and bottom extreme edges are the same, or 
	the (closed)~$y$-ranges of the left and right extreme edges are the same.
	Suppose (by rotation if necessary) it is the latter and also suppose
	(by temporary vertical or horizontal reflection and, afterwards, backward reflection) that the stair~$\TR$ 
	has~$\rlength$ greater than~$4$ (since~${|S| > 36}$,  
	this is possible).
	Let~$u$ be the left endpoint of the bottom extreme edge and let~$v$ be the 
	reflex vertex that precedes, in ccw order, the top extreme edge;
	see Fig.~\ref{fig:make-canonical-1}.
	
	\begin{figure}[t]
		\subcaptionbox{\label{fig:make-canonical-1}Shifting the ccw path between~$u$ and~$v$ down by two units.}{\qquad\includegraphics[page=1]{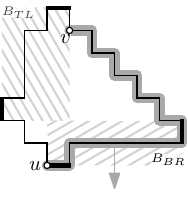}\qquad}
		\hfill
		\subcaptionbox{\label{fig:make-canonical-2}Shifting the ccw path between~$w$ and~$z$ left by three units.}{\qquad\includegraphics[page=2]{make-canonical}\qquad}
		\hfill
		\subcaptionbox{\label{fig:make-canonical-3}The resulting polygon.}{\includegraphics[page=3]{make-canonical}}
		\caption{Transforming an~$xy$-monotone polygon to a polygon that 
			satisfies~\ref{xycanon:adj} and has less area.}
		\label{fig:make-canonical}
	\end{figure}
	We shift the boundary of~$\Popt$ that lies on the ccw walk from~$u$ to~$v$ down 
	by two units, stretching the vertical edges adjacent to~$u$ and~$v$.
	The new polygon~${P'}$ still realizes the angle sequence and its area 
	is larger by two units than the area of~$P$. However, now~$\Btl$ and~$\Bbr$ 
	are intersection-free. Let~$w$ be the reflex vertex that follows, in ccw order, 
	the right extreme edge and let~$z$ be the bottom endpoint of the left
	extreme edge; see Fig.~\ref{fig:make-canonical-2}. We shift the boundary 
	of~${P'}$ that lies on the ccw walk from~$w$ to~$z$ to the left by three units, 
	stretching the horizontal edges adjacent to~$w$ and~$z$.
	The new polygon still realizes the angle sequence and is still simple: 
	The only crossings that can occur by this 
	operation are between~\TR and~\BL. The left extreme edge lies at most
	three rows above the right extreme edge~$\rho$; hence, any crossing must involve 
	the vertical edge~$e_1$ of~\TR in the row above~$\rho$ or
	the vertical edge~$e_2$ of~\TR two rows above~$\rho$; see Fig.~\ref{fig:make-canonical}.
	Let the~$x$-axis go from left to right and let~${x(v)}$ denote the~$x$-coordinate of~$v$ where~$v$ is a vertex or a vertical segment.
	Since~${\rnum{\TR}>4}$, we have after the shift 
	\[x(e_1)~\ge~ x(e_2)~\ge~ x(v)+\rnum{\TR}-2~\ge~ x(v)+3 ~=~ x(u)+1~.\]
	Since each vertical edge of~\BL has~$x$-coordinate at most~${x(u)}$, there can be
	no crossing. However, now the area of the 
	polygon decreased by three units; a contradiction to the fact
	that~$\Popt$ is optimum. Hence, Property~\ref{xycanon:adj} has to hold for~$\Popt$.
	
	Now, assume that there is a bounding box pair having at most one point in common, 
	without loss of generality,~$\Btl$ and~$\Bbr$. Since the optimum polygon~$\Popt$ is not canonical, 
	Property~\ref{xycanon:corner} has to be violated by at 
	least one of the two bounding boxes, say~$\Btl$. 
	Then the interior corner (bottom right corner) of~$\Btl$ does not lie on a line segment that also contains an endpoint of~$\TR$ or~$\BL$.
	Hence, the endpoints of~$\TR$ or~$\BL$ have to lie on the boundary of~$\Btl$ \enquote{behind} the interior corner, that is, they lie on two different edges of~$\Btl$ 
	and, for each one of them, its distance to the 
	closest corner of~$\Btl$ is at least~$1$.
	Then, for at least one of the two edges, it holds that the line going through 
	the edge does not cross the interior of~$\Bbr$ (it can happen that only one such line exists as Fig.~\ref{fig:example-canonical-1} indicates). Without loss of generality, this holds for the 
	line~$g$ that goes through the horizontal edge of~$\Btl$. 
	
	\begin{figure}[t]
		\subcaptionbox{\label{fig:make-C2-1}Shifting a horizontal edge onto~$g$.}{\includegraphics[page=1]{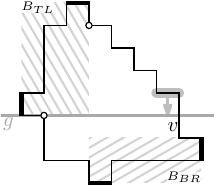}}
		\hfill
		\subcaptionbox{\label{fig:make-C2-2}Shifting the part above~$g$ to the left.}{\includegraphics[page=2]{make-C2}}
		\hfill
		\subcaptionbox{\label{fig:make-C2-3}The resulting polygon satisfies~\ref{xycanon:corner} for~$\Btl$ (not for~$\Bbr$).}
		{\includegraphics[page=3]{make-C2}}
		\caption{Transforming an~$xy$-monotone polygon to a polygon that 
			satisfies~\ref{xycanon:corner} and has the same area.}
		\label{fig:make-C2}
	\end{figure}
	Next, we 
	observe that~$g$ does not cross any vertical line segment of~$\TR$; instead, 
	there is a horizontal line segment of~$\TR$ lying on~$g$.
	To see this, suppose the contrary. Thus, there exists a vertical line segment~$v$ 
	of~$\TR$ that is cut by~$g$; see Fig.~\ref{fig:make-C2-1}. Thus, the two endpoints of~$v$ lie at least one 
	unit above and below~$g$, respectively. Consider the horizontal line segment 
	of~$\TR$ starting at the top endpoint of~$v$. We can move the horizontal segment 
	downwards and place it on~$g$. By this operation, the angle sequence does not change and 
	the polygon remains simple as all line segments of~$\BL$, the only segments that 
	might cross~$\TR$ after his operation, lie below~$g$ by at least one unit.
	Hence, by moving the horizontal edge downwards, we in fact shrink the area of the polygon; a 
	contradiction to its optimality.
	Thus,~$g$ contains a horizontal line segment of~$\TR$. 
	
	Now, we cut the polygon 
	through~$g$ into two parts; see Fig.~\ref{fig:make-C2-2}. Then, we shift the upper part to the left 
	until the endpoint of~$\BL$ coincides with the bottom right corner of~$\Btl$; see Fig.~\ref{fig:make-C2-3}. Hence, Property~\ref{xycanon:corner} is satisfied for~$\Btl$. 
	Moreover, the resulting polygon realizes the same angle sequence as before and has the 
	same area as before. 
	Note that if~$\Bbr$ satisfied Property~\ref{xycanon:corner} before the shift operation, then it also satisfies the property afterwards: If its interior corner (top left corner) lies below~$g$, then any edge containing the corner will remain unchanged as we do not change anything below~$g$. If its interior corner lies on~$g$, then~$\Bbr$ can only satisfy Property~\ref{xycanon:corner} by an endpoint of~$\TR$ which has also to lie on~$g$. During the shift operation, we move this endpoint only to the left, thus the property remains fulfilled for~$\Bbr$.
	
	If the polygon is not yet canonical, then we repeat the procedure with~$\Bbr$ (without losing Property~\ref{xycanon:corner} for~$\Btl$)
	and obtain a canonical optimum polygon. Hence, 
	Property~\ref{xycanon:corner} holds. 
\end{proof}

Let~$\Popt$ be a canonical optimum polygon. Without loss of generality, Property~\ref{xycanon:corner} is satisfied for~$\Btl$ and~$\Bbr$. 
Consider the line segment of~$\TR$ and the line segment of~$\BL$ that connect 
to~$\Btl$ in a canonical polygon. The two line segments are connected to a same edge of~$\Btl$ and are  
\begin{listRoman}
	\item\label{caseFourCanonical:bothHor} both horizontal, %a
	\item\label{caseFourCanonical:bothVer} both vertical, or %b
	\item\label{caseFourCanonical:Perp} perpendicular to each other. %c
\end{listRoman} 
The same holds for~$\Bbr$.
Consequently, there is only a constant number of ways in which the stairs 
outside the two bounding boxes are connected to them. 
Even more, the three cases cannot appear arbitrarily in an optimum polygon as we will see below.

We cut the optimum polygon~$\Popt$ along the edge 
of~$\Btl$ to which~$\BL$ and~$\TR$ are connected. We also cut along the 
respective edge of~$\Bbr$. We get three polygons~$\Popt_1$,~$\Popt_2$ and~$\Popt_3$. 
The polygons~$\Popt_1$ and~$\Popt_3$, which lie on the outside, 
realize the~$1$-stair sequence defined by~$\TL$ and~$\BR$ (including adjacent
extreme edges), respectively, whereas the middle polygon~$\Popt_2$ realizes the~$2$-stair sequence defined by the concatenation of~$\BL$,~$\TR$, and the edge segments 
of~$\Btl$ and~$\Bbr$ that connect them.

\begin{figure}
	\centering
	\includegraphics[page=1]{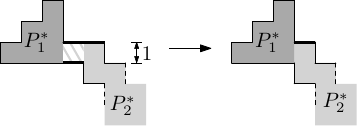}
	\caption{If~$\Popt_2$ connects to~$\Popt_1$ via two horizontal segments (bold) of distance~$1$, then the top one of them has length at least~$2$. Thus, we can contract both by one unit and reduce the area by one grid cell (hatched).}
	\label{fig:four-stairs-connections}
\end{figure}
Let~${\lenTR=\rnum{\TR}}$ and~${\lenBL=\rnum{\BL}}$. 
If~${\lenTR=\lenBL}$, then, for at least one of the two bounding boxes~$\Btl$ and~$\Bbr$, Case~\ref{caseFourCanonical:Perp} holds.
To see this, suppose the contrary. Then, for~$\Popt_1$ and~$\Popt_3$, the two parallel segments of~$\TR$ an~$\BL$ attached to it have distance at least~$2$, as otherwise 
we could shrink the area; see Fig.~\ref{fig:four-stairs-connections}. This fact implies that the extreme edges of~$\Btl$ and~$\Bbr$ to which we attached~$\Popt_2$ have length at least~$3$.
Let~$e$ and~$f$ denote the extreme edges in the angle sequence of~$\Popt_2$ to which we attached~$\Popt_1$ and~$\Popt_3$, respectively.
By Observation~\ref{obs:two-stairs}, we compute a minimum-area polygon~$P_2$ for the angle sequence of~$\Popt_2$ such that~$e$ has length~$1$ and~$f$ has length at most~$2$. Then, we can feasibly attach~$\Popt_1$ and~$\Popt_3$ to~$P_2$ yielding a polygon for~$S$ of area at most~${\area{\Popt}}$. However, now the two parallel segments of~$\TR$ an~$\BL$ touching~$\Popt_1$ have only distance~$1$. As discussed above, we can shrink the area; a contradiction to the optimality of~$\Popt$.

This observation leads to the following algorithm: For~${|S|\le 36}$, we 
find a solution in constant time by exhaustive search. For larger~${|S|}$, 
we guess which pair of opposite bounding boxes in~${\{(\TL,\BR), (\TR,\BL)\}}$ is intersection-free in the canonical optimum polygon~$\Popt$
that we want to compute. Without loss of generality, we guessed~$\Btl$ and~$\Bbr$ (the other case 
is symmetric). Then, we guess how~$\TR$ and~$\BL$, the two stairs outside~$\Btl$ 
and~$\Bbr$, are connected to each of the two bounding boxes (see Cases~\ref{caseFourCanonical:bothHor}--\ref{caseFourCanonical:Perp}).
The guessed information gives us two~$1$-stair instances and a~$2$-stair instance.
We solve the instances independently and then put the solutions
together to form a solution to the whole instance. 

Whereas the~$1$-stair instances are trivial to solve, we apply Lemmas~\ref{lem:two-stairs-equal} and~~\ref{lem:two-stairs-not-equal} to obtain a solution to the middle instance. For this purpose, we will also fix some edge lengths and assign priorities to steps as follows.  
Let~${\lenTR=\rnum{\TR}}$ and~${\lenBL=\rnum{\BL}}$. Without loss of generality,~${\lenTR\le\lenBL}$ and~${\rnum{\TL}\le \rnum{\BR}}$ (the other cases are symmetric).
Assume~${\lenTR=\lenBL}$. If we guessed Case~\ref{caseFourCanonical:Perp} for both~$\Btl$ and~$\Bbr$, then we choose an arbitrary extreme edge to have length~$1$. 
Otherwise, exactly one of the two bounding boxes is in Case~\ref{caseFourCanonical:bothHor} or~\ref{caseFourCanonical:bothVer}. When its corresponding instance has been solved, we have to attach the solution to a particular extreme edge of the solution of the middle instance. We choose this extreme edge to have length~$1$ in the solution (see Lemma~\ref{lem:two-stairs-equal}).
Next, assume~${\lenTR < \lenBL}$. 
Recall that for this case, the algorithm of Lemma~\ref{lem:two-stairs-not-equal} takes any priorities into account that we have assigned to the steps. The algorithm  guarantees that steps of higher priority are not smaller than steps of lower priority. We will assign the priorities in the following way.
If we guessed Case~\ref{caseFourCanonical:bothVer} for~$\Btl$, 
then we assign the highest priority to the step of the left delimiter of~$\TR$, and the second-highest priority to the step of the right delimiter of~$\TR$. In all other cases, we give the highest priority to the step of the right delimiter of~$\TR$.

In detail, we put our three solutions together as follows. Let~$P_1$ denote our solution to the
instance corresponding to~$\Btl$, let~$P_2$ denote our solution to the middle instance, and let~$P_3$ denote our solution to the
instance corresponding to~$\Bbr$; see Fig.~\ref{fig:xy-decomposition-1}.
If we guessed Case~\ref{caseFourCanonical:bothVer} for~$\Btl$, then we put~$P_1$ and~$P_2$ together along 
their corresponding horizontal extreme edges.
If the bottom extreme edge of~$P_1$ is too short, we make it sufficiently longer 
by shifting the left extreme edge of~$P_1$ to the left; see Fig..~\ref{fig:xy-decomposition-2}. Case~\ref{caseFourCanonical:bothHor} works symmetrically.
If we guessed Case~\ref{caseFourCanonical:Perp} for~$\Btl$, then note that the left or top extreme edge 
of~$P_2$ has length at least~$2$ (independently of that we are in Case~\ref{caseFourCanonical:Perp}). 
We glue~$P_1$ and~$P_2$ together along this extreme edge
and the corresponding extreme edge of~$P_1$; see Fig.~\ref{fig:xy-decomposition-1}.
We repeat the same process with~$P_2$ and~$P_3$.

\begin{figure}[t]
	\subcaptionbox{\label{fig:xy-decomposition-1}The three solutions. The arrows indicate how to attach~$P_1$ if we are in Case~\ref{caseFourCanonical:Perp}.}{\includegraphics[page=1]{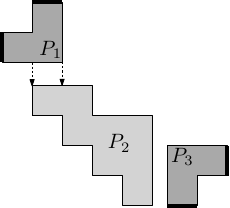}}
	\hfill
	\subcaptionbox{\label{fig:xy-decomposition-2}We use Case~\ref{caseFourCanonical:bothVer} for~$P_1$ and Case~\ref{caseFourCanonical:bothHor} for~$P_3$, both have to be stretched.}{\includegraphics[page=2]{xy-decomposition}}
	\hfill
	\subcaptionbox{\label{fig:xy-decomposition-3}The alternative cut yields three instances optimally solved by~${P'_1}$,~${P'_2}$ and again~$P_3$.}{\includegraphics[page=3]{xy-decomposition}}
	\caption{Putting the three solutions~$P_1$,~$P_2$ and~$P_3$ together.}
	\label{fig:xy-decomposition}
\end{figure}
All in all, we obtain a canonical polygon~$P$ which realizes the given angle sequence. 
To show that it has minimum area, we cut it into three smaller parts and show that the area of each part is upper bounded by the corresponding part of~$\Popt$.
Our choice of the parts will depend on the following cases.
\begin{listArabic}
	\item
	First, assume that we did not prolong any extreme edges.
	Consider the optimum polygon~$\Popt$ realizing~$S$ and our guesses and cut it accordingly to obtain three polygons~$\Popt_1$,~$\Popt_2$, and~$\Popt_3$, corresponding to the three instances of~$P_1$,~$P_2$ and~$P_3$, respectively. 
	Note that, by construction,~$P_2$ is a minimum-area polygon. Since we did not prolong the edges of~$P_1$ and~$P_2$, these polygons are also of minimum area.
	Hence, for~${1\le i\le 3}$, we obtain~${\area{P_i}\le\area{\Popt_i}}$, implying~\[\area{P}~\le~ \area{\Popt}~.\]
	
	\item\label{casecut:pone}
	Secondly, assume that we did prolong only an extreme edge of~$P_1$. Then we guessed Cases~\ref{caseFourCanonical:bothHor} or~\ref{caseFourCanonical:bothVer} for~$\Btl$.
	Note that we did not prolong any edge of~$P_2$ if~${\lenTR=\lenBL}$
	as otherwise we would have solved~$P_2$ such that the extreme edge on~$P_2$ to which we attached~$P_1$ would have unit length; contradicting the necessity to prolong the corresponding extreme edge of~$P_1$.
	Thus, we have~${\lenTR < \lenBL}$.
	Further, observe that we did not guess Case~\ref{caseFourCanonical:bothHor} for~$P_1$, as otherwise we would attach~$P_1$ to the left extreme edge of~$P_2$.
	This would, however, contradict the necessity to prolong as the left extreme edge has unit length by Observation~\ref{obs:two-stairs} and the fact~${\lenTR < \lenBL}$.  
	We conclude that we prolonged the bottom extreme edge~$e$ of~$P_1$ in 
	Case~\ref{caseFourCanonical:bothVer}; see Fig.~\ref{fig:xy-decomposition-2}. 
	We now cut~$\Popt$ in a slightly different way. 
	Our first cut goes horizontally through the top endpoint of the left extreme edge (before, we cut through the bottom endpoint),
	and our second cut is the same as before. 
	Hence, by the second cut, we again obtain~$\Popt_3$.
	The two other polygons that we get,~${{\Popt_1}'}$ and~${{\Popt_2}'}$, realize a~$1$- and a~$2$-stair instance, respectively.
	We cut~$P$ in the same way and obtain three polygons~${P'_1}$,~${P'_2}$, and~$P_3$, where~${P'_2}$ is the polygon realizing the~$2$ star instance; see Fig.~\ref{fig:xy-decomposition-3}.
	Whereas~$P_1$ is not a minimum-area polygon due to the prolongation of its extreme edge, we have that~${P'_1}$ as well as~$P_3$ is a minimum-area polygon. 
	Hence,~${\area{P'_1}\le \area{\Popt_1}}$ and~${\area{P_3}\le\area{\Popt_3}}$.
	We now show that~${\area{P'_2}\le\area{{\Popt_2}'}}$ holds by proving that~${P'_2}$ is a minimum-area polygon.
	The three inequalities will imply~${\area{P}\le\area{\Popt}}$.
	
	Let~${\BL'}$ be the stair sequence that we obtain by adding one reflex and one convex vertex to~$\BL$. Thus, we have~${\rnum{\BL'}=\lenBL+1}$.
	Observe that the~$2$-stair instance realized by~${P'_2}$ and~${{\Popt_2}'}$ consists of the two stairs~$\TR$ and~${\BL'}$.
	Given that all line segments belonging to~$\BL$ in~$P_2$ had unit length, so do all the line segments in~${P'_2}$ belonging to~${\BL'}$.
	The same holds for the left and bottom extreme edge of~${P'_2}$.
	Then, note that the step~${s'}$ of the left delimiter of~$\TR$ in~${P'_2}$ is good.
	Also note that it is bigger by~$1$ when compared to the step~$s$ of the left delimiter of~$\TR$ in~$P_2$.
	Given our priorities on the steps when we computed~$P_2$, 
	the size of~$s$ is~${\floor{\lenBL/(\lenTR+1)}}$.
	Consequently, the size of~${s'}$ is~${\floor{\lenBL/(\lenTR+1)}+1}$. 
	Let~$S$ and~${S'}$ denote all the steps belonging to~$\TR$ with its delimiters in~$P_2$ and~${P'_2}$, respectively.
	As the sizes of the steps in~${S\setminus\{s\}}$ did not change, all steps in~${S'}$ have sizes in~${\{\floor{\lenBL/(\lenTR+1)}, \ceil{\lenBL/(\lenTR+1)}, \floor{\lenBL/(\lenTR+1)}+1\}}$. 
	If~$s$ was the only stair of size~${\floor{\lenBL/(\lenTR+1)}}$, then, given the total size~$\lenBL$ of all steps in~$S$, all steps in~${S'}$ must have the same size~${(\lenBL+1)/(\lenTR+1)}$. 
	Otherwise, if~$s$ was not the only stair in~$S$ of size~${\floor{\lenBL/(\lenTR+1)}}$, then only two different step sizes occur for~${S'}$ and, in particular, we have~${\floor{\lenBL/(\lenTR+1)}=\floor{(\lenBL+1)/(\lenTR+1)}}$.
	Hence, in every case, all steps in~${S'}$ have size in~${\left\{\floor{(\lenBL+1)/(\lenTR+1)},\ceil{(\lenBL+1)/(\lenTR+1)}\right\}}$.
	Given all these facts, Observation~\ref{obs:two-stairs} implies that~${P'_2}$ is a minimum-area polygon.
	
	\item
	Thirdly, assume that we also prolonged an extreme edge of~$P_3$. By a similar argument that we used for~$P_1$, one can show that this may happen only if we guessed Case~\ref{caseFourCanonical:bothHor} for~$\Bbr$ and~${\lenTR < \lenBL}$ holds.
	In what follows, let~$r$ be the step of the delimiter of~$\TR$ in~$P_2$. If we did not prolong any extreme edge of~$P_1$ and if~$r$ has size~${\floor{\lenBL/(\lenTR+1)}}$, 
	then we can conduct a similar discussion as in Case~\ref{casecut:pone} and obtain~${\area{P}\le \area{\Popt}}$.
	
	To this end, we therefore assume that
	\begin{inlinelistAlph}
		\item\label{assumption:prolong} we did prolong an extreme edge of~$P_1$ (it has to be the bottom one), or that
		\item\label{assumption:bigsize} the step~$r$ has size~${\floor{\lenBL/(a+1)}+1}$.
	\end{inlinelistAlph}  
	We cut~$\Popt$ and~$P$ in the same way as in Case~\ref{casecut:pone} and we define, for~${1\le i\le 3}$, the variables~${P_i'}$ and~${{\Popt_i}'}$, as well as~${\BL'}$,~$S$,~${S'}$,~$s$ and~${s'}$ in the same way as in Case~\ref{casecut:pone}.
	Note that~$r$ is in~$S$ and in~${S'}$. 
	Also note that~$r$ has the smallest size among all steps in~${S\setminus\{s\}}$ as it received at least the second-highest priority when computing~$P_2$.
	We claim that~${P'_2}$ is a minimum-area polygon and that~$r$ 
	is a smallest step in~${S'}$. Given this claim, we can conduct a similar discussion as in Case~\ref{casecut:pone} and obtain~${\area{P}\le \area{\Popt}}$.
	
	If we did prolong the bottom extreme edge~$e$ of~$P_1$ (which can happen only in Case~\ref{caseFourCanonical:bothVer} for~$\Btl$), then the polygon~${P_2'}$ is of minimum area by our discussion of Case~\ref{casecut:pone}. 
	Given that~$r$ has the smallest size among all steps in~${S\setminus\{s\}}$ and given that~${s'}$ is greater than~$s$, we conclude that~$r$ is a smallest step in~${S'}$.
	
	Otherwise, assume that we did not prolong~$e$. 
	There are two immediate consequences. First,~$e$ is at least one unit longer than the top extreme edge of~$P_2$. Thus, Observation~\ref{obs:two-stairs} implies~${\len{e}\ge \floor{\lenBL/(\lenTR + 1)}+1}$.
	Secondly,~${\floor{\lenBL/(a+1)}+1}$ is the size of the step~$r$ as one of the two assumptions~\ref{assumption:prolong} or~\ref{assumption:bigsize} must hold.
	Hence, given the size of~$r$, the left extreme edge~$g$ of~$P_3$ has length~${\floor{\lenBL/(a+1)}+2}$ after its prolongation.
	Recall our assumption~${\rnum{\TL}\le \rnum{\BR}}$ and observe~${\len{e}=\rnum{\TL}+1}$ and (after prolongation)~${\len{g}>\rnum{\BL}+1}$.
	Thus,
	\[\floor{\lenBL/(\lenTR + 1)}+1~\le~ \len{e}~<~\len{g}~=~\floor{\lenBL/(\lenTR + 1)}+2~,\]
	and so we have~${\len{e}= \floor{\lenBL/(\lenTR + 1)}+1}$.
	Further, observe that 
	all line segments belonging to~${\BL'}$ as well as the left and bottom extreme edge are of unit length in~${P'_2}$.
	Since the top extreme edge of~${P_2'}$ coincides with~$e$, we conclude that~${s'}$ is a good step that is bigger than~$s$ by exactly one unit. 
	Given that~$r$ has the smallest size among all steps in~${S\setminus\{s\}}$
	and size greater than~$s$, all steps in~${S\setminus\{s\}}$ are of size~${\floor{\lenBL/(a+1)}+1}$. Thus, all steps in~${S'}$ are good and of the same size, hence, of size~${(\lenBL+1)/(a+1)}$. 
	Therefore, by Observation~\ref{obs:two-stairs},~${P'_2}$ is a minimum-area polygon and~$r$ a smallest step in~${S'}$.
	
\end{listArabic}
We conclude that we computed a polygon of minimum area. The run time is linear in~$n$
since our algorithm computes only constantly many~$1$-stair and~$2$-stair instances 
which are each solvable in linear time.
Given the number of reflex vertices for the four stairs, we can even compute the minimum 
area in constant time since this is true for instances with two or less stairs. 
This observation completes our proof of Theorem~\ref{thm:xyarea}.

\subsection{The \texorpdfstring{${x}$}{x}-Monotone Case}\label{subsec:xmonarea}

For the~$x$-monotone case, we first give an algorithm that minimizes the
bounding box of the polygon, and then an algorithm that minimizes the
area.

An~$x$-monotone polygon consists of two \emph{vertical extreme} edges, that is,
the leftmost and the rightmost vertical edge, and at least two \emph{horizontal extreme} 
edges, which are defined to be the horizontal edges of locally maximum or minimum height.
The vertical extreme edges divide the polygon into an upper and a lower hull, 
each of which consists of~$xy$-monotone chains 
that are connected by the horizontal 
extreme edges. We call a horizontal extreme edge of type~${\RS\RS}$ an \emph{inner
	extreme edge}, and a horizontal extreme edge of type~${\LS\LS}$ an \emph{outer extreme 
	edge}; see Fig.~\ref{fig:bbarea-canonical1}. 
Similar to the~$xy$-monotone case,
we consider a \emph{stair} to be an~$xy$-monotone chain between any two consecutive extreme edges (outer and inner extreme
edges as well as vertical extreme edges)  
and we let \emph{stair sequence} denote the corresponding angle subsequence~${(\LS\RS)^*}$.  
Without loss of generality, at least one inner extreme edge exists, otherwise the polygon is~$xy$-monotone and we refer to Section~\ref{subsec:xymonarea}. Given an~$x$-monotone sequence, we always draw the first~${\RS\RS}$-subsequence as the leftmost inner extreme edge of the lower hull. By this, the correspondence between the angle subsequences and the stairs and extreme edges is unique.

\begin{figure}[tb]
	\subcaptionbox{An~$x$-monotone polygon.\label{fig:bbarea-canonical1}}
	{\includegraphics[page=1,scale=.92]{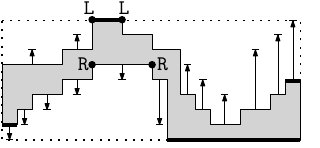}}%
	\hfill
	\subcaptionbox{Conditions~\ref{xcan:outer}--\ref{xcan:vert} are satisfied.\label{fig:bbarea-canonical2}}%
	{\includegraphics[page=2,scale=.92]{x-canonical}}%
	\hfill
	\subcaptionbox{Conditions~\ref{xcan:outer}--\ref{xcan:hor} are satisfied.\label{fig:bbarea-canonical3}}%
	{\includegraphics[page=3,scale=.92]{x-canonical}}%
	
	\caption{Illustration of how to make a polygon canonical. The bold horizontal 
		edges are outer extreme edges, the hashed area marks double stairs (see definition in proof of Theorem~\ref{thm:xbbox}). Note that the illustrating drawing is not optimal.}
	\label{fig:bbarea-canonicals}
\end{figure}
\begin{definition}
	An~$x$-monotone polygon is \emph{canonical} if 
	\begin{listRoman}[label=(D\arabic{*})]
		\item\label{xcan:outer} all outer extreme edges are lying on the border of the
		bounding box,
		\item\label{xcan:vert} each vertical non-extreme edge that is not incident to an inner
		extreme edge has length~$1$, and
		\item\label{xcan:hor} each horizontal edge that is not an outer extreme edge 
		has length~$1$.
	\end{listRoman}
\end{definition}

The following lemma states that it suffices to find a canonical~$x$-monotone polygon of minimum bounding box; see 
Fig.~\ref{fig:bbarea-canonicals} for an illustration.

\begin{lemma}\label{lem:transform}
	Any~$x$-monotone polygon can be transformed into 
	a canonical~$x$-monotone polygon without changing its bounding 
	box.
\end{lemma}
\begin{proof}
	Let~$P$ be an~$x$-monotone polygon. We transform it into a canonical polygon in two steps without changing its bounding box.
	
	First, we move all horizontal edges on the upper hull as far up as possible
	and all horizontal edges on the lower hull as far down as possible; see
	Fig.~\ref{fig:bbarea-canonical1} and~\ref{fig:bbarea-canonical2}. This 
	establishes Condition~\ref{xcan:outer}.
	Furthermore, assume that there is a vertical edge~${(u,v)}$ on the upper hull 
	with~${y(u)>y(v)+1}$. If the (unique) horizontal edge~${(v,w)}$ is not an
	inner extreme edge, then it can be moved upwards until~${y(u)=y(v)+1}$, which
	contradicts the assumption that all horizontal edges on the upper hull are 
	moved as far up as possible. This argument applies symmetrically to 
	the edges on the lower hull. Hence, Condition~\ref{xcan:vert} is established.
	
	Second, we move all vertical edges on a stair as far as possible in the 
	direction of the inner extreme edge bounding the stair, for instance, if the stair
	lies on the upper hull and is directed downwards, then all vertical edges are
	moved as far right as possible; see 
	Fig.~\ref{fig:bbarea-canonical2} and~\ref{fig:bbarea-canonical3}.
	This movement stretches the outer extreme edges while simultaneously contracting all
	other horizontal edges to length~$1$, which satisfies Condition~\ref{xcan:hor}.
	
	Note that in neither step the bounding box changed. Since all conditions are satisfied, the resulting polygon is canonical. 
\end{proof}

We observe that the length of the vertical extreme edges depends on the height
of the bounding box, while the length of all other vertical edges is fixed by
the angle sequence. Thus, a canonical~$x$-monotone polygon is fully 
described by the height of its bounding box and the length of its outer extreme
edges. Furthermore, the~$y$-coordinate of each vertex depends solely on the
height of the bounding box.

We use a dynamic program that constructs a canonical polygon of minimum
bounding box in time~${\bigOh(n^3)}$. For each possible height~$h$ of the bounding 
box, the dynamic program populates a table that contains an entry 
for any pair of an extreme vertex~$p$ (that is, an endpoint of an outer extreme 
edge) and a horizontal edge~$e$ of the opposite 
hull. The value of the entry~${T[p,e]}$ is the minimum width~$w$ such that the 
part of the polygon left of~$p$ can be drawn in a bounding box of height~$h$ and
width~$w$ in such a way that the edge~$e$ is intersecting the interior of the 
grid column left of~$p$. 

\begin{theorem}\label{thm:xbbox}
	Given an~$x$-monotone angle sequence~$S$ of 
	length~$n$, we can find a polygon~$P$ that realizes~$S$ and minimizes its
	bounding box in~${\bigOh(n^3)}$ time.
\end{theorem}
\begin{proof}
	To prove the theorem, we present an algorithm that constructs a canonical 
	polygon of minimum
	bounding box in time~${\bigOh(n^3)}$. The height of any minimum bounding box is
	at most~$n$; otherwise, as there are only~$n$ vertices, there is a~$y$-coordinate on the grid that contains no vertex and can be ``removed''.
	For any height~$h$ of the~$n$ possible heights of an optimum
	polygon, we run the following dynamic program in~${\bigOh(n^2)}$ time. 
	
	We call the left and right endpoint of an outer extreme edge the \emph{left 
		extreme vertex} and the \emph{right extreme vertex}, respectively. The dynamic
	program contains an entry
	for any pair of an extreme vertex~$p$ and a horizontal edge~$e$ of the opposite 
	hull. 
	Consider the part of the polygon between~$p$ and~$e$ that includes the left vertical extreme edge, that is, the chain that goes from~$p$ to~$e$ over the left vertical extreme edge. 
	The value of the entry~${T[p,e]}$ is the minimum width~$w$ of a bounding box of height~$h$ in which 
	this part of the polygon 
	can be drawn in such a way that edge~$e$ is intersecting the interior of the 
	grid column left of~$p$ and such that~$e$ has the same~$y$-coordinate as it has in a canonical drawing of the whole polygon in a bounding box of height~$h$; see Fig.~\ref{fig:bbarea-DP}. We call~${(p,e)}$ an
	\emph{extreme column pair}.
	
	\begin{figure}[tb]
		\centering
		\includegraphics{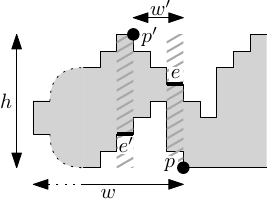}%
		\caption{Two extreme column pairs~${(p,e)}$ and~${(p',e')}$ with~${T[p,e]=T[p',e']+w'=w}$. 
			The part of the polygon left of~$p$ can be drawn in 
			the bounding box of size~${h\times w}$.}
		\label{fig:bbarea-DP}
	\end{figure}
	
	We compute~${T[p,e]}$ as follows. 
	Consider a drawing of the part of the polygon between~$p$ and~$e$
	that includes the left vertical extreme edge in a bounding box of height~$h$ and
	minimum width. Let~${p'}$ be the rightmost extreme vertex in this drawing to the left of~$p$,
	let~${(p',e')}$ be the corresponding extreme column pair, and let~${w'}$ be the
	horizontal distance between~$p$ and~${p'}$; see Fig.~\ref{fig:bbarea-DP}. 
	
	We can find~${(p',e')}$ and~${w'}$ from the angle sequence as follows.
	If~$p$ is a left extreme 
	vertex, then, by Condition~\ref{xcan:hor}, 
	the 
	pair~${(p',e')}$ and the distance~${w'}$ is fully determined. Otherwise, if~$p$ is a right extreme vertex, then~${p'}$ is 
	either the left extreme vertex incident to~$p$, or~${p'}$ is the horizontally 
	closest extreme vertex on the opposite hull; we test both cases. Again, by 
	Condition~\ref{xcan:hor},  
	edge~${e'}$ and distance~${w'}$ is
	fully determined.
	
	When determining~${(p',e')}$ and~${w'}$, we also test, as we will describe in the next paragraph, whether we can canonically draw
	the part of the polygon between~${(p',e')}$ and~${(p,e)}$ in the given space constraints. If we can, then we call~${(p',e')}$ a feasible pair for~${(p,e)}$.
	We find a feasible pair~${(p',e')}$ for~${(p,e)}$ with the smallest value of~${T[p',e'] + w'}$ and set
	\[T[p,e] = T[p',e'] + w'~.\] 
	If all pairs for~${(p,e)}$ are infeasible, we set~${T[p,e]=\infty}$.
	
	First, we will argue that if there is such a canonical drawing, then it is unique. 
	We assume~${T[p',e']<\infty}$.
	We group each pair of stairs that share an inner extreme edge as a
	\emph{double stair}; see Fig.~\ref{fig:bbarea-canonical3}. Each remaining stair forms a double stair by itself.
	Let~$\partHullTop$ denote the part of the upper hull between~${(p',e')}$ and~${(p,e)}$.
	Given the choice of~${p'}$, it does not contain any endpoint of an outer extreme edge 
	in its interior.  Hence, there are only two cases. Either~$\partHullTop$ consists of
	a single horizontal line segment belonging to an outer extreme edge, or it is a 
	subchain belonging to a double stair. 
	In the first case, by Condition~\ref{xcan:outer}, we have to draw~$\partHullTop$ on 
	the top boundary of the bounding box. Further, its left endpoint has~$x$-coordinate 
	equal to~${T[p',e']}$ and the length of the segment is~${w'}$. Hence, the drawing is unique.
	In the second case, note that conditions~\ref{xcan:outer}--\ref{xcan:hor} determine 
	the lengths and~$y$-positions of all edges with the exception of the lengths of the outer extreme edges. 
	Thus, given the~$x$-position of any vertex of a double stair, there is only one 
	canonical way to draw the double stair. In our case, the value of~${T[p',e']}$ is equal 
	to the~$x$-position of the leftmost vertex of~$\partHullTop$. Hence, the drawing 
	of~$\partHullTop$ is unique. By the same arguments, we have to draw the part~$\partHullBot$ 
	of the lower hull between~${(p',e')}$ and~${(p,e)}$ in a unique way. 
	
	Now, given the unique drawings of~$\partHullTop$ and~$\partHullBot$, we check for 
	every~$x$-coordinate whether~$\partHullTop$ is lying above~$\partHullBot$. If and only 
	if this is the case, then the two drawings together form a feasible canonical drawing 
	and~${(p',e')}$ is a feasible pair for~${(p,e)}$.
	
	In the last step, we compute the minimum width~$w$ of the bounding box assuming height~$h$.
	Consider an optimum canonical drawing of the whole polygon in a bounding box of height~$h$. 
	Let~${p^*}$ be a rightmost (right) extreme vertex. Note that for~${p^*}$ there are only two candidates, one from the upper hull and one from the lower hull.
	Since~${p^*}$ is a rightmost extreme vertex, all horizontal 
	edges to the right of~${p^*}$ (on the upper and on the lower hull) are segments of length~$1$. 
	Thus, given~${p^*}$, we can compute the distance~${r^*}$ between~${p^*}$ and  
	the right vertical extreme edge. 
	Let~${e^*}$ be the~${r^*}$\thSuffix horizontal edge from the right on the hull opposite to~${p^*}$.
	Observe that edge~${e^*}$ is the edge that forms an extreme column pair with~${p^*}$. Hence, the width of the polygon is~${w=T[p^*,e^*]+r^*}$.
	
	We compute width~$w$ as follows. For each one of the two candidates for~${p^*}$, we determine~${r^*}$ and~${e^*}$. Then we check whether the candidate is feasible.
	For this, recall that \mbox{Conditions~\ref{xcan:outer}--\ref{xcan:hor}} determine the~$y$-positions of all edges. Also recall that all horizontal edges to the right of~${(p^*,e^*)}$ are of length~$1$.
	Hence, there is only one way to canonically draw the edges right to~${(p^*,e^*)}$. If the upper hull always stays above the lower hull, candidate~${p^*}$ is feasible.
	Thus, we get the width by \[w~=~\min_{\text{feasible candidate }p^*}\{T[p^*,e^*]+r^*\} \cup \{\infty\}~.\]
	
	For every height~$h$, we compute the minimum width~$w$ and find the bounding box of minimum area~${w \cdot h}$.
	
	It remains to show the run time of the algorithm. The table~$T$ consists 
	of~${\bigOh(n^2)}$ entries. To find the value of an entry~${T[p,e]}$, we have to
	find the closest column pair~${(p',e')}$ to the left, the distance~${w'}$, and we 
	have to test whether we can canonically draw
	the polygon between~${(p',e')}$ and~${(p,e)}$. We now show 
	that each of these steps is possible in~${\bigOh(1)}$ time by precomputing some values 
	for each point. 
	
	\begin{enumerate}[label=(\roman*)]
		\item\label{precomp:ycoord}
		For each point, we store its~$y$-coordinate. As observed above, the~$y$-coordinate is fixed, and it can be computed in~${\bigOh(n)}$ time in total by
		traversing the stairs.
		\item\label{precomp:next}
		For each point~$p$, 
		we store the next extreme
		point~${\lambda(p)}$ to the left on the same hull, as well as the 
		distance~${\delta(p)}$ to it. These values
		can be computed in~${\bigOh(n)}$ time in total by traversing the upper and the lower hull from
		left to right.
		\item\label{precomp:array}
		For each left extreme vertex~$q$, we store an array that contains all
		horizontal edges between~$q$ and~${\lambda(q)}$ ordered by their appearance on a walk
		from~$q$ to~${\lambda(q)}$ on the same hull. We also store the index of the inner 
		extreme edge in this array. These arrays can be computed altogether in~${\bigOh(n)}$ time 
		by traversing the upper and the lower hull from right to left.
	\end{enumerate}
	
	The precomputation takes~${\bigOh(n)}$ time in total. Given an extreme column 
	pair~${(p,e)}$, let~$l_e$ be the left endpoint of~$e$. We can use 
	the precomputation of step~\ref{precomp:next} to find in~${\bigOh(1)}$ time the closest extreme 
	vertex~${p'}$ to the left of~$p$, since it is either~${\lambda(p)}$ 
	or~${\lambda(l_e)}$, as well as the distance~${w'}$, which is either~${\delta(p)}$ 
	or~${\delta(l_e)}$. To test whether we can canonically draw the polygon 
	between~${(p',e')}$ and~${(p,e)}$, we make use of the fact that there is no
	outer extreme edge between them. Hence, we only have to test whether a pair
	of opposite double stairs intersects. To this end, we observe that a pair of
	double stairs can only intersect if the inner extreme edge of the lower hull 
	lies (partially) above the upper hull or the inner extreme edge of the upper 
	hull lies (partially) below the lower hull. With the array precomputed in 
	step~\ref{precomp:array}, we can find the edge opposite of the inner extreme 
	edges, and by step~\ref{precomp:ycoord}, each point (and thus each 
	edge) knows its~$y$-coordinate, which we only have to compare to find out 
	whether an intersection exists. Hence, we can compute each table entry in~${\bigOh(1)}$
	time after a precomputation step that takes~${\bigOh(n)}$ time.
	
	Since we call the dynamic program~${\bigOh(n)}$ times---once for each candidate for
	the height of the bounding box---the algorithm takes~${\bigOh(n^3)}$ time in total.
	Following Lemma~\ref{lem:transform}, this proves the theorem. 
\end{proof}

For the area minimization, we make two key observations. 
First, since the polygon is~$x$-monotone, each grid 
column (properly) intersects 
either no or exactly two horizontal edges: one edge
from the upper hull and one edge from the lower hull. Secondly, a pair of
horizontal edges share at most one column; otherwise, the polygon could be
drawn with less area by shortening both edges. With the same argument as for
the bounding box, the height of any minimum-area polygon is at most~$n$.

We use a dynamic program to solve the problem. To this end, we fill a
three-dimensional table~$T$ as follows. Let~$e$ be a horizontal edge on 
the upper hull, let~$f$ be a horizontal edge of the lower hull, and 
let~$h$ be an integer satisfying~${1\le h\le n}$. The entry~${T[e,f,h]}$ specifies the minimum
area required to draw the part of the polygon to the left of (and 
including) the unique common column of~$e$ and~$f$ under the condition that~$e$ 
and~$f$ share a column and have vertical distance~$h$. 

Let~${e_1,\dots,e_k}$ be the horizontal edges on the upper hull from left to right
and let~${f_1,\dots,f_m}$ be the horizontal edges on the lower hull from left to 
right. For each~$h$ with~${1\le h\le n}$, we initialize the table with~${T[e_1,f_1,h]=h}$. 
To compute any other entry~${T[e_i,f_j,h']}$, we need to find the correct entry from the
column left of the column shared by~$e_i$ and~$f_j$. There are three 
possibilities: this column either intersects~${e_{i-1}}$ and~${f_{j-1}}$, it 
intersects~$e_i$ and~${f_{j-1}}$, or it intersects~${e_{i-1}}$ and~$f_j$. For each
of these possibilities, we check which height can be realized if~$e_i$
and~$f_j$ have vertical distance~${h'}$ and search for the entry of
minimum value.  We set
\[T[e_i,f_j,h']~=\min_{h'' \text{ valid}}\{T[e_{i-1},f_{j-1},h''],
T[e_i,f_{j-1},h''], T[e_{i-1},f_j,h'']\}+h'~.\]
Finally, we can find the optimum solution by 
finding~${\min_{1\le h\le n}\{T[e_k,f_m,h]\}}$. Since the table has~${\bigOh(n^3)}$ 
entries each of which we can compute in~${\bigOh(n)}$ time, the algorithm runs 
in~${\bigOh(n^4)}$ time. This proves the following theorem.

\begin{theorem}\label{thm:xbarea}
	Given an~$x$-monotone angle sequence~$S$ of length~$n$, we can find
	a minimum-area polygon that realizes~$S$ in~${\bigOh(n^4)}$ time.
\end{theorem}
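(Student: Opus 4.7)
The plan is to build on the structural insight that an $x$-monotone polygon, when placed on the grid, has every column ``sandwiched'' between exactly one horizontal edge of the upper hull and one horizontal edge of the lower hull. Two preliminary observations will drive the dynamic program. First, in any minimum-area realization, a pair $(e,f)$ of opposite horizontal edges can share at most one common column, since otherwise shortening both edges by one unit strictly decreases the area while preserving the angle sequence and $x$-monotonicity. Second, the height of an optimum polygon is bounded by $n$, because a row without any vertex can be collapsed without affecting feasibility.

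With these observations in hand, I would index the subproblems by a triple $(e,f,h)$, where $e$ runs over the horizontal edges $e_1,\dots,e_k$ of the upper hull (left to right), $f$ runs over those of the lower hull $f_1,\dots,f_m$, and $h \in \{1,\dots,n\}$ is the vertical distance between $e$ and $f$ in the drawing. Let $T[e,f,h]$ be the minimum area needed to realize the portion of the polygon strictly to the left of (and including) the unique column shared by $e$ and $f$, subject to that column having height $h$. The base case $T[e_1,f_1,h]=h$ reflects that the leftmost shared column contains nothing but the leftmost vertical extreme edge.

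For the transition, I would use the fact that as we move one column to the left from the column shared by $(e_i,f_j)$, exactly one of three things happens: the upper edge changes to $e_{i-1}$, the lower edge changes to $f_{j-1}$, or both change simultaneously (the latter corresponding to a column coinciding with a vertex on both hulls). In each case the previous column's height $h''$ is determined, up to the choice among the valid values, by the differences in $y$-coordinates of the incident vertices of $e_{i-1},e_i,f_{j-1},f_j$, so I would set
\[
T[e_i,f_j,h'] \;=\; h' \;+\; \min_{h''\text{ valid}} \bigl\{\, T[e_{i-1},f_{j-1},h''],\; T[e_i,f_{j-1},h''],\; T[e_{i-1},f_j,h''] \,\bigr\}.
\]
The final answer is $\min_{1\le h\le n} T[e_k,f_m,h]$, and the $O(n^3)$ table entries each require $O(n)$ work to range over $h''$, giving the claimed $O(n^4)$ bound.

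The main obstacle I anticipate is a careful bookkeeping of which transitions are \emph{valid}: one must verify that the choice of $h''$ is consistent with the fixed vertical offsets forced by the angle sequence at each of the at most two vertices that appear between the two consecutive columns, and must rule out self-intersections between the two hulls in the column being added. This is the place where the $x$-monotone structure and the unit-column separation between consecutive vertices (from the observation that no two pairs share more than one column) must be exploited to keep the transition $O(n)$ rather than $O(n^2)$. Once the validity conditions are tabulated from the angle sequence in a linear-time preprocessing phase, the overall running time follows immediately.
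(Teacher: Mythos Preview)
Your proposal is correct and follows essentially the same approach as the paper: the same two structural observations (each column hits exactly one upper and one lower horizontal edge; any pair shares at most one column in an optimum), the same three-dimensional table $T[e,f,h]$ with the same base case, the same three-way transition and recurrence, and the same $O(n^3)\cdot O(n)=O(n^4)$ accounting. Your discussion of the validity bookkeeping is, if anything, slightly more explicit than the paper's, but the argument is the same.
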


\section{The Monotone Case: Minimum Perimeter}\label{sec:peri-algo}

In this section, we show how to compute a polygon of minimum perimeter
for any~$xy$-monotone or~$x$-monotone angle sequence~$S$ of length~$n$.

Let~$P$ be an~$x$-monotone polygon realizing~$S$.
Let~$e_L$ be the leftmost vertical edge and let~$e_R$ be
the rightmost vertical edge of~$P$.
Recall that~$P$ consists of two~$x$-monotone chains; an upper chain~$T$
and a lower chain~$B$ connected by~$e_L$ and~$e_R$.
For every~${e\in T}$, let~${T(e_R,e)}$ denote the subchain of~$T$ consisting of all segments between~$e_R$ and~$e$ (without~$e_R$ and~$e$).
Similarly, for every~${e'\in B}$, let~${B(e',e_R)}$ denote the subchain of~$B$ consisting of all segments between~${e'}$ and~$e_R$ (without~${e'}$ and~$e_R$).
Without loss of generality, we assume that the number of reflex vertices of~$T$ and~$B$ satisfies~${\rnum{T}\geq \rnum{B}}$.

\begin{definition}
	An~$x$-monotone polygon is \emph{perimeter-canonical} if
	\begin{listArabic}
		\item\label{pericanon:vert}
		every vertical edge except~$e_R$ and~$e_L$ has unit length, and
		\item\label{pericanon:hor}
		every horizontal edge of~$T$ has unit length. 
	\end{listArabic}
\end{definition}

We show that it suffices to find a perimeter-canonical polygon of minimum perimeter.
\begin{figure}
	\centering
	\includegraphics[page=6]{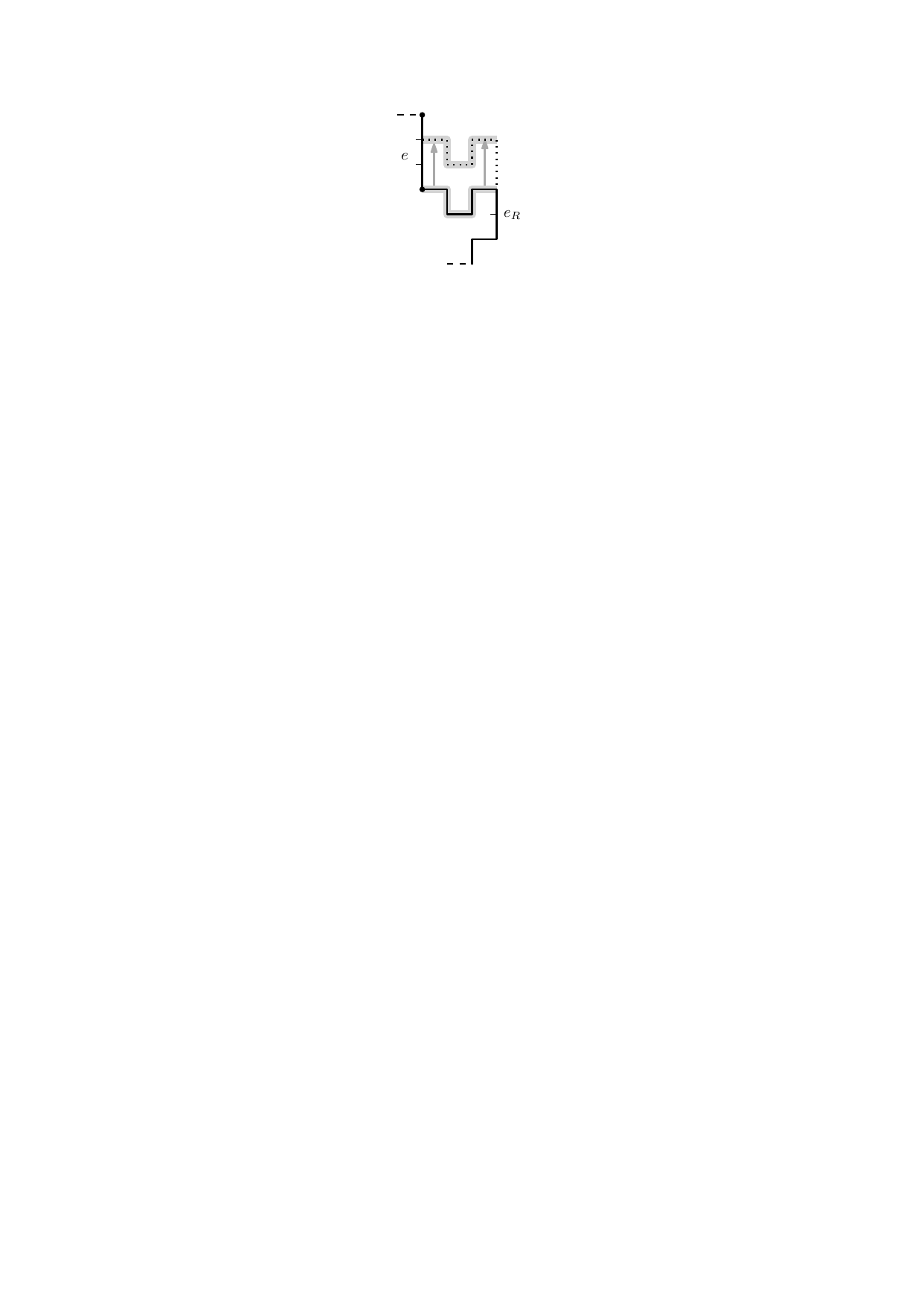}
	\caption{First step of transforming~$P$ into a canonical form. We decrease~$\len{e}$ by increasing~${\len{e_R}}$.}
	\label{fig:mono-hori-edge-1}
\end{figure}

\begin{lemma}\label{lem:mono-unit-length}
	Any~$x$-monotone polygon can be transformed into a perimeter-ca\-no\-ni\-cal~$x$-mo\-no\-tone polygon without increasing its perimeter.
\end{lemma}
\begin{proof}
	We transform any minimum-perimeter polygon into a perimeter-canonical
	form without sacrificing its perimeter in two steps as follows.
	First, we shorten every \emph{long} vertical edge~${e \in T\cup B}$ 
	with~${\len{e}>1}$ so that~${\len{e}=1}$ holds. 
	This shortening is always possible:
	For any long vertical edge~${e\in T\cup B}$, say~${e \in T}$, 
	if its end vertices have turns~${\RS\LS}$
	in ccw order, then we proceed as follows; see Fig.~\ref{fig:mono-hori-edge-1}. 
	We move the subchain~${T(e_R,e)}$ 
	upward by~${\len{e}-1}$ units by shortening~$e$ 
	and by simultaneously stretching~$e_R$. 
	This movement guarantees that~$\len{e}$ decreases and~${\len{e_R}}$ increases by
	the same amount of~${\len{e}-1}$, so the perimeter remains the same.
	We can also shorten any long vertical edge whose end vertices have turns~${\LS\RS}$ in a symmetric way.
	
	Secondly, we shorten every long horizontal edge~${e \in T}$ with~${\len{e}>1}$
	so that its length becomes~$1$. 
	Suppose that~$e$ is the rightmost long horizontal edge~$e$ in~$T$.
	Since~${\rnum{T}\geq \rnum{B}}$, there must be a long horizontal edge~${e'}$ in~$B$.
	We shorten both~$e$ and~${e'}$ by one unit, and 
	move the two subchains~${T(e_R, e)}$ and~${B(e', e_R)}$
	together with~$e_R$ one unit left. 
	This move may cause two vertical edges,~${f\in T}$ 
	and~${f'\in B}$, to intersect; 
	see Fig.~\ref{fig:mono-hori-edge-2}. 
	Note that exactly one of both vertical edges did not move, say~${f'}$, 
	as otherwise there would be no intersection between them. 
	This means~${f'}$ is to the left of~${e'}$, that is,~${f'\in B\setminus B(e', e_R)}$. 
	We also know that the~$x$-distance between~$f$ and~${f'}$ prior to the move was one, 
	otherwise they would not intersect. 
	Since~$f$ and~${f'}$ are of unit length, 
	the lower end vertex of~$f$ has the same~$y$-coordinate 
	as the upper end vertex of~${f'}$. 
	To avoid the intersection, we first move the whole upper chain~$T$ 
	one unit upward by stretching~$e_R$ and~$e_L$ each by one unit, 
	as in Fig.~\ref{fig:mono-hori-edge-3}. 
	Then we can move~${T(e_R, e)}$,~${B(e', e_R)}$, and~$e_R$ one unit to the left 
	without causing any intersection. 
	We get rid of two units by shortening~$e$ and~${e'}$, and receive two units 
	by stretching~$e_R$ and~$e_L$, 
	so the total perimeter remains unchanged. 
	We repeat this second step until~${\len{e}=1}$.
\end{proof}
\begin{figure}
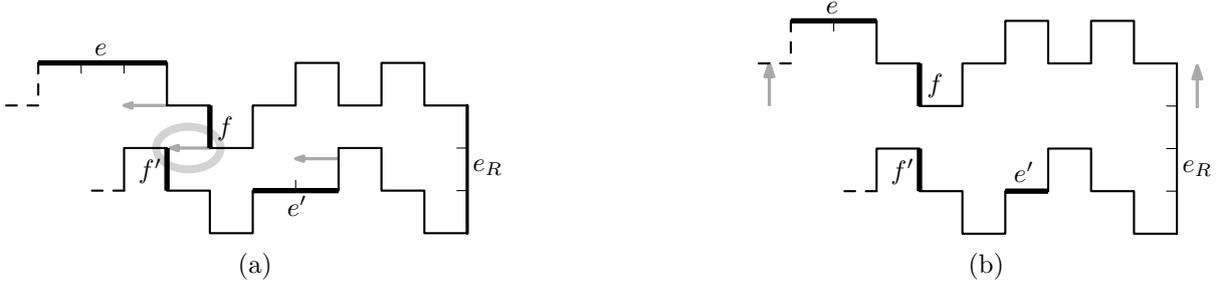

	\centering
	\subcaptionbox{\label{fig:mono-hori-edge-2}}{\includegraphics[page=2]{mono-hori-edge}}
	\hfill
	\subcaptionbox{\label{fig:mono-hori-edge-3}}{\includegraphics[page=3]{mono-hori-edge}}
	\caption{Steps two and three of transforming~$P$ into a canonical form. 
		We decrease 
		the length of~$e$ and~${e'}$ (bold)
		by increasing~${\len{e_L}}$ and~${\len{e_R}}$. Stretching~$e_l$ and~$e_R$ prevents the crossing of~$f$ and~${f'}$ (bold).}
\end{figure}

Assume that~$P$ is a minimum-perimeter canonical polygon 
that realizes~$S$. Assume further that~${\rnum{T}\geq \rnum{B}}$ holds. 
Let~$\peri{P}$ denote the perimeter of~$P$. 
By Conditions~\ref{pericanon:vert}--\ref{pericanon:hor}, every edge in~$T$ is of
unit length, so the length of~$T$ is~${2\rnum{T}+1}$. 
This property implies that the width of~$B$ should be~${\rnum{T}+1}$.
By Condition~\ref{pericanon:vert}, the length
of the vertical edges in~$B$ is~$\rnum{B}$, so the total length of~$B$
is~${\rnum{T}+\rnum{B}+1}$. Thus, we can observe the following property.
\begin{lemma}
	\label{lem:peri-equation} 
	Given an~$x$-monotone angle sequence~$S$, 
	there is a canonical minimum-perimeter polygon~$P$ realizing~$S$
	with~${\rnum{T}\geq \rnum{B}}$ such that~${\peri{P} = 3\rnum{T}+\rnum{B}+2+\len{e_L}+\len{e_R}}$ holds.
\end{lemma}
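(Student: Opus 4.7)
My plan is to apply Lemma~\ref{lem:mono-unit-length} so that attention can be restricted to a perimeter-canonical minimum-perimeter polygon $P$ realizing $S$ with $r(T)\geq r(B)$ (using the standing WLOG of this section), and then simply compute the lengths of $T$, $B$, $e_L$, and $e_R$ and add them up.

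The first step is to count the edges of the upper chain $T$. Since both endpoints of $T$ are the top endpoints of the vertical edges $e_L$ and $e_R$, and $P$ is rectilinear, $T$ begins and ends with a horizontal edge and alternates horizontal and vertical edges. Walking $T$ from left to right in ccw order, one checks directly that the reflex vertices of $T$ are precisely the ``lower'' endpoints of the vertical edges of $T$ (the bottom endpoint of each upward-traversed vertical edge, and the top endpoint of each downward-traversed one), while every other vertex of $T$ is convex. Consequently, $T$ has exactly $r(T)$ vertical edges and hence $r(T)+1$ horizontal edges.

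By \ref{pericanon:vert} and \ref{pericanon:hor} every edge of $T$ has unit length, so the total length of $T$ equals $2r(T)+1$. Because $T$ is $x$-monotone, the horizontal edges of $T$ together span the full horizontal extent of $P$, so the width of $P$ equals $r(T)+1$. Applying the same edge count to $B$, the lower chain has $r(B)$ vertical edges, each of unit length by \ref{pericanon:vert}, and $r(B)+1$ horizontal edges whose total length must again equal the width $r(T)+1$ of $P$. Hence $B$ has total length $r(T)+r(B)+1$.

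Finally I would add up the four contributions:
\[
  \peri(P) = |T|+|B|+|e_L|+|e_R| = (2r(T)+1)+(r(T)+r(B)+1)+|e_L|+|e_R|,
\]
which simplifies to $3r(T)+r(B)+2+|e_L|+|e_R|$, as required. The only step of substance is the bijection between reflex vertices of $T$ and vertical edges of $T$, which rests on the fact that the two endpoints of $T$ are convex corners adjacent to the vertical extreme edges; once that is in place, everything else is an arithmetic consequence of the canonical form.
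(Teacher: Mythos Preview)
Your argument is correct and is essentially the same computation the paper carries out in the paragraph immediately preceding the lemma: with \ref{pericanon:vert}--\ref{pericanon:hor} in force, $T$ consists of $2r(T)+1$ unit edges, the common width forces the horizontal part of $B$ to have total length $r(T)+1$, and together with the $r(B)$ unit vertical edges of $B$ and the two extreme edges this sums to the stated formula. One small slip worth fixing: the reflex endpoint of every vertical edge of $T$ is its geometrically \emph{bottom} endpoint regardless of traversal direction, so your parenthetical mis-identifies the downward case---but the bijection you actually use (vertical edges of $T$ $\leftrightarrow$ reflex vertices of $T$) is unaffected.
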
%\qed

The first three terms of~$\peri{P}$ in Lemma~\ref{lem:peri-equation} 
are constant, so we need to minimize the sum of the last two 
terms,~${\len{e_L}}$ and~${\len{e_R}}$, to get a minimum perimeter. 
However, once one of them is fixed, the other is automatically determined 
by the fact that all vertical edges in~$T$ and~$B$ are unit-length segments.
Even more, minimizing one of them is equivalent to 
minimizing their sum, consequently minimizing the perimeter.
We call the length of the left vertical extreme edge of a polygon 
the \emph{height} of the polygon.

\subsection{The \texorpdfstring{${xy}$}{xy}-Monotone Case}
Let~$P$ be a minimum-perimeter canonical~$xy$-monotone polygon  
that realizes an~$xy$-mo\-no\-tone angle sequence~$S$ of length~$n$. 
As before, we assume that~${\rnum{T}\ge \rnum{B}}$ holds. 
When~${n = 4}$, that is, the number~$r$ of reflex vertices is~$0$,
then a unit square~$P$ achieves 
the minimum perimeter. Therefore, we assume in the following that we have~${r > 0}$.
Recall that the boundary of~$P$ consists of four stairs,~${\TR, \TL, \BL}$, and~$\BR$. 
Let~${(r_1, r_2, r_3, r_4)}$ be a quadruple of the numbers of 
reflex vertices of~${\TR, \TL, \BL}$, and~$\BR$, respectively. 
Then~${r = r_1+r_2+r_3+r_4}$, where~${r_i \geq 0}$ for each~$i$ with~${1 \le i \le 4}$. 
Again, we define~$L$ as the chain consisting of~$\TL$,~$e_L$ and~$\BL$ and~$R$ as the chain consisting of~$\BR$,~$e_R$ and~$\TR$.
In~$P$, let~${w(T)}$ and~${w(B)}$ denote the widths of~$T$ and~$B$, respectively, and~${h(L)}$ and~${h(R)}$ the heights of~$L$ and~$R$, respectively. 
Hence, the perimeter of~$P$ 
is~\[{\peri{P}~=~w(T)+w(B)+h(L)+h(R)}~.\] 

Note that~${w(T)=w(B)}$ holds and, by Condition~\ref{pericanon:hor},
\[{w(T)~=~r_1+1+r_2}~.\] 
Thus,~${w(T)+w(B)=2(r_1+r_2)+2}$. 
Similarly,~${h(L)=h(R)}$, and, by Condition~\ref{pericanon:vert},
\[{h(L)~=~r_2+\len{e_L}+r_3} \textrm{\quad and \quad} {h(R)~=~r_4+\len{e_R}+r_1}~.\] 
Thus, if~${\len{e_L}=1}$, then
\[{h(L)+h(R)~=~2(r_2+r_3)+2}~,\] 
and, if~${\len{e_R}=1}$, then
\[{h(L)+h(R)~=~2(r_1+r_4)+2}~.\]
Furthermore observe
that~${\len{e_L}=1}$ implies
\[{r_2+r_3 ~\ge~ r_1+r_4}~,\] and that~${\len{e_R}=1}$ implies
\[{r_2+r_3 ~\le~ r_1+r_4}~.\]
Hence, if~${\len{e_L}=1}$ or~${\len{e_R}=1}$,
then
\[{h(L)+h(R)~=~r+|r_2 + r_3- r_1- r_4|}\] 
and eventually
\begin{align}\label{eq:per-when-height-zero}
\peri{P}=3(r_1+r_2)+(r_3+r_4)+|r_2 + r_3- r_1- r_4|+4 ~.
\end{align}

Now, consider the remaining case when~${\len{e_L}\ge 2}$ and~${\len{e_R}\ge 2}$. We will observe that this case can occur only if~${(r_1, r_2, r_3, r_4)}$ is~${(r_1,0,r_1,0)}$ or~${(0, r_2, 0,r_2)}$. We will also observe that then~${\len{e_L}=\len{e_R}=2}$. 
Hence, we obtain that~${\peri{P}=2r_1+6}$ for case~${(r_1,0,r_1,0)}$, and~${\peri{P}=2r_2+6}$ for case~${(0, r_2, 0,r_2)}$. For all other cases, Equation~\ref{eq:per-when-height-zero} holds.

To make these observations, we first apply the same contraction step as depicted in Fig.~\ref{fig:bbarea-canonical2} of 
Lemma~\ref{lem:transform}. That is, 
we contract all horizontal segments of~$\BL$ to length~$1$ by moving all their right endpoints as far as possible to the left,
and we contract all horizontal segments of~$\BR$ to length~$1$ by moving all their left endpoints as far as possible to the right. 
By this, all edges of~$B$ except the bottom extreme edge have length~$1$, and 
the perimeter does not change. 
Next, note that~$T$ and~$B$ have vertical distance~$1$ to each other. Otherwise, we could move~$B$ at least one unit to the top by simultaneously shrinking~$e_L$ and~$e_B$, and thus shrinking the perimeter of~$P$, a contradiction to the minimality of~$\peri{P}$.
As~$T$ consists only of unit-length segments (Conditions~\ref{pericanon:vert}--\ref{pericanon:hor}), there is a vertex~$p$ in~$T$ having distance~$1$ to~$B$. 

First assume that~$p$ belongs to~$\TR$. We choose the rightmost such~$p$.
If~$p$ were a convex vertex, then it would be the top endpoint of~$e_R$, and, hence, we would have~${\len{e_R}=1}$; a contradiction to~${\len{e_R}\ge 2}$.
Thus,~$p$ is a reflex vertex and therefore a left endpoint of a horizontal edge~${pp'}$. Hence, the right endpoint~${p'}$ of~${pp'}$ is convex. Let~$e$ be the edge in~$B$ below~${pp'}$, that is, the edge that crosses the same grid column as~${pp'}$. 
Observe that the distance between~${pp'}$ and~$e$ is at least~$2$. If it were~$1$, then the vertical edge~${p'p''}$ incident to~${p'}$ would connect to~$e$ (recall that~${p'}$ is convex). Hence,~${pp'}$ and~$e$ would be incident to~${e_R=p'p''}$, and again we would have~${\len{e_R}=1}$; a contradiction. 
Thus, the distance between~$p$ and~$e$ is at least~$2$.
Let~$q$ be the point of~$B$ directly one unit below~$p$. Then~$e$ lies at least one unit below~$q$.
Hence,~$q$ has to connect to~$e$ via a vertical edge, and, consequently,~$q$ has to be a reflex vertex and belong to~$\BL$. 
By Condition~\ref{pericanon:vert}, the vertical edge connecting~$q$ and~$e$ has length~$1$, hence, the distance between~${pp'}$ and~$e$ is exactly~$2$.
But now, either the bottom endpoint~${p''}$ of~${p'p''}$ has distance~$1$ to~$B$, or~${p''}$ lies on~$B$, that is,~${p'p''=e_R}$. The former case contradicts our assumption that~$p$ is the rightmost vertex of~$T$ having distance~$1$ to~$B$.  Thus, the latter case holds and~${pp'}$ and~$e$ are incident to~$e_R$. 
Hence,~${\len{e_R}=2}$,~$e$ is the bottom extreme edge and has length~${\len{e}=1}$, and~$\BR$ is empty, that is,~${r_4=0}$.  
Thus, 
all horizontal edges in~$B$ have unit length. 
This property allows us to use the same argument as above to show~${r_2=0}$ and~${\len{e_L}=2}$.
Given~${r_1+1=w(T)=w(B)=r_3+1}$, we get~${r_1=r_3}$.

Finally, assume that~$p$ belongs to~$\TL$. Then we can show in a similar way as above that we are in case~${(0, r_2, 0,r_2)}$, and, again,~${\len{e_L}=\len{e_R}=2}$.
Thus, our observation follows.

\begin{theorem}
	Given an~$xy$-monotone angle sequence~$S$ of length~$n$, we can find a 
	polygon~$P$ that realizes~$S$ and minimizes its perimeter in~${\bigOh(n)}$ time.
	Furthermore, if the lengths of the stair sequences 
	are given as above as a tuple~$\ell$ where~${\ell={(r_1,r_2,r_3,r_4)}}$, then~$\peri{P}$ can be expressed as:
	\[
	\peri{P} ~=~ 
	\begin{cases}
	4r_1+6 & \text{if } \ell~=~(r_1,0,r_1,0),\\
	4r_2+6 & \text{if } \ell~=~(0,r_2,0,r_2),\\
	3(r_1+r_2)+(r_3+r_4)+|r_3-(r_1-r_2+r_4)|+4 & \text{otherwise.}
	\end{cases}	\]
\end{theorem}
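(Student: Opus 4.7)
The plan is to assemble the structural results developed in the paragraphs immediately preceding the theorem. By Lemma~\ref{lem:mono-unit-length} it suffices to consider perimeter-canonical polygons, and after applying the contraction step depicted in Figure~\ref{fig:bbarea-canonical2} of Lemma~\ref{lem:transform}, I may further assume that every edge of $B$ except the bottom extreme edge has unit length. Throughout I would work under the assumption $r(T) \ge r(B)$, which can be enforced in $O(n)$ time by rotating $S$ by $180^\circ$ and relabeling the stair indices if necessary.

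For the formula, I would use the identities derived just above the theorem, namely $w(T)+w(B) = 2(r_1+r_2)+2$ and $h(L)+h(R) = (r_1+r_2+r_3+r_4)+|e_L|+|e_R|$, subject to the closing constraint $|e_L|-|e_R| = (r_1+r_4)-(r_2+r_3)$. Minimizing $\peri(P)$ thus reduces to minimizing $|e_L|+|e_R|$ under this constraint together with polygon simplicity. Whenever one of $|e_L|,|e_R|$ can be chosen to equal $1$, the minimum of $|e_L|+|e_R|$ is $|r_1+r_4-r_2-r_3|+2$, and substitution yields the third case of the formula (after rewriting $|r_1+r_4-r_2-r_3|$ as $|r_3-(r_1-r_2+r_4)|$). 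The stair-meets-stair argument in the paragraph directly preceding the theorem isolates the only obstructions to a unit extreme edge as the two quadruples $(r_1,0,r_1,0)$ and $(0,r_2,0,r_2)$: there a unit $e_L$ or $e_R$ would force $\TR$ and $\BL$ (respectively, $\TL$ and $\BR$) to share a vertex, so one must set $|e_L|=|e_R|=2$, which gives $4r_1+6$ and $4r_2+6$ by direct substitution into the perimeter formula.

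The algorithm then scans $S$ once in $O(n)$ time to compute $(r_1,r_2,r_3,r_4)$, decides in $O(1)$ which of the three cases applies and what values of $|e_L|$ and $|e_R|$ are optimal, and finally emits an optimum polygon explicitly by drawing $T$ with unit horizontal and vertical segments as dictated by the angle sequence, drawing the $\BL$ and $\BR$ stairs with unit segments, and closing the polygon with vertical extreme edges of the chosen lengths. Each step is linear in $n$. The main obstacle is the simplicity check that pinpoints the two degenerate quadruples, but my plan is to invoke the case analysis already carried out just before the theorem statement rather than reprove it, since that analysis establishes exactly when the unit-extreme configuration fails to produce a simple polygon.
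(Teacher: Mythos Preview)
Your proposal is correct and follows essentially the same line as the paper: reduce to perimeter-canonical polygons via Lemma~\ref{lem:mono-unit-length}, express $\peri(P)$ as $2(r_1+r_2)+2+r+|e_L|+|e_R|$ subject to the height-matching constraint $|e_L|-|e_R|=(r_1+r_4)-(r_2+r_3)$, minimize by setting one extreme edge to unit length, and invoke the preceding case analysis to isolate the two degenerate quadruples where simplicity forces $|e_L|=|e_R|=2$. Your phrasing via the explicit linear constraint on $|e_L|-|e_R|$ is slightly more compact than the paper's case split on which of $|e_L|,|e_R|$ equals~$1$, but the content is the same.
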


\subsection{The \texorpdfstring{${x}$}{x}-Monotone Case}

A minimum height polygon~$P$ that realizes~$S$ can be computed
in~${\bigOh(n^2)}$ time using dynamic programming. Recall that a perimeter-canonical polygon of minimum height is a polygon of minimum perimeter.

From right to left, let~${t_1,\ldots, t_{\rnum{T}}}$ be the horizontal
edges in~$T$ and~$b_1$,~$b_2$,~${\ldots ,}$~${b_{\rnum{B}}}$ be the horizontal edges
in~$B$.
Recall our assumption~${\rnum{T}\geq \rnum{B}}$. 
For~${i \geq j \geq 1}$, let~${A[i, j]}$ be
the minimum height of the subpolygon formed 
with the first~$i$ horizontal edges from~$T$ and 
the first~$j$ horizontal edges from~$B$.
Note that the leftmost vertical edge of the subpolygon whose minimum
height is stored in~${A[i,j]}$ joins the left endpoints of~$t_i$ and~$b_j$. 
To compute~${A[i,j]}$, we attach edges~$t_i$ and~$b_j$
to the upper and lower chains of the subpolygon constructed so far.
Since~$t_i$ has unit length,  
either~$t_i$ and~$b_j$ are attached to the subpolygon 
with height of~${A[i-1, j-1]}$ or just~$t_i$ is attached to
the subpolygon with height of~${A[i-1, j]}$. Figure~\ref{fig:DP-polygon-perimeter} shows that 
there are four cases, Cases~\subref{fig:DP-polygon-perimeter-1}--\subref{fig:DP-polygon-perimeter-4}, for the first attachment 
and two cases, Cases~\subref{fig:DP-polygon-perimeter-5}--\subref{fig:DP-polygon-perimeter-6}, for the second attachment, 
according to the turns formed at the attachments. 

\begin{figure}[tb]
	\centering
	\subcaptionbox{\label{fig:DP-polygon-perimeter-1}}{\includegraphics[page=1]{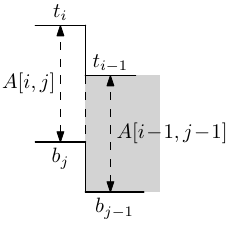}}
	\hfill
	\subcaptionbox{\label{fig:DP-polygon-perimeter-2}}{\includegraphics[page=2]{DP-polygon-perimeter-wide}}
	\hfill
	\subcaptionbox{\label{fig:DP-polygon-perimeter-3}}{\includegraphics[page=3]{DP-polygon-perimeter-wide}}
	\\
	\subcaptionbox{\label{fig:DP-polygon-perimeter-4}}{\includegraphics[page=4]{DP-polygon-perimeter-wide}}
	\hfill
	\subcaptionbox{\label{fig:DP-polygon-perimeter-5}}{\includegraphics[page=5]{DP-polygon-perimeter-wide}}
	\hfill
	\subcaptionbox{\label{fig:DP-polygon-perimeter-6}}{\includegraphics[page=6]{DP-polygon-perimeter-wide}}
	\caption{Six situations when~$t_i$ and~$b_j$ are considered to fill~${A[i, j]}$.}
	\label{fig:DP-polygon-perimeter}
\end{figure}

Let~$u$ and~$v$ be the left end vertex of~${t_{i-1}}$ and the right end
vertex of~$t_i$, respectively. Let~${u'}$ and~${v'}$ be the right end vertex
of~$b_{j}$ and the left end vertex of~${b_{j-1}}$, respectively. 
Notice that both vertical edges~${(u,v)}$ and~${(u', v')}$ have unit length. 
As an example, let us explain how to calculate~${A[i, j]}$ 
when~${uv = \LS\RS}$ and~${u'v'=\LS\RS}$, which corresponds to Fig.~\ref{fig:DP-polygon-perimeter-2} and Fig.~\ref{fig:DP-polygon-perimeter-6}. 
We set~${A[i, j]}$~to the minimum height of the two possible attachments of Cases~\subref{fig:DP-polygon-perimeter-2} and~\subref{fig:DP-polygon-perimeter-6}.
For now, consider the height for Case~\subref{fig:DP-polygon-perimeter-2}. 
If~${A[i-1,j-1]>1}$, then~$t_i$ and~$b_j$ are attached to 
the subpolygon as illustrated in Fig.~\ref{fig:DP-polygon-perimeter-2}. Since edges~${(u,v)}$ and~${(u', v')}$ have unit length,~${A[i,j] = A[i-1,j-1]}$. In the other case, if~${A[i-1,j-1] = 1}$, 
then we can move the upper chain of the subpolygon one unit upward 
without intersection so that~$t_i$ and~$b_j$ are safely attached to the
subpolygon with~${A[i,j] = 2}$. 
Note that this is the smallest possible value for~${A[i,j]}$ given~${uv=\LS\RS}$ and~${u'v'=\LS\RS}$.
Thus,~${A[i,j] = \maxStyle{A[i-1,j-1], 2}}$.
The height for Case~\subref{fig:DP-polygon-perimeter-6} should be at least~$1$, so it is expressed as~${\maxStyle{A[i-1,j]-1, 1}}$. 
Therefore, \[A[i,j] = \minStyle{\maxStyle{A[i-1,j-1],2},\maxStyle{A[i-1,j]-1,1}}~.\] 
For the other turns at~$uv$ and~${u'v'}$, 
we can similarly define the equations as follows:
{\[A[i,j]~=\left\{ 
	\begin{array}{ll}
	\text{undefined} &  \text{if~${i = 0}$,~${j = 0}$ or~${i < j}$},\\
	1 &  \text{if~${i=1}$,~${j = 1}$},\\
	A[i-1,j]+1 &  \text{if~${uv=\RS\LS}$,~${j = 1}$},\\
	\maxStyle{A[i-1,j]-1, 1} &  \text{if~${uv=\LS\RS}$,~${j = 1}$},\\
	\minStyle{\maxStyle{ A[i-1,j-1], 2}, A[i-1,j]+1} & \text{if~${uv=\RS\LS}$,~${u'v'=\RS\LS}$},\\
	\minStyle{A[i-1,j-1]+2, A[i-1,j]+1} & \text{if~${uv=\RS\LS}$,~${u'v'=\LS\RS}$},\\
	\min\left\{\maxStyle{ A[i-1,j-1], 2},\right.\\ 
	\phantom{\min\{}\left.\!\maxStyle{A[i-1,j]-1,1}\right\} &  \text{if~${uv=\LS\RS}$,~${u'v'=\LS\RS}$},\\
	\min\left\{\maxStyle{ A[i-1,j-1]-2, 1},\right.\\%
	\phantom{\min\{}\left.\!\maxStyle{A[i-1,j]-1,1}\right.\} &  \text{if~${uv=\LS\RS}$,~${u'v'=\RS\LS}$}~.
	\end{array} \right.\]}

Evaluating each entry takes constant time, so the total time to
fill~$A$ is~${\bigOh(n^2)}$.  Using~$A$, a minimum-perimeter polygon can be
reconstructed within the same time bound.

\begin{theorem}
	Given an~$x$-monotone angle sequence~$S$ of length~$n$, we can find
	a polygon~$P$ that realizes~$S$ and minimizes its perimeter in~${\bigOh(n^2)}$ time.
\end{theorem}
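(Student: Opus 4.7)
My plan is to reduce perimeter minimization to height minimization and then solve the height problem with a two-dimensional dynamic program. By Lemma~\ref{lem:mono-unit-length} I may assume the polygon is perimeter-canonical, and by Lemma~\ref{lem:peri-equation} the perimeter differs from $|e_L|+|e_R|$ only by a constant determined by the angle sequence. Since every vertical edge of $T\cup B$ has unit length in a canonical polygon, once $|e_L|$ is fixed the value of $|e_R|$ is forced, so minimizing the perimeter reduces to computing the smallest achievable height of a canonical realization.

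I would then set up the table $A[i,j]$ for $i\ge j\ge 1$, where $A[i,j]$ is the minimum height of the canonical subpolygon built from the rightmost $i$ horizontal edges $t_1,\dots,t_i$ of $T$ and the rightmost $j$ horizontal edges $b_1,\dots,b_j$ of $B$, closed off by a left vertical edge joining the left endpoints of $t_i$ and $b_j$. The base case is $A[1,1]=1$ (a unit square), and $A[i,j]$ is undefined for $i<j$ since $r(T)\ge r(B)$. The transition builds the next cell by attaching either $t_i$ alone to the subpolygon counted by $A[i-1,j]$, or attaching $t_i$ and $b_j$ together to the subpolygon counted by $A[i-1,j-1]$.

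The heart of the argument is the case analysis recorded in Figure~\ref{fig:DP-polygon-perimeter}: the effect of attaching a new edge depends entirely on the turn types of $uv$ and $u'v'$, since a reflex corner forces the new leftmost vertical edge to grow by one unit whereas a convex corner forces it to shrink by one (but never below one), and when both $t_i$ and $b_j$ are attached simultaneously we must additionally ensure that the upper and lower chains stay separated by at least one unit, giving the $\max(\cdot,2)$ clamps in the $\RS\LS/\RS\LS$ and $\LS\RS/\LS\RS$ rows. I would justify each of the seven recurrence lines by a short picture-level argument: the shift used in the proof of Lemma~\ref{lem:mono-unit-length} shows that a taller height is always realizable from any shorter one of the same parity requirement, so taking the pointwise minimum over the two attachment options is correct, and the clamps correctly encode the no-self-intersection constraint.

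The main obstacle I anticipate is the bookkeeping in the case analysis, in particular checking that the clamps give exactly the minimum feasible height in every turn combination. Once this is verified, correctness follows by induction on $i+j$, and the answer is $\min_{1\le h}A[r(T),r(B)]$, read off as a single table entry (the outer loop over $h$ becomes unnecessary because $A[i,j]$ is already defined as the minimum). Each of the $O(n^2)$ entries is computed in $O(1)$ time, yielding the claimed $O(n^2)$ running time; the polygon itself is reconstructed in the same time by backtracking through the recurrence.
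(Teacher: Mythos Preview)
Your proposal is correct and follows essentially the same approach as the paper: reduce to height minimization via Lemmas~\ref{lem:mono-unit-length} and~\ref{lem:peri-equation}, then fill the table $A[i,j]$ with the same transitions (attach $t_i$ alone from $A[i-1,j]$, or $t_i$ and $b_j$ together from $A[i-1,j-1]$), using the turn-type case analysis of Figure~\ref{fig:DP-polygon-perimeter} and the clamps that enforce simplicity. The only cosmetic slip is the vestigial ``$\min_{1\le h}$'' in your final answer, which you already note is unnecessary since $A[i,j]$ stores the minimum height directly.
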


\section{Conclusion}

In this paper, we considered the problem of drawing a polygon satisfying a given angle sequence on a rectilinear grid such that its area, its bounding box, or its perimeter is minimized.
We have seen several efficient algorithms for~$x$-monotone and~$xy$-monotone variants of the problem
and have shown that the general variant is \NP-hard for all three objectives.
These results raise the question about the approximability of the
general problem.  Step by step, one could consider more and more
complicated objects than polygons.  Eventually, one would arrive at
the following general question: Given an orthogonal representation of
a graph that specifies an angle sequence for each edge and an angle
for each vertex, draw the graph without crossings on an integer grid
such that the orthogonal representation is realized and the bounding
box or the perimeter is minimized.
For these optimization versions of the problem, no
  approximation results are known---apart from a subpolynomial
  inapproximability bound for the non-planar case~\cite{bannister12}.
  It would therefore be interesting to study approximability in this
  context.  Are the optimization versions of the problem indeed
  \APX-hard as originally claimed by
  Patrignani~\cite{WADS99Titto,PrivComTitto}?  Or do they admit
  approximation schemes?

\subsection*{Acknowledgements}

We thank Titto Patrignani for inspiring
Theorem~\ref{thm:polylineNPhard} and for informing us about the status
of the compaction problem.

\end{document}